\documentclass[%
 reprint,
 table,
superscriptaddress,
nofootinbib,
 amsmath,
 amssymb,
 aps,
 prapplied,
floatfix,
]{revtex4-2}

\usepackage{graphicx}
\usepackage{lipsum}
\usepackage{hhline}
\usepackage{booktabs}
\makeatletter
\newcommand*\bigcdot{\mathpalette\bigcdot@{.5}}
\newcommand*\bigcdot@[2]{\mathbin{\vcenter{\hbox{\scalebox{#2}{$\m@th#1\bullet$}}}}}
\makeatother

\usepackage{dcolumn}
\usepackage{bm}

\usepackage[T1]{fontenc}
\usepackage{amssymb}
\usepackage{mathtools}
\usepackage{MnSymbol}
\usepackage{bbold}
\usepackage{dsfont}
\usepackage[caption=false]{subfig}
\usepackage[table,x11names,dvipsnames,table]{xcolor}
\usepackage{tikz}
\usetikzlibrary {shapes.geometric}

\usepackage{amsthm}
\usepackage{amsmath}
\usepackage{multirow}
\usepackage{mathrsfs}
\usepackage{dsfont}
\usepackage{physics}
\usepackage[thinc]{esdiff}
\usepackage{nicefrac}
\usepackage{xfrac}
\usepackage{algorithm}
\usepackage{algpseudocode}
\usepackage[pdftex, pdftitle={Article}, pdfauthor={Author}]{hyperref}
\usepackage[capitalise]{cleveref}
\creflabelformat{equation}{#2\textup{#1}#3}

\newtheorem{theorem}{Theorem}

\newtheorem{corollary}{Corollary}

\begin{document}
\preprint{APS/123-QED}

\title{Fourier analysis of quantum neural network with non-linear data embedding}

\author{Haiyue Kang}
 \email{haiyuek@student.unimelb.edu.au}
 \affiliation{School of Physics, University of Melbourne, VIC, Parkville, 3010, Australia}


\author{Martin Sevior}
 \affiliation{School of Physics, University of Melbourne, VIC, Parkville, 3010, Australia}

\author{Muhammad Usman}
 \affiliation{School of Physics, University of Melbourne, VIC, Parkville, 3010, Australia}
 \affiliation{Quantum Systems, Data61, CSIRO, Australia}

\begin{abstract}
Fourier analysis has become a crucial tool for understanding the expressivity of Variational Quantum Circuit (VQC) models, as well as an important indicator of barren plateaus (BP). While existing literature has only studied angle-embedded VQCs in a noiseless environment, here we develop the Fourier analysis of VQCs with non-linear data embedding, with particular focus on amplitude embedding, which provides a naturally compact encoding scheme. We first investigate a subtle difference in the domain of input features within amplitude embedding that leads to a distinct expressivity of the zero-frequency Fourier coefficient. By assuming that the ensemble of unitaries generated from the parameter space forms at least a 2-design with respect to the unitary group, we derive, via Weingarten calculus, that the mean of the Fourier coefficients is concentrated at zero, and the variance scales at an exponentially decaying order with respect to the multi-dimensional frequency magnitude. When a noise channel with unitary Kraus operators and probabilities $\{p_k\}$ is taken into account, the variance is further suppressed by a factor $\left(\sum_kp_k^2\right)^{Q}<1$, where $Q$ is the number of channel instances applied. Furthermore, we demonstrate and validate the analytical results through simulations, both noiseless and noisy, including a case where target functions are decomposed into non-integer frequencies, highlighting the practical utility of the approach. Our results establish a rigorous Fourier framework for amplitude-encoded VQCs, offering both theoretical guarantees on expressivity, hence trainability scaling in the frequency domain, as well as practical simulations for deployment on noisy quantum devices.

\end{abstract}

\maketitle
\section{Introduction}\label{sec:introduction}
Quantum Machine Learning (QML) \cite{QML,QML2,VQA,VQE,QAOA,QPCA,QNN} has attracted considerable attention over the past years due to the anticipated computational advantage over the classical counterparts, particularly in problems that are inherently quantum \cite{VQE,Accelerated_VQE,VQE_variants_summary,VQSE,QAOA,quantum_chemistry}, or optimizations that classically scale poorly \cite{quantum_finance}. However,  it is becoming increasingly hard to balance the issue of trainability, particularly barren plateaus (BP) \cite{barren_plateaus,barren_plateaus_summary,noise_induced_barren_plateaus}, and classical non-simulability of QML models \cite{QCNN_classically_simulable, classical_simulation}. Hence, a great amount of effort has been invested to tackle this problem \cite{local_cost_function_1, local_cost_function_2, pre_training_avoids_barren_plateaus, layer_by_layer_1, layer_by_layer_2,thermal_helps_learning,thermal_helps_learning2,reflection_invariant_qml,rotational_invariant_qml,non_unitary_qml,LCE_method}. Meanwhile, Fourier/Spectral analysis of Variational Quantum Circuit (VQC) models \cite{Fourier_analysis_Schuld_Maria} opens yet another avenue to understand or explore the expressivity of a QML model quantitatively, which could lead to the design of models with improved trainability. This is also an important indicator of BP that decays exponentially with respect to the number of qubits \cite{Fourier_analysis_BP}. For example, through Fourier analysis of angle-embedded inputs, it was found that the possible frequency spectrum of the output with respect to the input variable constitutes the sums of possible eigenvalues from the encoding gate Hamiltonian \cite{Fourier_analysis_Schuld_Maria}, hence directly induce the corresponding Fourier coefficients \cite{Fourier_analysis_binary_tree}. In Ref. \cite{Fourier_analysis_expressivity_as_redundancies}, it is demonstrated that the expressivity of each Fourier coefficient is directly proportional to its redundancy, which is the number of combinations of eigenvalues that can add up to its corresponding frequency. In Ref. \cite{Fourier_analysis_expressivity_as_FIM}, a quantitative relationship between the scale of a model and whether it under/overfits the data was derived through the Fourier series decomposition and quantum Fisher information matrix (QFIM). Moreover, in Ref. \cite{Fourier_analysis_correlation_metric}, a metric was invented that better captures the expressivity of a model by computing the correlations between the Fourier coefficients across different frequencies. 

Despite these insightful analyses, to the best of our knowledge, all literature about Fourier analysis assumes the model uses angle embedding and the data re-uploading protocol of encoding to extend the available spectrum. Indeed, angle embedding has a unique advantage in the circuit depth, which is at most linear with respect to the dimensions of the features to encode \cite{quantum_embedding_summary_Schuld}, yet the number of qubits also scales linearly with the dimensions. This exponential gap in the scaling of the qubits becomes a serious problem when the number of features grows extremely large, which is actually very common, such as images, text, audio, video, and multi-mode features. In particular, for large language models (LLMs) nowadays, the number of tokens that are handled within their context window can easily go beyond $10^6$. At the same time, each of the tokens is also a high-dimensional feature \cite{google_gemma_3_tech_report,deepseekv4}. On the other hand, the number of qubits required only scales in $O(\log(N))$ for $N$ features in amplitude embedding. As a result, other non-linear embedding protocols, particularly the amplitude-embedding, are still indispensable for VQC algorithms, and it is now urgent to explore their characteristics on a VQC via Fourier analysis. 

In addition, only very limited investigation of the Fourier analysis on VQC models has taken into account the presence of noise channels \cite{fourier_coeff_theta_suppresed}. Modeling and analysis of a noisy quantum process is indeed more difficult; however, this is inevitable when algorithms are implemented on real quantum computers. Even with potentially quantum error-corrected (QEC) codes in the future, small logical errors will always be present \cite{surface_codes,qec_lattice_surgery}. Yet, knowing how much error can be tolerated is also an important step to reduce the huge overhead of QEC \cite{haiyue_partial_QEC}. Therefore, it is essential to demonstrate and contrast the behaviours of the Fourier coefficients with or without noise in simulations.

As proposed above, our work develops and applies Fourier analysis to VQCs with amplitude feature mapping, and examines how the corresponding Fourier coefficients scale with frequency and the fitted functions. We stress that, although this paper does highlight some advantages of amplitude embedding over angle embedding, the main focus is to give a quantitative analysis of expressivity and trainability with breakdown details in the frequency domain. The structure of this paper is described as follows: In \Cref{sec: Preliminaries}, we introduce the background theories about general variational quantum circuit (VQC) models in \Cref{subsec: VQC model} and some preliminary theories of Fourier analysis in \Cref{subsec: Fourier analysis}, both for the conventional angle embedding and the amplitude embedding of our interest. In \Cref{subsec: vanishing zero frequency}, we demonstrate how a subtle difference in the domain of encoding values within amplitude embedding results in a vanishing coefficient at zero frequency. We also discuss the mean, variance, and therefore expressivity of the Fourier coefficients as a function of their frequencies, observables, and feature dimensions, which are derived analytically, and simulated in \Cref{subsec: moments of fourier coeffs,subsec: simulations} on a small scale, respectively. In the last subsection, \Cref{subsec: non periodic functions}, we discuss the expressivity of the amplitude embedding but at non-integer frequencies.

\begin{figure*}[hbtp]
    \subfloat[\label{fig:VQC_general}General workflow of training a VQC]{\:\includegraphics[width=1.6\columnwidth]{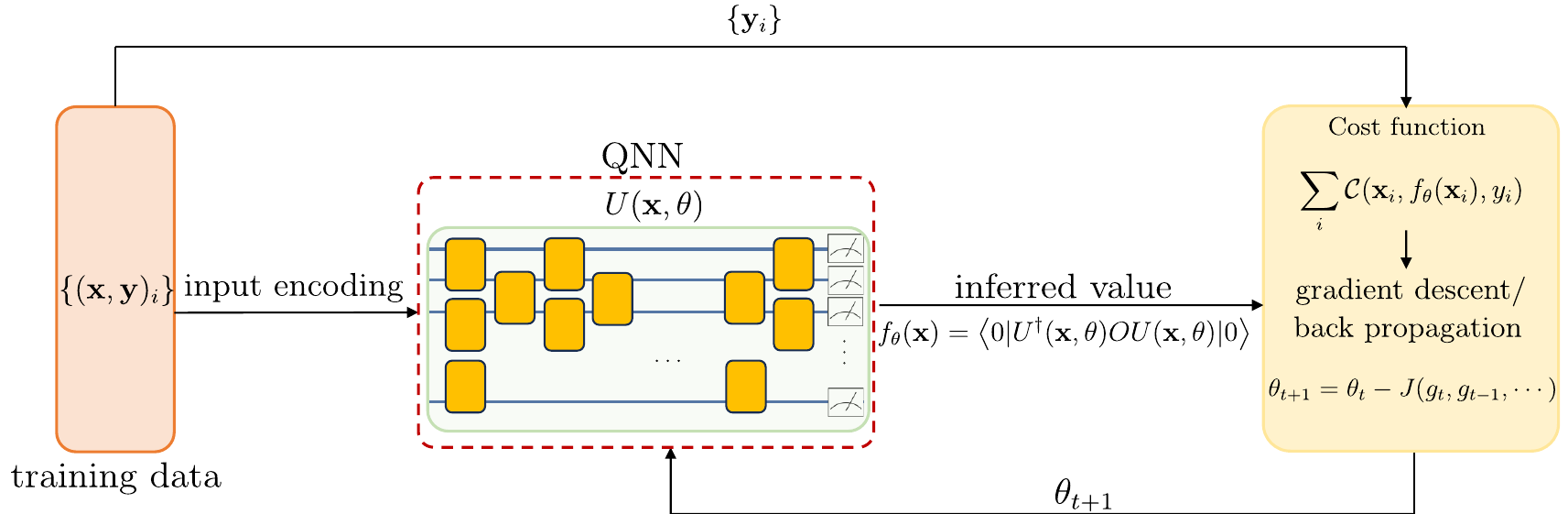}\:}\\
    \vspace{20pt}
    \centering
    \subfloat[\label{fig:angle_embedding_reupload_circ}Layout of $U(\bm{x},\bm{\theta})$ for angle-embedding]{\:\includegraphics[width=1\columnwidth]{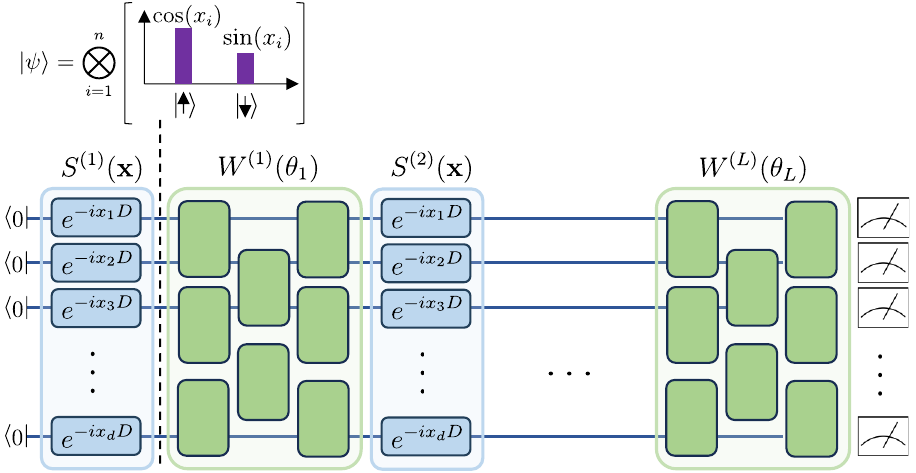}\:}
    \subfloat[\label{fig:amplitude_embedding_circ}Layout of $U(\bm{x},\bm{\theta})$ for amplitude-embedding]{\hspace{10pt}\:\includegraphics[width=1\columnwidth]{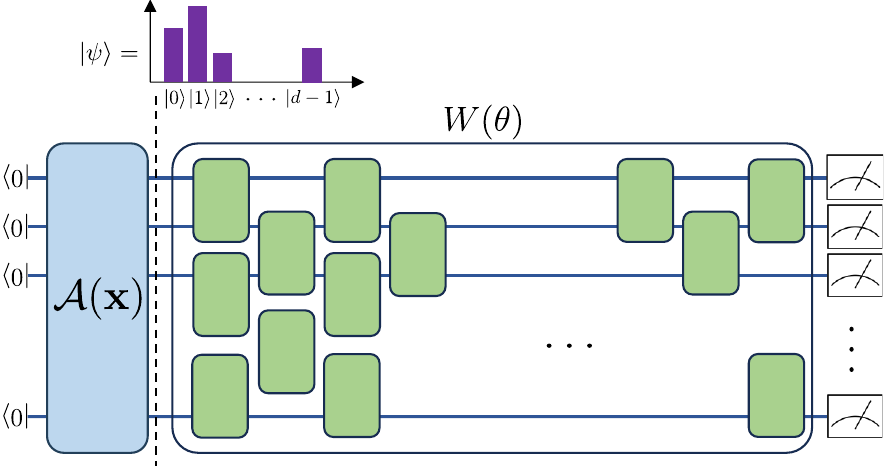}\:}
    \caption{\label{fig:VQC circuits demo}(a) The general workflow to train a quantum neural network with a variational quantum circuit. First, a training dataset $\{(\bm{x}_i,y_i)\}\subset \mathcal{X}\cross \mathcal{Y}$ is prepared, where each classical input features $\bm{x}_i$ are encoded onto the quantum state, together with the trainable variational circuits, denoted as $U(\bm{x}_i,\bm{\theta})$. The circuit is executed, and the expectation values with respect to some Hermitian observable $O$ are evaluated. After each batch, the expectation values, as the currently fitted function values $f_{\bm{\theta}}(\bm{x})$, are fed into the cost function to compare with the actual values. Finally, the parameters are optimized according to some rule based on the gradient descent method on the cost function landscape. (b) The structure of the circuit $U(\bm{x}_i,\bm{\theta})$ when input features are encoded via the angle-embedding method with data reuploading protocol, given that a fixed encoding Hamiltonian $D$ is used throughout the circuit. (c) The structure of the variational circuit $U(\bm{x}_i,\bm{\theta})$ when input features are encoded via the amplitude-embedding method. Unlike angle-embedding circuits, the encoding layers and variational layers are distinctively separated.}
\end{figure*}

\section{Preliminaries}\label{sec: Preliminaries}
\subsection{Variational quantum circuit models}\label{subsec: VQC model}
A VQC model \cite{QVC,QVC_2,VQC}, is in general, a parametrized function $f:\mathcal{X}\times\Theta\rightarrow\mathcal{Y}$ defined as the expectation value of some observable evaluated on a parametrized quantum circuit (PQC). The function $f$ is described as 
\begin{equation}\label{eq: f(x)}
    f_{\bm{\theta}}(\bm{x})=\bra{0}U(\bm{x},\bm{\theta})^{\dagger}OU(\bm{x},\bm{\theta})\ket{0}
\end{equation}
where $\mathcal{X},\mathcal{Y},\Theta$ denote the feature, target, and parameter spaces, respectively. $U(\bm{x},\bm{\theta})$ is a unitary as a function of the input feature vector $\bm{x}=(x_1,x_2,\cdots,x_d)^{T}\in\mathcal{X}$ of dimension $d$ and parameters $\bm{\theta}=(\theta_1,\theta_2,\cdots,\theta_M)^T\in\Theta$. The ultimate goal of the model is to fit an unknown, yet highly non-trivial target function $h:\mathcal{X}\rightarrow\mathcal{Y}$, such that the parameters $\bm{\theta}$ are trained towards the minimum $\bm{\theta}^*$, where
\begin{equation}
\bm{\theta}^{*}\coloneqq\underset{\bm{\theta}\in\Theta}{\arg\min}\sum\limits_{i}\mathcal{C}(\bm{x}_i,f_{\bm{\theta}}(\bm{x}_i),y_i),
\end{equation}
On the equation above, $\{(\bm{x}_i,y_i)\}\subset \mathcal{X}\cross \mathcal{Y}$ is the training dataset, $\mathcal{C}$ is the cost function that evaluates how well the model is inferring the output $f_{\bm{\theta}}(\bm{x}_i)$ by contrasting it with the correct value $y_i$. As $\bm{\theta}^*$ is minimized, $f_{\bm{\theta}^*}(\bm{x})$ approaches $h(\bm{x})$. There are many ways to iterate $\bm{\theta}_t$ from time $t$ to $t+1$ \cite{VQE_variants_summary,gradient_descent_algorithms_summary,Quantum_natural_gradient,Quantum_exact_geodesic_transport}. In general, it can be described as some variant of a gradient descent method,
\begin{equation}
    \bm{\theta}_{t+1}=\bm{\theta}_t-J\left(\bm{g}_t,\bm{g}_{t-1},\cdots,\bm{\theta}_t,\bm{\theta}_{t-1},\cdots\right),
\end{equation}
where $\bm{g}_t=\vec\nabla_{\bm{\theta}_t}\mathcal{L}$, and $J$ is some arbitrary function in terms of those gradients and parameter values along the training path. We illustrate these steps as a flowchart in \Cref{fig:VQC_general}.

\subsection{Fourier analysis on Quantum Machine Learning}\label{subsec: Fourier analysis}
To understand the behaviour of the VQC model in the frequency domain, $f$ can also be expressed as a truncated Fourier series
\begin{equation}\label{eq: Fourier series}
    f_{\bm{\theta}}(\bm{x})=\sum\limits_{\bm{\omega}\in\Omega}c_{\bm{\omega}}(\bm{\theta})e^{-i\bm{\omega}\cdot\bm{x}},
\end{equation}
where $\bm{\omega}=(\omega_1,\omega_2,\cdots,\omega_d)\in\Omega$ are the frequencies comprising the Fourier spectrum, $c_{\bm{\omega}}$ are the corresponding Fourier coefficients viewed as functions of the parameters $\bm{\theta}$. 

At a high level, irrespective of the specific embedding scheme, $U(\bm{x},\bm{\theta})$ can be decomposed into two principal components: quantum gate blocks that encode the input features $\bm{x}$ into the circuit, and parametrized gate blocks that serve as the variational component, with $\bm{\theta}$ tuned during training. The real upshot is that the feature-encoding and parameterization blocks are arranged differently depending on the VQC layout, thereby giving rise to distinct Fourier spectra and expressive capacities.
\subsubsection{Angle embedding with data reuploading}
In the well-studied case of angle embedding, the unitary consists of alternating layers of encodings and parametrizations, such that
\begin{equation}\label{eq: angle embedding and reuploading unitary}
    U(\bm{x},\bm{\theta})=W^{(L+1)}(\bm{\theta}_{L+1})\prod\limits_{l=1}^{L}S^{(l)}(\bm{x})W^{(l)}(\bm{\theta}_{l}),
\end{equation}
where $L$ is the number of layers, $S^{(l)}$ denotes the encoding layer and $W^{(l)}$ denotes the parametrization layer at $l^{\text{th}}$ layer index. As the name implies, angle encoding encodes each feature $x_k$ as the angle of a unitary rotation, and is given by
\begin{equation}
    S^{(l)}(\bm{x})=\prod_{k=1}^{D}S^{(l)}(x_k)=\prod_{k=1}^{D}\exp(-ix_kH^{(l)}_k),
\end{equation}
where $H_k^{(l)}$ are Hamiltonian generators \cite{Fourier_analysis_Schuld_Maria,Fourier_analysis_expressivity_as_redundancies}. Although the superscript $l$ on $S^{(l)}$ permits, in principle, distinct encoding blocks at each layer, it is customary to take the encoding blocks to be identical across layers, $S^{(l)}\coloneqq S$. For this reason, the architecture is also referred to as data reuploading, reflecting the repeated application of identical encoding blocks to the quantum system.

With this definition of $U(\bm{x},\bm{\theta})$ as in \Cref{eq: angle embedding and reuploading unitary}, together with the assumption that $S^{(l)}(x_k)$ mutually commute for all $k$, and that $S^{(l)}(\bm{x})$ can be simultaneously diagonalized as $S^{(l)}(\bm{x})=P_l\prod_{k=1}^{D}\exp(-ix_kD^{(l)}_k)P_l^{\dagger}$ with $D^{(l)}_k=diag(\lambda^{(k,l)}_1,\lambda^{(k,l)}_2,\cdots,\lambda^{(k,l)}_{2^n})$, it was established in \cite{Fourier_analysis_Schuld_Maria,Fourier_analysis_expressivity_as_redundancies} that the following holds:

\begin{equation}\label{eq: angle encoding spectrum}
\begin{aligned}
    \Omega = \left\{
    \bm{\omega}=(\omega_1,\omega_2,\cdots,\omega_d)|\omega_k=\Lambda_{\bm{J}}^{(k)}-\Lambda_{\bm{J}^{\prime}}^{(k)}, \bm{J},\bm{J}^{\prime}\in[\![1,2^n]\!]^{L}\right\},
\end{aligned}
\end{equation}
where $\Lambda_{\bm{J}}^{(k)}=\sum\limits_{l=1}^{L}\lambda^{(k,l)}_{j_{l}}$. Particularly, if the encoding Hamiltonians are fixed such that $D^{(l)}_k=I_{2^{k-1}}\otimes D\otimes I_{2^{n-k}}$ acts as a single-qubit gate acting on qubit $k$ with fixed $D$ and remains identical across all layers, the circuit layout of $U(\bm{x},\bm{\theta})$ would become as in \Cref{fig:angle_embedding_reupload_circ}, and each frequency component satisfies $\omega_k\in\{-L,-L+1,\cdots,L\}$. This implies the corresponding Fourier coefficients $c_{\bm{\omega}}$ are strictly equal to 0 whenever $\bm{\omega}$ falls beyond this range. 

Moreover, Ref. \cite{Fourier_analysis_expressivity_as_redundancies} also demonstrates that the variance of the Fourier coefficients with respect to the parameter space $\Theta$ is almost directly proportional to the `redundancies' of the current frequency $\omega$. Namely, this is how many distinct pairs $\bm{J}, \bm{J}^{\prime}$ can give rise to the target frequency $\omega$. Specifically, with exactly the same assumptions as in \Cref{eq: angle encoding spectrum} but for scalar input feature $x$, meanwhile if every trainable layer $W^{(l)}(\bm{\theta})$ independently forms a unitary 2-design, then
    \begin{equation}\label{eq: angle embedding variance decay}
    \begin{aligned}
        \mathbb{E}_{\bm{\theta}}(c_{\omega}(\bm{\theta}))&=\frac{\operatorname{Tr}(O)}{2^n}\delta_{\omega}^0\\
        \mathbb{V}\text{ar}_{\bm{\theta}}(c_{\omega}(\bm{\theta}))&\le O\left(\alpha\frac{|\Tilde{R}(\omega)|}{2^n}\right),
    \end{aligned}
    \end{equation}
    where $\alpha\coloneq\left(2^n\|O\|_2^2-\operatorname{Tr}(O)^2\right)/2^{2n}$ is a constant. Also, 
    \begin{equation}
        R(\omega)=\left\{(\bm{J},\bm{J}^{\prime})|\bm{J},\bm{J}^{\prime}\in[\![1,2^n]\!]^{L},w-\Lambda_{\bm{J}}+\Lambda_{\bm{J}^{\prime}}=0\right\}
    \end{equation}
    are the eigenvalue indices that produce the target frequency $\omega$,
    \begin{equation}
        |R(\omega)|=\sum_{\bm{J}, \bm{J}^{\prime}}\delta\left(\omega-(\Lambda_{\bm{J}}-\Lambda_{\bm{J}^{\prime}})\right)=
    \begin{pmatrix}
  2nL \\
  nL-|\omega|
\end{pmatrix}
    \end{equation}
    is the redundancy of $\omega$, and $|\Tilde{R}(\omega)|=|R(\omega)|/2^{2n}$ is the normalized redundancy.

Since $\alpha$ can be bounded by a constant, and the redundancy $|R(\omega)|$ decays exponentially, this indicates that the upper bound on the variance of the Fourier coefficients both decays exponentially in the magnitude of the frequency from Ref. \cite{Fourier_analysis_expressivity_as_redundancies}, and in the number of qubits as a consequence of BP from Ref. \cite{Fourier_analysis_BP}.  Although \Cref{eq: angle embedding variance decay} is formulated for a scalar input $x$ (and hence a scalar frequency $\omega$), the same exponential decay behaviour readily extends to multi-dimensional inputs $\bm{x}$, we can easily demonstrate that the general exponentially decaying claim still holds for multi-dimensional feature input $\bm{x}$.
\begin{corollary} (Exponentially decaying redundancies with multi-dimensional frequencies)\label{corollary: multi dim freq decaying redundancies}

Considering a VQC model defined as in \Cref{eq: f(x)} with $L$ layers, $n$ qubits, input feature $\bm{x}$ of dimension $d=n$, angle-embedded $U(\bm{x},\bm{\theta})$ defined as in \Cref{eq: angle embedding and reuploading unitary}, fixed encoding Hamiltonians, as well as the same premises in \Cref{eq: angle encoding spectrum}, then,
    \begin{equation}\label{eq: angle embedding variance upper bound}
        \mathbb{V}\text{ar}_{\bm{\theta}}(c_{\bm{\omega}}(\bm{\theta}))\le O\left(\alpha\frac{|\Tilde{R}(\bm{\omega})|}{2^n}\right)\le O(\exp(-\|\bm{\omega}\|_1)).
\end{equation}
    See proof in \Cref{appendix: angle embedding proofs}.
\end{corollary}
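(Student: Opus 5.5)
The plan is to reduce the multi-dimensional statement to a product of one-dimensional redundancy counts, then bound each factor by a central-binomial tail. Fix the encoding Hamiltonians as $D^{(l)}_k=I_{2^{k-1}}\otimes D\otimes I_{2^{n-k}}$ with $D=\mathrm{diag}(\tfrac12,-\tfrac12)$, i.e. $D=\sigma_z/2$, so each layer contributes eigenvalue $\pm\tfrac12$ per qubit.

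\textbf{Step 1 (first inequality).} I would repeat the Weingarten argument behind \Cref{eq: angle embedding variance decay} with multi-index frequencies. Write $f_{\bm\theta}(\bm x)$ in the eigenbasis of the commuting encoding generators. Because $S(\bm x)$ is diagonal, $c_{\bm\omega}$ becomes a sum over pairs $(\bm J,\bm J')$ of basis-index strings whose vector of eigenvalue differences equals $\bm\omega$. If every $W^{(l)}$ is an independent 2-design, then integrating layer by layer with second-moment Weingarten formulas gives the same estimate as in the scalar case. The only change is that the scalar constraint $\omega-\Lambda_{\bm J}+\Lambda_{\bm J'}=0$ is replaced by the vector constraint
\[
\Lambda^{(k)}_{\bm J}-\Lambda^{(k)}_{\bm J'}=\omega_k \quad \text{for all } k.
\]
This yields $\mathbb{V}\mathrm{ar}\le O(\alpha|\tilde R(\bm\omega)|/2^n)$, with $R(\bm\omega)$ counting the pairs that satisfy all $d$ constraints. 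I take this step essentially from the cited derivation; the key check is that nothing in it used the scalar nature of $\omega$ beyond the delta constraint.

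\textbf{Step 2 (factorization).} Each index $j_l\in[\![1,2^n]\!]$ is a bit string $(b_{l,1},\dots,b_{l,n})$, and $\lambda^{(k,l)}_{j_l}$ depends only on the bit $b_{l,k}$. Hence the constraint for component $k$ involves only the $2L$ bits $\{b_{l,k},b'_{l,k}\}_l$, and these bit sets are disjoint across different $k$. The count therefore factorizes as
\[
|R(\bm\omega)|=\prod_{k=1}^{n}|R_1(\omega_k)|,\qquad |R_1(\omega_k)|=\binom{2L}{L-|\omega_k|},
\]
where $R_1$ is the single-qubit count. This is the scalar formula specialized to $n=1$, and it follows from writing $\Lambda-\Lambda'$ as $(\#\text{ones among } 2L \text{ signed bits})-L$.

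\textbf{Step 3 (decay).} After normalization each factor is $\binom{2L}{L-|\omega_k|}/4^L$. Using
\[
\binom{2L}{L-m}\le\binom{2L}{L}\,e^{-m^2/(L+m)}
\]
(or a Chernoff/Hoeffding bound), each factor is $\le e^{-c\,|\omega_k|^2/L}$. The product over $k$ is then $\le \exp\!\bigl(-c\|\bm\omega\|_2^2/L\bigr)$.

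\textbf{Main obstacle.} The hardest step is turning this into the stated $O(\exp(-\|\bm\omega\|_1))$. The Gaussian-type bound has an $L$-dependent rate. I would use the ratio bound
\[
\frac{\binom{2L}{L-m-1}}{\binom{2L}{L-m}}=\frac{L-m}{L+m+1}<1,
\]
and then use $|\omega_k|\le L$, so that $\|\bm\omega\|_2^2/L\ge \|\bm\omega\|_2^2/\max_k|\omega_k|\ge$ (roughly) $\|\bm\omega\|_1/n$. Together these give a bound of the form $\exp(-\kappa\|\bm\omega\|_1)$ with a constant $\kappa$ depending on $L$ and $n$. The $O(\cdot)$ in \Cref{eq: angle embedding variance upper bound} is read as absorbing that constant.
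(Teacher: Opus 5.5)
Your route is the paper's route. Like the appendix, you import the 2-design variance bound from the cited scalar derivation, with the scalar delta constraint replaced by the vector one. You then reduce $|R(\bm\omega)|$ to the product $\prod_k\binom{2L}{L-|\omega_k|}$ and argue decay.

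Your Step 2 is actually more careful than the appendix. The paper obtains the product via $|\bigcap_k A_k|\le\prod_k|A_k|$. But with $A_k$ defined as a set of pairs $(\bm J,\bm J')$, one has $|A_k|=\binom{2L}{L-|\omega_k|}\,4^{L(n-1)}$, so that inequality does not by itself give the stated product. Your observation that the $k$-th constraint involves only the disjoint bit block $\{b_{l,k},b'_{l,k}\}_l$ gives the product as an exact equality. The paper simply asserts the final $O(\exp(-\|\bm\omega\|_1))$ without argument. Your Step 3 goes beyond it, and that is where the error sits.

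The inequality $\|\bm\omega\|_2^2/L\ge\|\bm\omega\|_2^2/\max_k|\omega_k|$ runs the wrong way. Since $\max_k|\omega_k|\le L$, dividing by $L$ gives the smaller number. For example, with $L=10$ and $\bm\omega=(1,0,\dots,0)$ the left side is $1/10$ and the right side is $1$. Worse, the conclusion your chain would deliver, a rate of about $1/n$ independent of $L$, is false. For $\bm\omega=(1,0,\dots,0)$ the redundancy drops relative to $\bm\omega=\bm 0$ only by the factor $\binom{2L}{L-1}/\binom{2L}{L}=L/(L+1)\approx e^{-1/L}$.

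The fix uses the ratio bound you already wrote down. For every $j\ge 0$ one has $\frac{L-j}{L+j+1}\le\frac{L}{L+1}$, so $\binom{2L}{L-m}\le\binom{2L}{L}\,(1+1/L)^{-m}\le 4^L(1+1/L)^{-m}$. With your normalization this gives $|R(\bm\omega)|/4^{nL}\le(1+1/L)^{-\|\bm\omega\|_1}$. That is exponential decay in $\|\bm\omega\|_1$ with rate $\kappa=\ln(1+1/L)\approx 1/L$, which does not depend on $n$. Equivalently, you can start from your Gaussian bound and use $\omega_k^2\ge|\omega_k|$ for integer frequencies to get $\kappa=c/L$.

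The $1/L$ rate is intrinsic, not an artifact of the proof: near $\bm\omega=\bm 0$ the binomial profile is Gaussian with width $\sqrt{L}$, and for fixed $L$ the admissible $\bm\omega$ form a finite set anyway. So the statement's $O(\exp(-\|\bm\omega\|_1))$ can only be read with an $L$-dependent rate, as you intended. Just do not claim a rate that is uniform in $L$.
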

Remarkably, a more advanced protocol of trainable ternary encoding has recently been developed \cite{Fourier_trainable_tenary_encoding}. Instead of using a fixed encoding Hamiltonian, it makes the encoding gate become $S^{(l)}(x,\alpha_l)=\exp(i\alpha_l3^lxH)$, where $\alpha_l$ is a trainable parameter. In this way, it achieves an exponential reduction in the encoding gate number for the same frequency spectrum, as well as making the frequencies themselves highly trainable, thereby relaxing the restriction that frequencies must be integer-valued. However, despite these improvements, its angle-embedding nature means that its scaling with respect to the number of features is still exponentially worse compared to amplitude embedding. Hence, we start to address this in the next subsection.

\subsubsection{Amplitude embedding}
The structure of a VQC with amplitude embedding first encodes the input features into the amplitudes of the computational basis states. A trainable parametrized unitary, which may comprise multiple layers, is then applied, giving
\begin{equation}\label{eq: amplitude encoding and unitary}
    U(\bm{x},\bm{\theta})=W(\bm{\theta})\mathcal{A}(\bm{x}),
\end{equation}
as shown in \Cref{fig:amplitude_embedding_circ}, where
\begin{equation}\label{eq: amplitude encoding}
    \mathcal{A}(\bm{x})\ket{0}^{\otimes n}=\ket{\bm{x}}=\frac{1}{\|\bm{x}\|_2}\sum\limits_{i=0}^{2^n-1}x_i\ket{i}.
\end{equation}
One immediately sees that the dimension of $\bm{x}$ must be the same as the dimension of the quantum system Hilbert space; otherwise, some zero-padding is required when the dimension of the input features is smaller. For simplicity, we assume $x_i$ are always real throughout this paper, so no zero-padding is involved.

Although in general, apart from the Born rule normalizations, there is no restriction on the domain of the amplitudes \cite{encoding_summary}. However, in practice, the values of the classical information can take for encoding on each amplitude naturally abide by certain bounds for consistency. In this work, we consider two encoding scenarios applied to the same classical input features: \textit{non-negative} domain, which is natural for tasks including image processing \cite{haiyue_partial_QEC}, probability distributions \cite{encoding_prob_dists}, classical kernel mapping \cite{encoding_kernel_mappings}, and \textit{symmetric} domain, which is common when one needs to exploit phase as signs. In the former case, the components of a classical vector $\bm{x}$ are mapped to non-negative values while preserving the original differences between components, such that $\tilde{x}_i=x_i+r$ for some $r=-\min\limits_i\{x_i\}$ and $\tilde{x}_i\in[0,R]$. Naturally, one can freely define their mapping that preserves their features of interest, provided that $\tilde{x}_i\in[0,R]$ is satisfied for some $R$. In the latter case, we take $\tilde{x}_i\in[-R/2,R/2]$. Note that in both scenarios, the same original vector $\bm{x}$ is being encoded, yet the amplitudes and hence the quantum state that the model learns will differ. This is obvious since the shifted vectors $\tilde{\bm{x}}$ are different and consequently so is $\mathcal{A}(\tilde{\bm{x}})$.

Nevertheless, for a model defined as in \Cref{eq: f(x)}, we deduce that
\begin{equation}
    f_{\bm{\theta}}(\bm{x})=\frac{1}{\|\bm{x}\|_2^2}\sum_{i,i'\in[d]}x_{i'}W(\bm{\theta})^{\dagger}_{i'j'}O_{j'j}W(\bm{\theta})_{ji}x_i,
\end{equation}
where the summation over $j,j'$ are implicit. Hence, the Fourier coefficients prescribed by \Cref{eq: Fourier series} must take the following form:
\begin{equation}\label{eq: amplitude embedding Fourier coefficients}
    c_{\bm{\omega}}(\bm{\theta})=\sum_{i,i'\in[d]}\int_{D}d^{(d)}V\frac{x_ix_{i'}W(\bm{\theta})^{\dagger}_{i'j'}O_{j'j}W(\bm{\theta})_{ji}}{(2\pi)^d\|\bm{x}\|_2^2}e^{i\bm{\omega}\cdot\bm{x}},
\end{equation}
where $D$ is the domain over which the input features are encoded. Since understanding the behaviour of the Fourier coefficients given in \Cref{eq: amplitude embedding Fourier coefficients} is the central objective, to this end, we perform both analytical and numerical simulations in the subsequent sections.

\section{Results}\label{sec: Results}
In this section, we present our main results regarding the behaviour of the Fourier coefficients defined in \Cref{eq: amplitude embedding Fourier coefficients} under a variety of conditions and scenarios. It includes their expressivity with respect to frequencies, dimensions, domain of the input features, both in the absence and in the presence of noise.
\begin{figure*}[thb!]
\centering
\includegraphics[width=2\columnwidth]{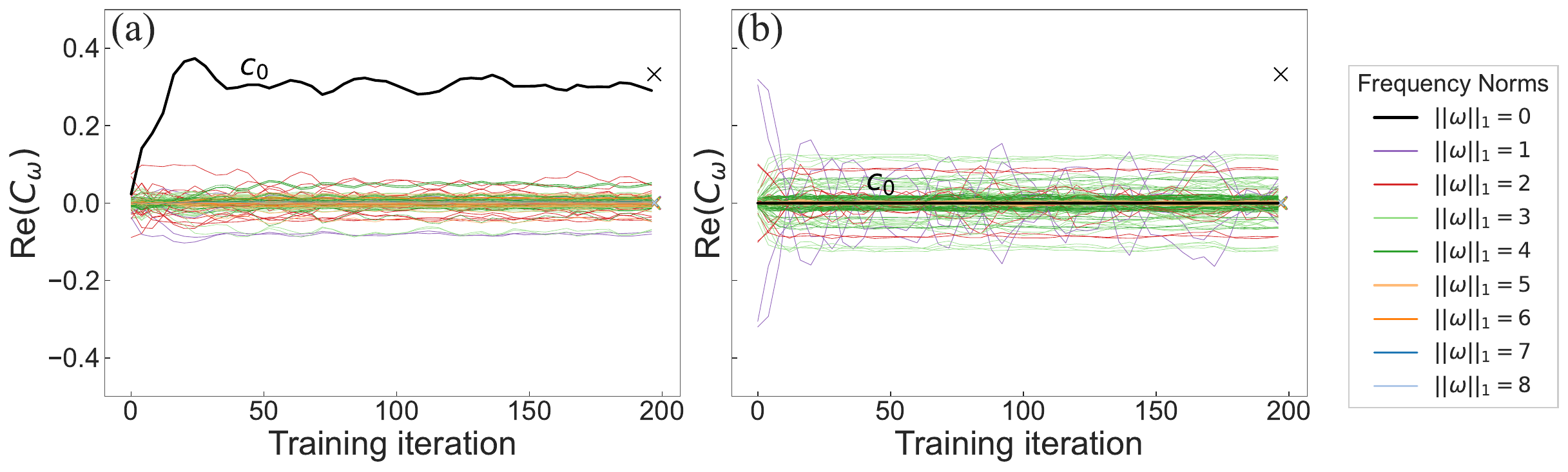}
    \caption{The values of the sampled Fourier coefficients with (a) non-negative encoding and (b) symmetric encoding, plotted for the real part. The crosses at the end denote the values of the corresponding Fourier coefficients of the target function to fit. Notably, when using non-negative encoding, $c_0$ is clearly well trained towards the target Fourier coefficient, whereas for symmetric encoding it perfectly remains at 0. The training iteration has a one-to-one mapping to the training sample size, with 400 points as one batch trained together per iteration.}
    \label{fig:symmetric_and_positive_real}
\end{figure*}
\subsection{Vanishing coefficient at zero frequency}\label{subsec: vanishing zero frequency}
Before delving into the analysis of the mean and variance of the Fourier coefficients, we first point out that the values over which each input feature can take play a pivotal role in the expressivity of the coefficients. Despite encoding the same original vector in principle, the impact on the Fourier coefficients, especially for $\bm{\omega}=0$ is starkly different between non-negative and symmetric encodings. Specifically, we show that $c_{0}(\bm{\theta})\ne0$ when the non-negative domain is used for encoding, while $c_{0}(\bm{\theta})=0$ holds identically for all $\bm{\theta}$ when the symmetric domain for encoding is employed.
\begin{theorem} (Symmetric and non-negative domain encoding zeroth coefficient)\label{theorem: zero coefficient}
    Consider a VQC model defined as in \Cref{eq: f(x)} equipped with parametrized unitary defined in \Cref{eq: amplitude encoding and unitary}, amplitude encoding in \Cref{eq: amplitude encoding}, and a traceless Hermitian observable $\operatorname{Tr}(O)=0$. Under the symmetric domain scaling, the zeroth Fourier coefficient as defined in \Cref{eq: amplitude embedding Fourier coefficients}, satisfies
    \begin{equation}
        c_{0}(\bm{\theta})=0,
    \end{equation}
    i.e., it vanishes. Under non-zero domain scaling, it is shown that 
    \begin{equation}
        c_{0}(\bm{\theta})=\frac{1}{(2\pi)^d}\sum\limits_{i>i'}2\text{Re}\left(\tilde{O}_{i,i'}(\bm{\theta})\right)\int_{D}d^{(d)}V\frac{x_ix_{i'}}{\|\bm{x}\|_2^2}\ne0.
    \end{equation}
    See the proof in \Cref{appendix: zero coefficient proof}.
\end{theorem}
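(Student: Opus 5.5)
Setting $\bm{\omega}=0$ in \Cref{eq: amplitude embedding Fourier coefficients} and writing $\tilde{O}(\bm{\theta})\coloneqq W(\bm{\theta})^{\dagger}OW(\bm{\theta})$, I will express the zeroth coefficient as
\begin{equation*}
c_{0}(\bm{\theta})=\frac{1}{(2\pi)^d}\sum_{i,i'}\tilde{O}_{i',i}(\bm{\theta})\,I_{ii'},\qquad I_{ii'}\coloneqq\int_{D}d^{(d)}V\frac{x_ix_{i'}}{\|\bm{x}\|_2^2}.
\end{equation*}
The integrals $I_{ii'}$ are now separated from the parameter-dependent matrix $\tilde{O}$. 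Two elementary facts drive the whole argument. First, $\tilde{O}$ is Hermitian and traceless, because $\operatorname{Tr}(\tilde{O})=\operatorname{Tr}(O)=0$. Second, the domain $D$ is a hypercube, $[-R/2,R/2]^d$ in the symmetric case or $[0,R]^d$ in the non-negative case, so it is invariant under permutations of the coordinates.

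\textbf{Diagonal terms.} Because $D$ and $\|\bm{x}\|_2$ are permutation invariant, swapping $x_i$ and $x_j$ gives $I_{ii}=I_{jj}\eqqcolon I_{\mathrm{d}}$ for all $i,j$. The diagonal contribution is therefore $I_{\mathrm{d}}\sum_i\tilde{O}_{ii}=I_{\mathrm{d}}\operatorname{Tr}(O)=0$, and this holds for both domains.

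\textbf{Off-diagonal terms.} I will group each pair $i\ne i'$. Since $I_{ii'}=I_{i'i}$ is real and $\tilde{O}_{i,i'}+\tilde{O}_{i',i}=2\operatorname{Re}\tilde{O}_{i,i'}$ by Hermiticity, the off-diagonal part equals $\frac{1}{(2\pi)^d}\sum_{i>i'}2\operatorname{Re}(\tilde{O}_{i,i'})I_{ii'}$. This is the non-negative-domain formula stated in \Cref{theorem: zero coefficient}.

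\textbf{Symmetric domain.} Here I will use the reflection $x_i\mapsto -x_i$ with all other coordinates fixed. It maps $[-R/2,R/2]^d$ onto itself, leaves $\|\bm{x}\|_2$ and the volume element unchanged, and flips the sign of $x_ix_{i'}$ whenever $i\ne i'$. Hence $I_{ii'}=-I_{ii'}=0$, and so $c_0(\bm{\theta})\equiv0$.

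\textbf{Non-negative domain.} On $[0,R]^d$ the integrand $x_ix_{i'}/\|\bm{x}\|_2^2$ is non-negative and strictly positive on a set of positive measure. So $I_{ii'}=I_{\mathrm{o}}>0$, again by permutation symmetry, and
\begin{equation*}
c_0(\bm{\theta})=\frac{I_{\mathrm{o}}}{(2\pi)^d}\left(\sum_{i,i'}\tilde{O}_{i,i'}-\operatorname{Tr}\tilde{O}\right)=\frac{2^n I_{\mathrm{o}}}{(2\pi)^d}\bra{+}^{\otimes n}\tilde{O}(\bm{\theta})\ket{+}^{\otimes n}.
\end{equation*}

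\textbf{Main obstacle.} The only delicate step is the claim $c_0\neq0$. This cannot hold identically in $\bm{\theta}$, since there can be isolated $\bm{\theta}$ where $\bra{+}^{\otimes n}W^{\dagger}OW\ket{+}^{\otimes n}$ happens to vanish. I will therefore interpret the claim as saying that $c_0$ does not vanish identically. To prove this I will use that $O$ is traceless and nonzero, so it has a positive eigenvalue. If the ansatz $W(\bm{\theta})$ is expressive enough to rotate $\ket{+}^{\otimes n}$ onto the corresponding eigenvector, then $c_0(\bm{\theta})>0$ at that $\bm{\theta}$. Under the 2-design assumption used later in the paper, this reachability holds for generic $\bm{\theta}$.
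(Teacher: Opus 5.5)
Your proof has the same skeleton as the paper's. You split $c_0$ into $i=i'$ and $i\neq i'$ parts, and kill the diagonal part using permutation invariance of the hypercube together with $\operatorname{Tr}\tilde{O}=\operatorname{Tr}O=0$. You then pair the off-diagonal terms into $2\operatorname{Re}\tilde{O}_{i,i'}$, and finish with a reflection on the symmetric domain and positivity on the non-negative one.

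Two of your deviations are genuine improvements. First, the single-coordinate reflection $x_i\mapsto -x_i$ is the correct symmetry. The paper appeals to $\bm{x}\mapsto-\bm{x}$, which leaves $x_ix_{i'}$ unchanged (its own computation $(-x_i)(-x_{i'})=x_ix_{i'}$ says so), and so it cannot show that the off-diagonal integrals vanish. Second, your identity $c_0(\bm{\theta})=\frac{2^n I_{\mathrm{o}}}{(2\pi)^d}\langle +|^{\otimes n}\tilde{O}(\bm{\theta})|+\rangle^{\otimes n}$ makes precise what the paper's ``can take any real value'' glosses over. In fact $c_0$ is confined to $\frac{2^n I_{\mathrm{o}}}{(2\pi)^d}[\lambda_{\min}(O),\lambda_{\max}(O)]$, so ``$\neq0$'' can only mean ``not identically zero.'' Even that needs an expressivity hypothesis the theorem does not state: with $W(\bm{\theta})$ equal to the identity and $O=Z_1$, one gets $c_0\equiv 0$ on the non-negative domain as well. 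Your caveat is therefore warranted.

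The one step that does not hold is your last sentence. A 2-design is a condition on second moments. It does not imply that $W(\bm{\theta})|+\rangle^{\otimes n}$ ever equals a prescribed eigenvector of $O$. Moreover, for a continuous family, exact reachability of a fixed state typically occurs only on a measure-zero set of $\bm{\theta}$, not generically. You do not need reachability at all. If $\mathrm{Im}(\mathcal{W})$ is a 2-design, then $W|+\rangle^{\otimes n}$ is a state 2-design, so
\[
\mathbb{E}_{\bm{\theta}}\bigl|\langle +|^{\otimes n}\tilde{O}(\bm{\theta})|+\rangle^{\otimes n}\bigr|^2=\frac{\operatorname{Tr}(O)^2+\|O\|_F^2}{d(d+1)}=\frac{\|O\|_F^2}{d(d+1)}>0 .
\]
This already shows $c_0$ is not identically zero. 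For $R=2\pi$ it reproduces exactly the $\bm{\omega}=0$ case of \Cref{theorem: noiseless mean and variance}, using $I_{ii}=(2\pi)^d/d$. If in addition $W$ is analytic in $\bm{\theta}$, as for products of rotation gates, then $c_0$ vanishes only on a measure-zero set, which is the correct sense in which $c_0\neq0$ ``generically.''
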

\Cref{theorem: zero coefficient} makes it clear that even within the same amplitude-embedding encoding strategy, the expressivity of the model could still be substantially different. The fact that $c_{0}(\bm{\theta})=0  $ for symmetric encoding implies the model is incapable of learning any function that contains some non-zero constant values. Particularly, capturing the low-frequency features while selectively neglecting high-frequency features is almost regarded as a hallmark for a model that generalizes well and does not overfit in learning theory \cite{spectral_methods_discussion}. The inability to learn the feature at the lowest frequency would severely impair the model's performance. Beyond the derivations above, we also conduct a simulation that compares the performance and the sampled Fourier coefficients of the model between symmetric and non-negative encoding. The key findings are presented in \Cref{fig:symmetric_and_positive_real}. For completeness, we include other relevant diagnostic plots (\Cref{fig:symmetric_and_positive_MSE,fig:symmetric_and_positive_deviations}) in \Cref{appendix: supplementary plots for zeroth coefficient} for reference. In this simulation, we train the model to fit a function $g(\bm{x})$ with four-dimensional input $\bm{x}$, 
\begin{equation}
    g(\bm{x})=\sum\limits_{\bm{\omega}\in \Omega}c_{\bm{\omega}}e^{-i\bm{\omega}\cdot\bm{x}}\in[-1,1].
\end{equation}
The Fourier coefficients $c_{\bm{\omega}}$ are, for simplicity, predefined to be real numbers, with a deliberate amplification of $c_{\bm{0}}$ to highlight any potential difference in performance between the two encoding strategies during training. The target coefficients are indicated by the cross marks in \Cref{fig:symmetric_and_positive_real} that denote the target coefficients. The VQC consists of 2 qubits (hence encodes 4 dimensions of features), with the variational unitary defined as 
\begin{equation}\label{eq: variational circuit definition for simulation}
    U(\bm{\theta})=\prod\limits_{l}W_{l}\prod\limits_{m}e^{-i\theta_{lm}H_{lm}},
\end{equation}
where $W_l=\prod_{k=1}^{n-1} CZ_{k,k+1}$ is an staggered layer of Controlled-$Z$ entangling gates identical for all $L=30$ layers, and $W_l=CZ_{1,2}$ for this specific 2-qubit case. $H_{lm}=\hat{\bm{n}}_{lm}\cdot\bm{\sigma}$, $\bm{\sigma}=(X,Y,Z)^{T}$ are the generating Hamiltonian drawn from the Pauli group with axis $\hat{\bm{n}}_{lm}$. Each rotation $R_{lm}=\exp(-i\theta_{lm}H_{lm})$ possesses three degrees of freedom, which can be parametrized into the Euler angles ($\alpha, \beta, \gamma$), 
\begin{equation}\label{eq: euler angle decomposition}
    R_{lm}=R_{Z}(\alpha_{lm})R_{Y}(\beta_{lm})R_{Z}(\gamma_{lm}).
\end{equation}
The observable is taken to be a single-qubit Pauli-$Z$ operator, $O=Z_1$. The parameters are updated using the ADAM optimizer, with a learning rate of 0.01, 400 data points per batch $\mathcal{B}\subset\mathcal{X}$, and a total of 50 batches in the training set $\mathcal{X}$. The cost function is defined as the mean squared error (MSE) of the predicted value to the actual values:
\begin{equation}
    \mathcal{C}(g',\mathcal{B})=\frac{1}{N}\sum\limits_{i\in\mathcal{B}}\left(g'(\bm{x}_i)-y_i\right)^2,
\end{equation}
where $N=|\mathcal{B}|$ is the size of the batch. All simulations are executed on a noiseless simulator. 

From \Cref{fig:symmetric_and_positive_real}, it is apparent that the zeroth coefficient $c_0(\bm{\theta})$ for non-negative encoding converges towards the target coefficient during training (marked by the black cross), while it stays perfectly at 0 for symmetric encoding regardless of the parameter trajectory. Moreover, in \Cref{fig:symmetric_and_positive_MSE}, we barely see the test set MSE decrease when using symmetric encoding, reflecting its complete loss of expressive power. Meanwhile, the MSE for non-negative encoding decays significantly across training. This corroborates our prediction drawn from \Cref{theorem: zero coefficient}, that the difference in the expressivity of one yet most significant Fourier coefficient can lead to a qualitatively different performance of the model. In terms of the non-zero coefficients $c_{\bm{\omega}\ne0}$, however, both symmetric and non-negative encoding models seem to exhibit a certain level of expressivity. In particular, under non-negative encoding, smaller frequency norms tend to admit larger and more volatile Fourier coefficient values. The imaginary part plot are not included since the imaginary part of the targets is zero anyway. In the next subsection, we focus on discussing the exact relationship between the expressivity of Fourier coefficients and their frequency, input dimension, and choice of the observable.

\subsection{Analytic moments of non-negative embedding Fourier coefficients}\label{subsec: moments of fourier coeffs}
This subsection presents the core findings of this paper: To understand the expressivity of the Fourier coefficients under amplitude embedding, we conduct analytic derivations of their first and second moments with respect to the parameter space under non-negative encodings. Hence, we show that, similar to angle embedding, amplitude embedding also exhibits exponential decay of expressivity with frequency, but with subtle differences in the precise scaling behaviour.

We first begin with the noiseless case. That is, $U(\bm{\theta})$ acts as a purely unitary channel. We establish the following theorem:

\begin{theorem} (Moments of Fourier coefficients for noiseless amplitude-embedding VQCs)\label{theorem: noiseless mean and variance}

Consider a VQC model defined as in \Cref{eq: f(x)} with $L$ layers, $n$ qubits, with non-negative amplitude embedding featuring $d=2^n$ input dimensions and $x_i\in[0,2\pi]$, and the parametrized unitary defined in \Cref{eq: amplitude encoding and unitary} and \Cref{eq: amplitude encoding}. Assuming the image of the parameter space $\text{Im}(\mathcal{W})$, $\mathcal{W}:\Theta\rightarrow \mathcal{U}$ within the unitary group $\mathcal{U}\subset\text{U}(d)$ is a 2-design, the corresponding Fourier coefficients as defined in \Cref{eq: amplitude embedding Fourier coefficients} will have its mean and variance
    \begin{equation}\label{eq: amplitude embedding noiseless mean}
        \mathbb{E}_{\bm{\theta}}(c_{\bm{\omega}}(\bm{\theta}))=\frac{\operatorname{tr}(O)}{d}\delta_{\bm{\omega}}^0,
    \end{equation}
    \begin{equation}\label{eq: amplitude embedding noiseless variance}
    \begin{aligned}
        \mathbb{V}\text{ar}_{\bm{\theta}}(c_{\bm{\omega}}(\bm{\theta}))=&\frac{1}{d^2-1}\biggl[\left(\operatorname{tr}(O)^2-\frac{\|O\|_F^2}{d}\right)\delta_{\bm{\omega}}^0\\
        &+\left(\|O\|_F^2-\frac{\operatorname{tr}(O)^2}{d}\right)\frac{\mathrm{I}(\bm{\omega})}{(2\pi)^{2d}}\biggr]-\frac{\operatorname{tr}\left(O\right)^2\delta_{\bm{\omega}}^0}{d^2},
     \end{aligned}
    \end{equation}
    where
    \begin{equation}\label{eq: I(w)}
        \mathrm{I}(\bm{\omega})=\sum\limits_{i,j=1}^d\left|\int d^{(d)}V\frac{e^{-i\bm{\omega}\cdot\bm{x}}x_ix_j}{\|\bm{x}\|_2^2}\right|^2.
    \end{equation}
    See the proof in \Cref{appendix: noiseless mean and variance proof}.
\end{theorem}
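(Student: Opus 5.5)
We plan to reduce everything to the first and second moments of the matrix entries of the rotated observable $\tilde O(\bm{\theta})\coloneqq W(\bm{\theta})^{\dagger}OW(\bm{\theta})$, and then evaluate those moments with Weingarten calculus, using the 2-design assumption on $\text{Im}(\mathcal{W})$. From \Cref{eq: amplitude embedding Fourier coefficients} we write
\begin{equation*}
c_{\bm{\omega}}(\bm{\theta})=\frac{1}{(2\pi)^d}\sum_{i,i'=1}^{d}\tilde O_{i'i}(\bm{\theta})\,A_{ii'}(\bm{\omega}),\qquad A_{ii'}(\bm{\omega})\coloneqq\int_{[0,2\pi]^d}d^{(d)}V\,\frac{x_ix_{i'}}{\|\bm{x}\|_2^2}e^{i\bm{\omega}\cdot\bm{x}}.
\end{equation*}
This splits the $\bm{\theta}$-dependence, which sits in $\tilde O$, from the $\bm{\omega}$-dependence, which sits in $A$. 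We will use three elementary facts about $A$:
\begin{itemize}
\item $A_{ii'}=A_{i'i}$.
\item $\sum_i A_{ii}(\bm{\omega})=\int d^{(d)}V\,e^{i\bm{\omega}\cdot\bm{x}}=(2\pi)^d\delta_{\bm{\omega}}^0$. The normalisation $\|\bm{x}\|_2^2$ cancels, and the frequencies are integers on the box $[0,2\pi]^d$.
\item $\sum_{i,i'}|A_{ii'}(\bm{\omega})|^2=\mathrm{I}(\bm{\omega})$. The sign of the exponent is irrelevant inside the modulus.
\end{itemize}

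For the mean, the first-moment (1-design) identity gives $\mathbb{E}_{\bm{\theta}}[\tilde O]=\frac{\operatorname{tr}(O)}{d}\mathbb{1}$. Hence $\mathbb{E}[\tilde O_{i'i}]=\frac{\operatorname{tr}(O)}{d}\delta_{ii'}$. Substituting and using the trace identity for $A$ yields \Cref{eq: amplitude embedding noiseless mean} immediately.

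For the variance, we take $\mathbb{V}\text{ar}=\mathbb{E}|c_{\bm{\omega}}|^2-|\mathbb{E}c_{\bm{\omega}}|^2$. Hermiticity of $\tilde O$ gives $\overline{\tilde O_{j'j}}=\tilde O_{jj'}$, so
\begin{equation*}
\mathbb{E}|c_{\bm{\omega}}|^2=\frac{1}{(2\pi)^{2d}}\sum_{i,i',j,j'}A_{ii'}\overline{A_{jj'}}\;\mathbb{E}\big[\tilde O_{i'i}\tilde O_{jj'}\big].
\end{equation*}
The second-moment Weingarten formula for $W^{\otimes2}(O\otimes O)W^{\dagger\otimes2}$, expanded over the identity and swap permutations, gives
\begin{equation*}
\mathbb{E}[\tilde O_{ab}\tilde O_{ce}]=\frac{\operatorname{tr}(O)^2-\|O\|_F^2/d}{d^2-1}\,\delta_{ab}\delta_{ce}+\frac{\|O\|_F^2-\operatorname{tr}(O)^2/d}{d^2-1}\,\delta_{ae}\delta_{bc}.
\end{equation*}
The step we expect to be the main obstacle is the index bookkeeping that follows.
\begin{itemize}
\item Set $(a,b,c,e)=(i',i,j,j')$.
\item The $\delta_{ab}\delta_{ce}$ term forces $i=i'$ and $j=j'$. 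It therefore produces $|\sum_iA_{ii}|^2=(2\pi)^{2d}\delta_{\bm{\omega}}^0$.
\item The $\delta_{ae}\delta_{bc}$ term forces $i'=j'$ and $i=j$. It therefore produces $\sum_{i,i'}A_{ii'}\overline{A_{ii'}}=\mathrm{I}(\bm{\omega})$.
\end{itemize}
Along the way we must check carefully which Kronecker pairing corresponds to the identity permutation and which to the swap. We must also confirm that the symmetry $A_{ii'}=A_{i'i}$ lets us identify the resulting sum with $\mathrm{I}(\bm{\omega})$.

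Finally, we subtract $|\mathbb{E}c_{\bm{\omega}}|^2=\frac{\operatorname{tr}(O)^2}{d^2}\delta_{\bm{\omega}}^0$. This reproduces \Cref{eq: amplitude embedding noiseless variance} exactly, with the prefactor $1/(d^2-1)$ and the two bracketed coefficients coming straight from the Weingarten formula. Under the 2-design assumption the only remaining work is routine.
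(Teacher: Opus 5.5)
Your proposal is correct and follows essentially the same route as the paper: a 1-design average for the mean, and the second-moment Weingarten (identity/swap) formula for $\mathbb{E}|c_{\bm{\omega}}|^2$, where the identity term collapses to $\delta_{\bm{\omega}}^0$ (using $\sum_i x_i^2/\|\bm{x}\|_2^2=1$ and integer frequencies) and the swap term produces $\mathrm{I}(\bm{\omega})$. The only difference is packaging — you twirl $O\otimes O$ and integrate over $\bm{x}$ first into the kernel $A_{ii'}(\bm{\omega})$, whereas the paper averages individual unitary entries and contracts to $\operatorname{tr}(\rho_{\bm{x}}\rho_{\bm{x}'}^{\dagger})$ before integrating — and, cosmetically, with $\tilde O=W^{\dagger}OW$ the relevant twirl is $W^{\dagger\otimes 2}(O\otimes O)W^{\otimes 2}$ (same coefficients), while the symmetry $A_{ii'}=A_{i'i}$ is not actually needed because the swap term already yields $\sum_{i,i'}|A_{ii'}|^2$.
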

\begin{figure}[t!]
    \includegraphics[width=1\linewidth]{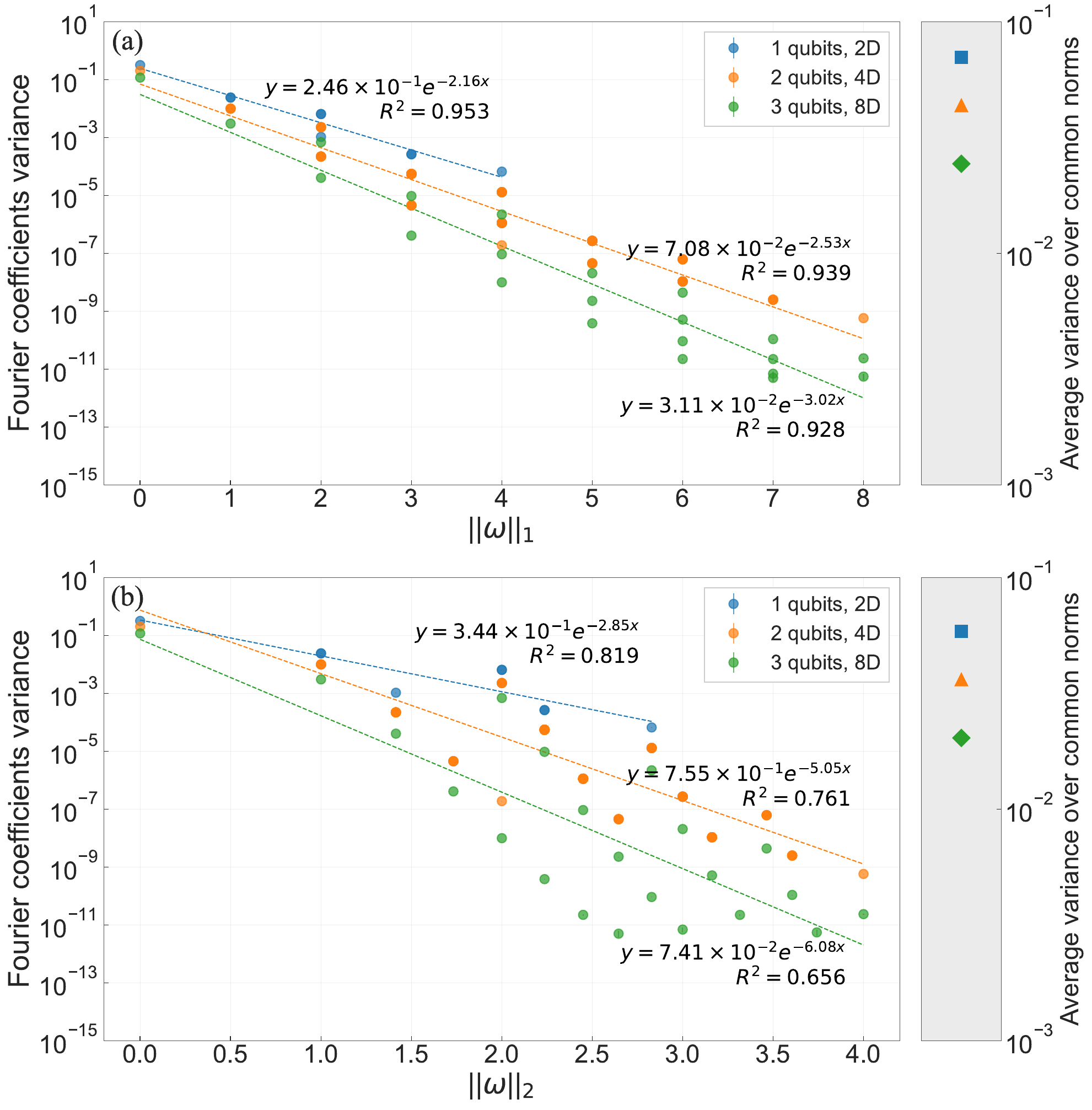}
    \caption{Variance of the Fourier coefficients under non-negative amplitude embedding VQC model with simplified condition on observable $O\in\mathcal{P}_n$ discussed as in \Cref{eq: amplitude embedding noiseless variance Tr(O)=0}, plotted against the (a) $L_1$ and (b) $L_2$ frequency norms. The right panel plots the variance of the coefficients averaged over all common frequencies ($\|\bm{\omega}\|_1\le4$ and $\|\bm{\omega}\|_2\le\|(2,2)\|_2$) for each qubit count. The integral $\mathrm{I}(\bm{\omega})$ is evaluated numerically via Mathematica, with its error $\epsilon$ estimated according to $\epsilon=\alpha+\beta\mathrm{I}(\bm{\omega})$, where $\alpha=10^{-3},\beta=10^{-5}$ are the absolute and proportional tolerance, respectively. However, due to restrictions on computation resources, Monte Carlo methods are used to evaluate $\mathrm{I}(\bm{\omega})$ for the $d=8$ case, where we mitigate the estimated error according to \Cref{appendix: Monte Carlo Integration error estimation}. In both cases, the error bars are negligible in the scale of this plot.}
    \label{fig: amplitude_embedding_coefficients_variance_scaling}
\end{figure}
The above theorem clearly demonstrates the dependence of the Fourier coefficients' mean and variances on the feature dimension, observable, and the frequency. In particular, we see a special term $\mathrm{I}(\bm{\omega})$ appearing in the variance, which dominates the overall pattern of the variance as a function of the frequencies. However, $\mathrm{I}(\bm{\omega})$ is defined as an integral over the input feature space, which is difficult to evaluate in closed form. Therefore, we instead numerically evaluated $\mathrm{I}(\bm{\omega})$ for a range of angular frequencies $\bm{\omega}$ across different input dimensions. In addition, one may impose further conditions on the observable $O$ by requiring $\operatorname{tr}(O)=0$, a criterion commonly adopted for VQC models, which is satisfied by all Pauli observables $O\in \mathcal{P}_n$. In particular, if we focus on Pauli observables, then $\|O\|_F^2=d$, and \Cref{eq: amplitude embedding noiseless mean,eq: amplitude embedding noiseless variance} in \Cref{theorem: noiseless mean and variance} is simplified as
\begin{equation}\label{eq: amplitude embedding noiseless mean Tr(O)=0}
     \mathbb{E}_{\bm{\theta}}(c_{\bm{\omega}}(\bm{\theta}))=0
\end{equation}
\begin{equation}\label{eq: amplitude embedding noiseless variance Tr(O)=0}
    \mathbb{V}\text{ar}_{\bm{\theta}}(c_{\bm{\omega}}(\bm{\theta}))=\frac{d}{d^2-1}\biggl[\frac{\mathrm{I}(\bm{\omega})}{(2\pi)^{2d}}-\frac{1}{d}\delta_{\bm{\omega}}^0\biggr].
\end{equation}
Together with the numerically evaluated $\mathrm{I}(\bm{\omega})$, we plot the variance prescribed by \Cref{eq: amplitude embedding noiseless variance Tr(O)=0} against its frequency norm in \Cref{fig: amplitude_embedding_coefficients_variance_scaling}. Clearly, $\mathrm{I}(\bm{\omega})$, and hence the Fourier coefficient variance, decays exponentially as the frequency norm increases. Meanwhile, the coefficient variance also decays exponentially in the number of qubits, as indicated by the average variance over common frequency norms in \Cref{fig: amplitude_embedding_coefficients_variance_scaling}. This is in agreement with the prediction of \cite{Fourier_analysis_BP}. Of the two choices of the norms examined in the plots, the $L_1$ norm plot provides a better fit and is more consistent with the standard exponentially decaying curve, as indicated by best fit lines and the $R^2$ coefficient of determination metric. As a result, we conclude that in the noiseless case of non-symmetric amplitude embedding VQC model, $\mathbb{V}\text{ar}_{\bm{\theta}}(c_{\bm{\omega}}(\bm{\theta}))=O\left(\exp(-\|\bm{\omega}\|_1)\right)$. At this stage, it might seem that the variance scaling is very similar for both amplitude encoding and angle embedding. However, we remark that the scaling for the angle embedding previously demonstrated in \Cref{eq: angle embedding variance upper bound} is an upper bound, rather than an exact expression. In the next subsection, building upon the simulation with noisy quantum channels, we will also compare the scaling of the standard deviations between the standard angle embedding and non-negative amplitude embedding, hence highlighting the subtle differences in their expressivity.

\subsection{Simulation of the noisy Fourier coefficients}\label{subsec: simulations}
\begin{figure*}[thb!]
    \subfloat{\hspace{13.5pt}\label{fig:2q_depol_MSE}\:\includegraphics[width=.96\columnwidth]{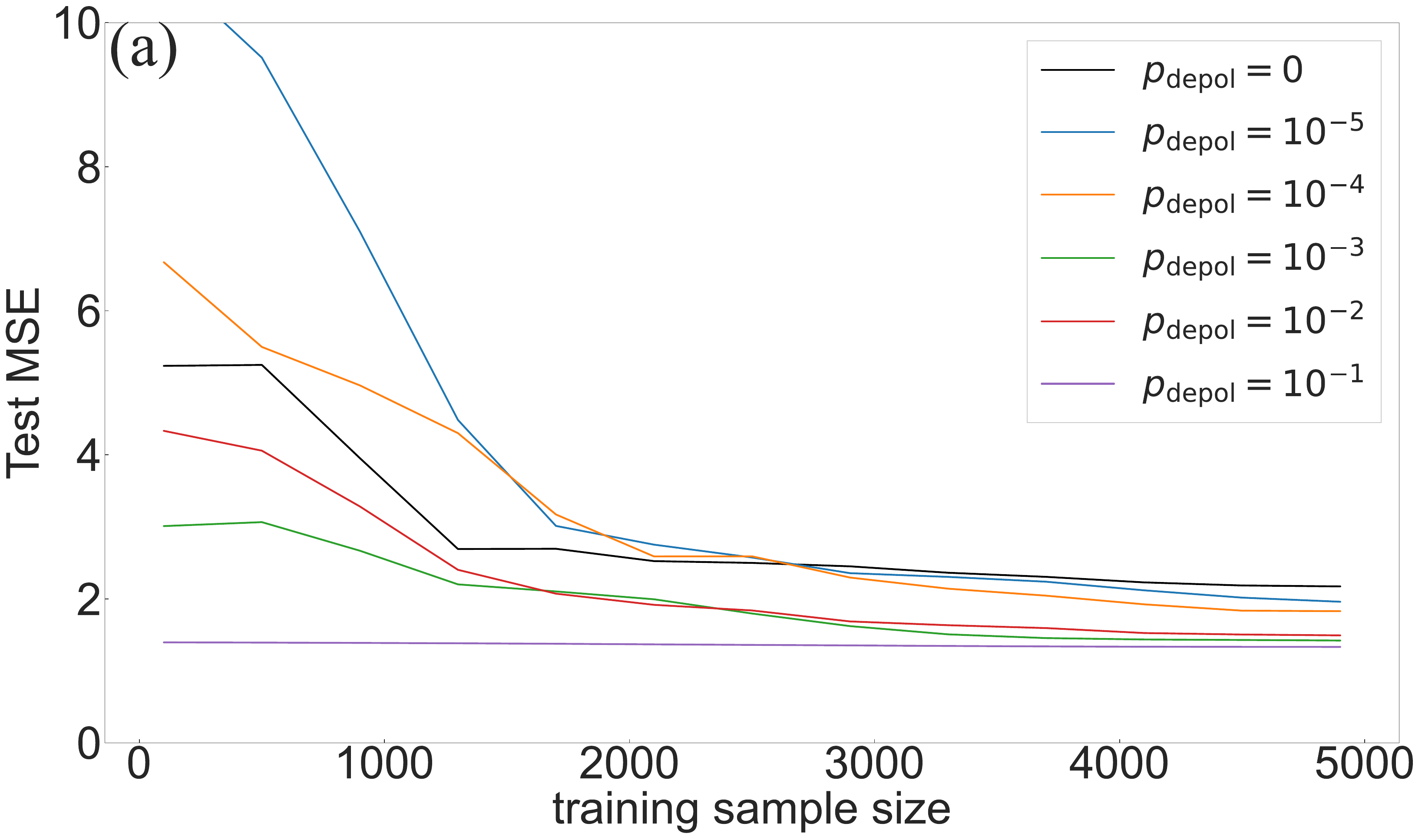}\:}
    \subfloat{\hspace{4pt}\label{fig:2q_depol_MSE_gradients}\:\includegraphics[width=.99\columnwidth]{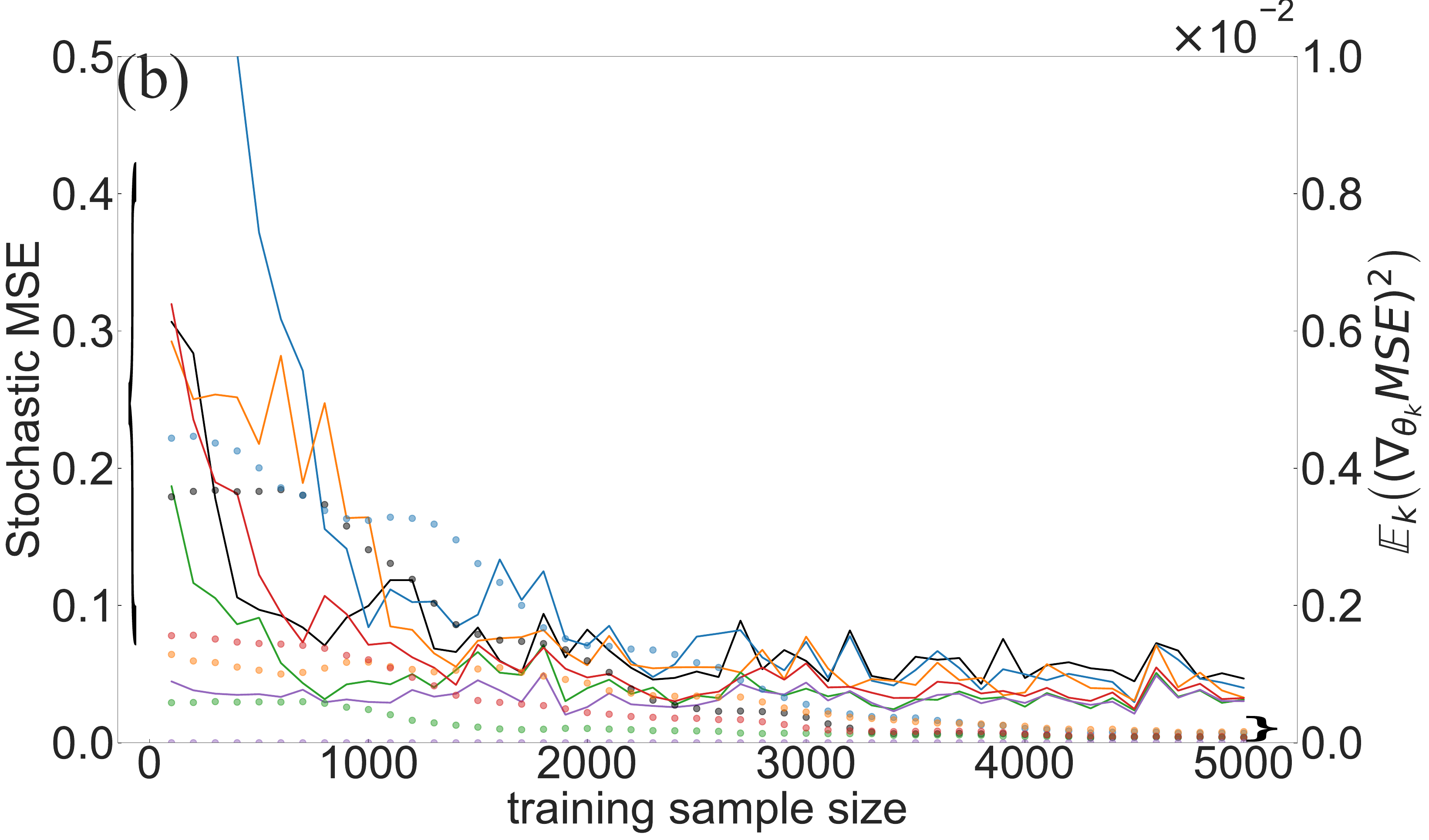}\:}\\
    \subfloat{\label{fig:2q_depol_zero_coeff_average_deviations}\:\includegraphics[width=1\columnwidth]{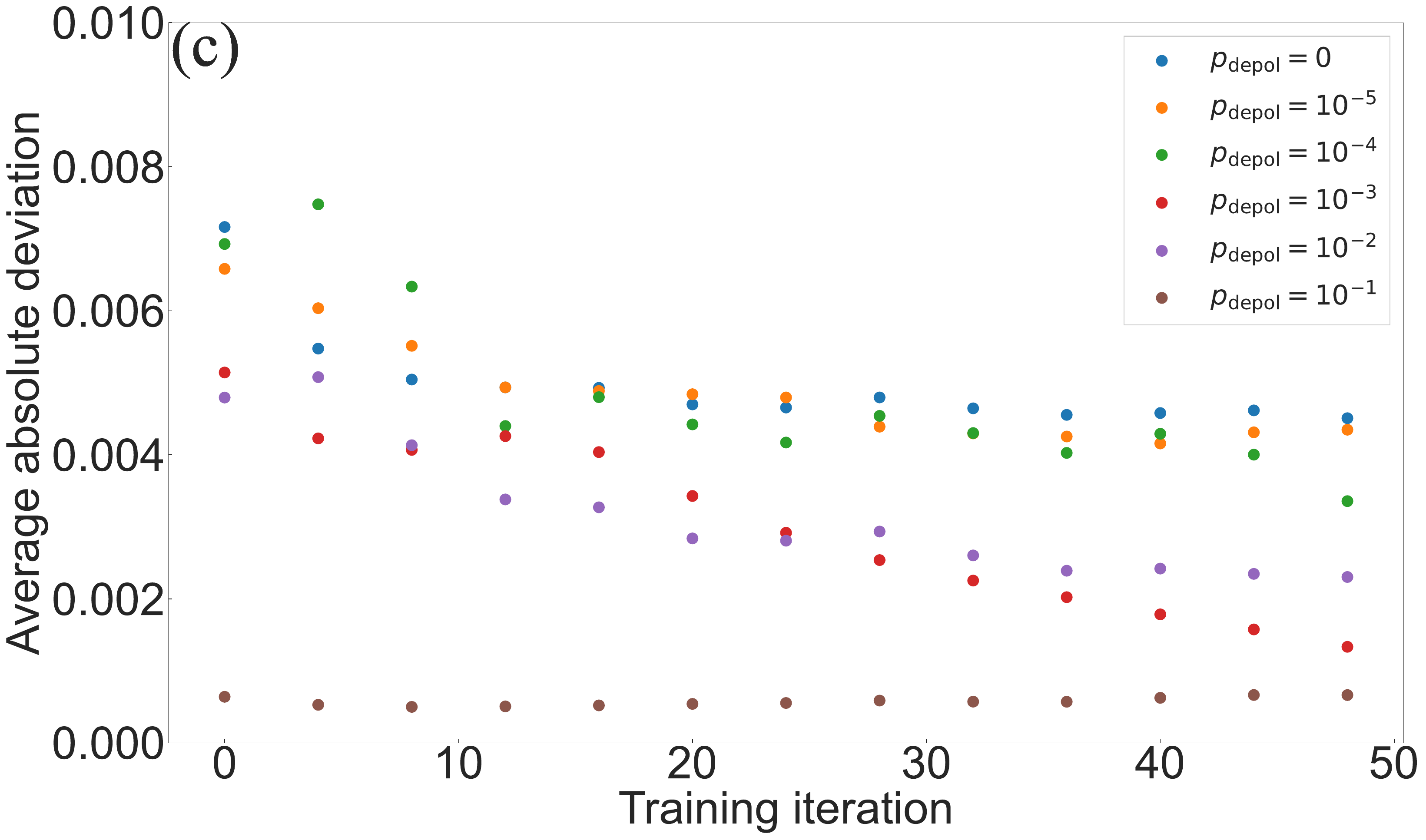}\:}
    \subfloat{\label{fig:2q_depol_nonzero_coeff_average_deviations}\:\includegraphics[width=1\columnwidth]{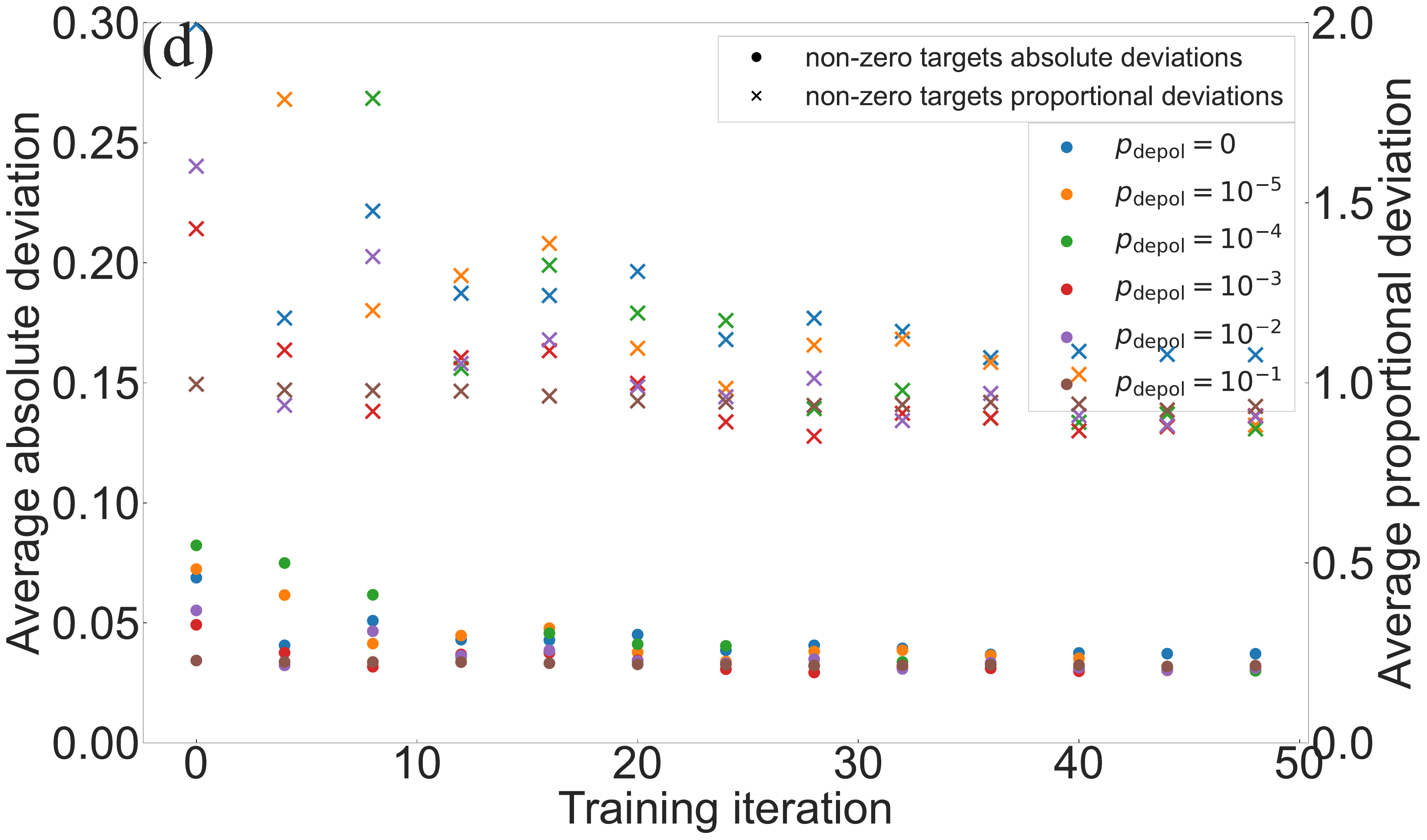}\:}
        \caption{(a) Test-set MSE versus the number of training samples. (b) Left panel: stochastic batch-wise MSE (lines). Right panel: mean squared gradient (scattered dots), also plotted against the number of training samples. (c) The mean absolute deviation of the sampled Fourier coefficients from their zero-valued targets, $\mathop{\mathbb{E}}\limits_{\bm{\omega}\in\Omega,\,c_{\bm{\omega}}=0}\bigl(|c_{\bm{\omega}}-c'_{\bm{\omega}}|\bigr)$. where $c_{\bm{\omega}}$ and $c'_{\bm{\omega}}$ denote the target and sampled Fourier coefficients, respectively. (d) The mean absolute deviation of the sampled Fourier coefficients from their non-zero target values (scattered dots, left axis), $\mathop{\mathbb{E}}\limits_{\bm{\omega}\in\Omega,\,c_{\bm{\omega}}\neq 0}\bigl(|c_{\bm{\omega}}-c'_{\bm{\omega}}|\bigr)$, together with the corresponding proportional deviation $\mathop{\mathbb{E}}\limits_{\bm{\omega}\in\Omega,\,c_{\bm{\omega}}\neq 0}\bigl(\bigl|\frac{c_{\bm{\omega}}-c_{\bm{\omega}}’}{c_{\bm{\omega}}}\bigr|\bigr)$ (scattered crosses, right axis). Across all frequencies, the deviations decrease with the number of training iterations yet eventually plateau at a non-zero constant, indicating a potential limitation in the expressive capacity of the model.}
    \label{fig: 2q depol channel coefficients analysis part 1}
\end{figure*}
\begin{figure*}[t!]
     \subfloat{\label{fig:2q_depol_coeff_std_frequency_L1}\:\includegraphics[width=1.8\columnwidth]{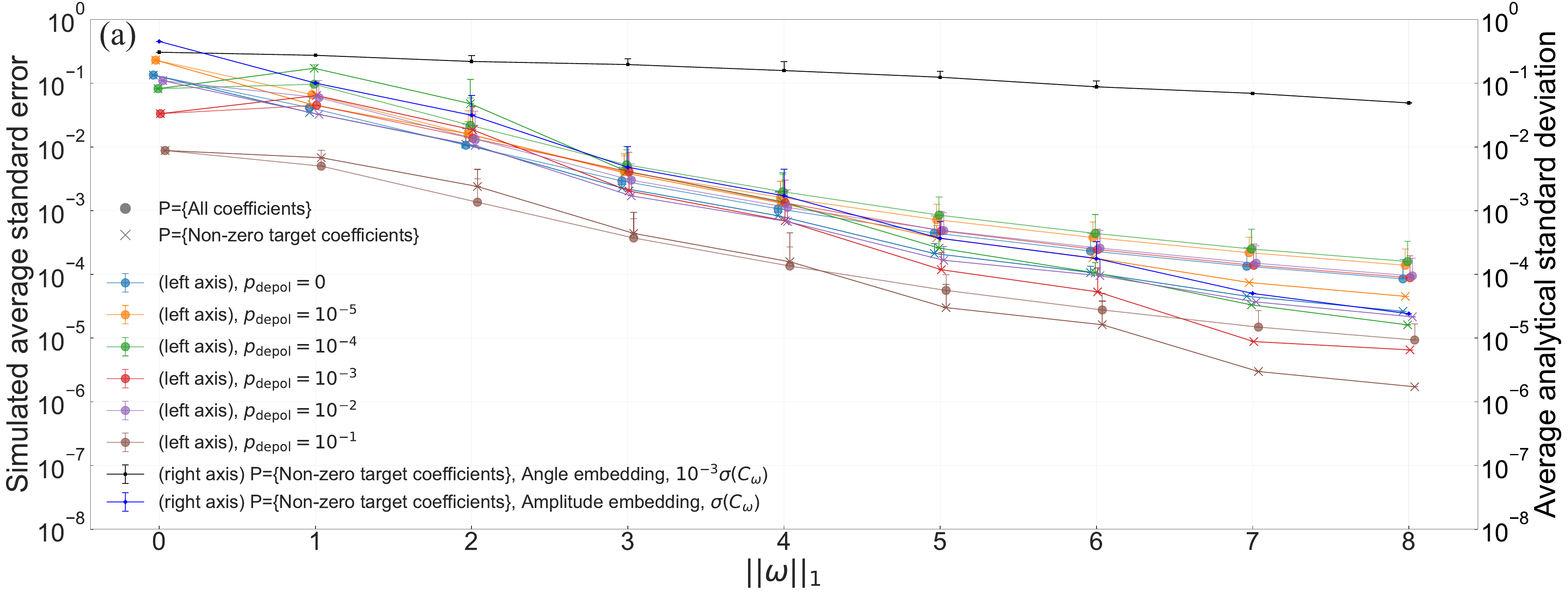}\:}\\
    \subfloat{\label{fig:2q_depol_coeff_std_frequency_L2}\:\includegraphics[width=1.8\columnwidth]{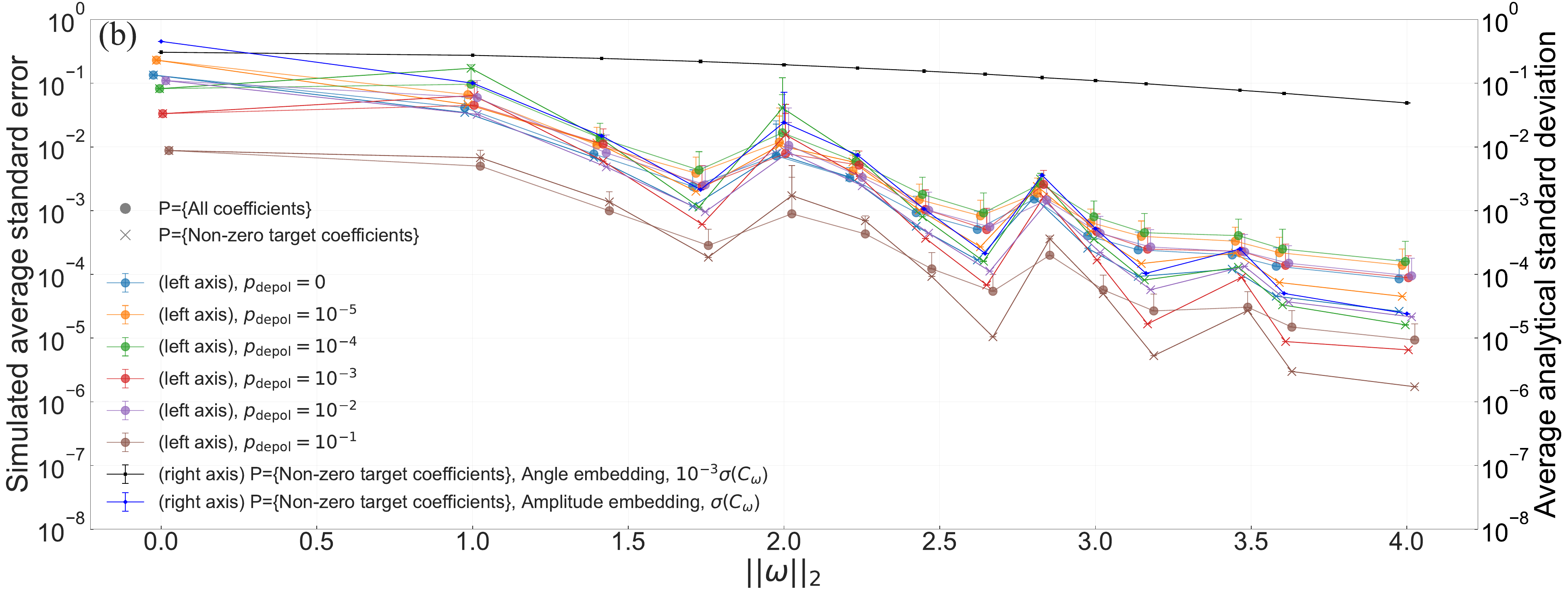}\:}
    \caption{
    (a) Left axis: empirical mean of the standard errors of the Fourier coefficients versus the $L_1$ norm of their frequency, $\|\bm{\omega}\|_1$. Each standard error is computed as $\sigma(c_{\bm{\omega}})=\sqrt{\mathbb{E}_{t\in T}\bigl[|c_{\bm{\omega}}|^2\bigr]-\bigl|\mathbb{E}_{t\in T}[c_{\bm{\omega}}]\bigr|^2}$, where $T$ denotes the set of training iterations.  After computing $\sigma(c_{\bm{\omega}})$ for each frequency individually, the results are grouped by the norm $\|\bm{\omega}\|_1$ and by membership in a set $P$, yielding the conditional mean $\mathbb{E}_{\bm{\omega}: \|\bm{\omega}\|_1 = \ell,\,\bm{\omega}\in P}\,[\sigma(c_{\bm{\omega}})]$.  Two choices of $P$ are considered: $P=\{\bm{\omega}:c_{\bm{\omega}}\neq 0\}$ (non-zero target coefficients only, shown as crosses) and $P=\Omega$ (all coefficients, shown as dots). For visual clarity and to avoid infinitely long error bars on the log-scale, only the upper half of each error bar is displayed. Results are shown for depolarizing noise strengths ranging from $p=0$ to $p=10^{-1}$, distinguished by colour. Right axis: standard deviations averaged over the same frequency groups, computed from the exact noiseless amplitude-embedding variance derived in \Cref{eq: amplitude embedding noiseless variance Tr(O)=0} and from the angle-embedding variance upper bound (scaled by $1/1000$ to fit the plot) given in \Cref{eq: angle embedding variance upper bound}. We note that the upper bound on the angle-embedded standard deviation is included only for comparison of the general trend, not for the exact magnitudes. (b) Same as (a), but with the $L_2$ norm $\|\bm{\omega}\|_2$ on the horizontal axis.}
    \label{fig: 2q depol channel coefficients analysis part 2}
\end{figure*}
This subsection discusses the role of different noise channels in modulating the Fourier coefficients of non-negative amplitude embedding VQC models, which is a topic that lacks exploration in current studies. We first consider a subset of general time-independent noise channels defined on a set of Kraus operators that are themselves unitaries. Namely,
\begin{equation}\label{eq: general unitary Kraus channel}
    \mathcal{E}(\rho)=\sum\limits_{k=1}^{K}p_kE_k\rho E_k^{\dagger},
\end{equation}
where $p_k$ is the probability of breaking evolving to the state $E_k\rho E_k^{\dagger}$ in each instance that satisfies Born rule $\sum_kp_k=1$. $E_k\in U(d)$ are the Kraus operators that are also unitaries, and $\sum_kE_kE_k^{\dagger}=\mathds{1}$ is trace-preseriving. To model such a channel as part of the VQC, we assume that noise acts after each gate in the quantum circuit that combines into the unitary $W(\bm{\theta})$ we defined previously \Cref{eq: amplitude encoding and unitary}. That is, the combined channel reads $\tilde{\Lambda_i}\coloneqq \mathcal{E}_i\circ \Lambda_i$ for each gate indexed by $i$ in the circuit, where $\Lambda_i(\rho)=W_i\rho W_i^{\dagger}$ and $\mathcal{E}_i$ acts on the same qubits as $\Lambda_i$. In fact, with this general noise model in mind, we can derive the following theorem:
\begin{figure}[thb!]
\includegraphics[width=1\columnwidth]{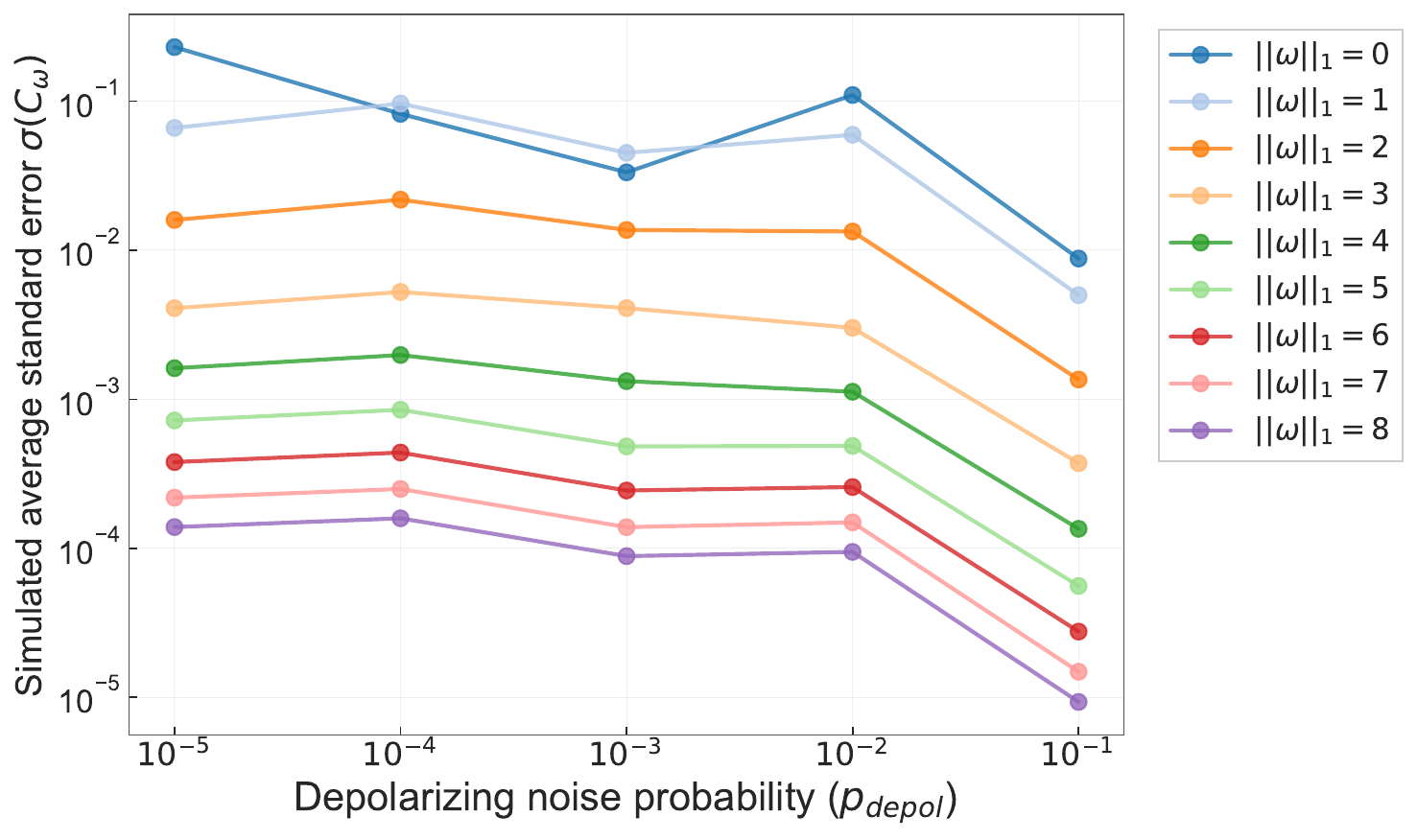}
    \caption{The empirical average of the Fourier coefficient standard errors calculated the same in \Cref{fig: 2q depol channel coefficients analysis part 2}, but plotted against the depolarizing noise probability, $p_{\text{depol}}$. Both axes are in the logarithmic scale. The colours denote the frequency $L_2$ norms.}
    \label{fig:2q_depol_coeff_std_vs_pdepol}
\end{figure}
\begin{theorem} (Moments of Fourier coefficients for amplitude-embedding VQCs under depolarizing noise)\label{theorem: depolarizing noise mean and variance}
Consider a VQC model defined as in \Cref{eq: f(x)}, with non-negative amplitude embedding satisfying $x_i\in[0,2\pi]$, and the parametrized unitary defined as in \Cref{eq: amplitude encoding and unitary}, where each gate in the circuit is followed by a noise channel as defined in \Cref{eq: general unitary Kraus channel}. Assuming $\text{Im}(\mathcal{W})$, $\mathcal{W}:\Theta\rightarrow \mathcal{U}$ within the unitary group $\mathcal{U}\subset\text{U}(d)$ is a 2-design, and the perturbed unitaries $\tilde{W}(\bm{\theta})_k,\tilde{W}(\bm{\theta})_k$ induced by any two distinct indices $k\ne k'$ corresponding to two different noise realizations have zero correlation, then the corresponding Fourier coefficients, as defined in \Cref{eq: amplitude embedding Fourier coefficients}, will have its mean and varaince
    \begin{widetext}
    \begin{equation}\label{eq: amplitude embedding noisy mean}
        \mathbb{E}_{\bm{\theta}}(\tilde{c}_{\omega}(\bm{\theta}))=\frac{\operatorname{tr}(O)}{d}\delta_{\bm{\omega}}^0,
    \end{equation}
    \begin{equation}\label{eq: amplitude embedding noisy variance}
        \mathbb{V}\text{ar}_{\bm{\theta}}(\tilde{c}_{\omega}(\bm{\theta}))=\frac{1}{d^2-1}\biggl[\left(\operatorname{tr}(O)^2\left(2-\frac{1}{d^2}\right)-\frac{\|O\|_F^2}{d}\right)\delta_{\bm{\omega}}^0+\left(\|O\|_F^2-\frac{\operatorname{tr}(O)^2}{d}\right)\frac{\mathrm{I}(\bm{\omega})}{(2\pi)^{2d}}\biggr]\left(\sum\limits_{k=1}^Kp_k^2\right)^{Q}
        -\frac{\operatorname{tr}\left(O\right)^2\delta_{\bm{\omega}}^0}{d^2},
    \end{equation}
    \end{widetext}
    where $Q$ is the number of times the channel ever acts on the system in $W(\bm{\theta})$, and 
    \begin{equation}
        \mathrm{I}(\bm{\omega})=\sum\limits_{i,j=1}^d\left|\int d^{(d)}V\frac{e^{-i\bm{\omega}\cdot\bm{x}}x_ix_j}{\|\bm{x}\|_2^2}\right|^2
    \end{equation}
    retains the same definition as in \Cref{theorem: noiseless mean and variance}. See proof in \Cref{appendix: noisy amplitude embedding proofs}.
\end{theorem}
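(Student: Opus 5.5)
The plan is to reduce the noisy model to a probabilistic mixture of noiseless models and then reuse the Weingarten computation behind \Cref{theorem: noiseless mean and variance}. With a unitary-Kraus channel inserted after every gate of $W(\bm{\theta})$, the full noisy channel is a convex combination of unitary channels. For each noise realisation $\bm{k}=(k_1,\dots,k_Q)$, interleaving the $E_{k_q}$ with the gates gives a perturbed unitary $\tilde{W}_{\bm{k}}(\bm{\theta})$, which occurs with probability $p_{\bm{k}}=\prod_q p_{k_q}$. By linearity of \Cref{eq: amplitude embedding Fourier coefficients} in the channel,
\begin{equation*}
\tilde{c}_{\bm{\omega}}(\bm{\theta})=\sum_{\bm{k}}p_{\bm{k}}\,c^{(\bm{k})}_{\bm{\omega}}(\bm{\theta}),
\end{equation*}
where $c^{(\bm{k})}_{\bm{\omega}}$ is the noiseless coefficient with $W$ replaced by $\tilde{W}_{\bm{k}}$.

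\textbf{Mean.} The $E_k$ are fixed unitaries, so left/right invariance of the Haar measure is inherited at the level of the 2-design. This means each $\tilde{W}_{\bm{k}}$ is again distributed as a 2-design, which I would justify by absorbing the $E_k$ into neighbouring gates, or simply assume as part of the hypothesis. The first-moment Weingarten formula then gives $\mathbb{E}[\tilde{W}^\dagger O\tilde{W}]=\operatorname{tr}(O)\mathds{1}/d$ for every $\bm{k}$. Summing with weights $p_{\bm{k}}$, which add to one, and using $\int e^{i\bm{\omega}\cdot\bm{x}}=(2\pi)^d\delta^0_{\bm{\omega}}$ as in the noiseless proof, yields \Cref{eq: amplitude embedding noisy mean}.

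\textbf{Second moment.} Expand
\begin{equation*}
\mathbb{E}|\tilde{c}_{\bm{\omega}}|^2=\sum_{\bm{k},\bm{k}'}p_{\bm{k}}p_{\bm{k}'}\,\mathbb{E}\bigl[c^{(\bm{k})}_{\bm{\omega}}\overline{c^{(\bm{k}')}_{\bm{\omega}}}\bigr].
\end{equation*}
Split this into the diagonal terms $\bm{k}=\bm{k}'$ and the off-diagonal terms $\bm{k}\ne\bm{k}'$.
\begin{itemize}
\item \emph{Diagonal terms.} Each one is exactly the noiseless second moment of \Cref{theorem: noiseless mean and variance}, computed by second-order Weingarten calculus. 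It produces the bracket involving $\operatorname{tr}(O)^2$, $\|O\|_F^2$ and $\mathrm{I}(\bm{\omega})$. The weights $\sum_{\bm{k}}p_{\bm{k}}^2=\prod_q\sum_k p_k^2=(\sum_k p_k^2)^Q$ give the suppression factor.
\item \emph{Off-diagonal terms.} Here I invoke the zero-correlation hypothesis, so each expectation factorises into a product of means. These are nonzero only at $\bm{\omega}=0$, where they equal $\operatorname{tr}(O)^2/d^2$. The off-diagonal weight is $1-(\sum_kp_k^2)^Q$.
\end{itemize}
Subtracting $|\mathbb{E}\tilde{c}_{\bm{\omega}}|^2$ and collecting the $\delta^0_{\bm{\omega}}$ terms should give \Cref{eq: amplitude embedding noisy variance}.

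The main obstacle is the bookkeeping of the $\delta^0_{\bm{\omega}}\operatorname{tr}(O)^2$ pieces. The off-diagonal contribution $(1-P)\operatorname{tr}(O)^2/d^2$, with $P=(\sum_k p_k^2)^Q$, has to be rewritten so that it merges into the displayed prefactor $\operatorname{tr}(O)^2(2-1/d^2)/(d^2-1)$ multiplying $P$, up to the leftover $-\operatorname{tr}(O)^2/d^2$. I would carry out this algebra carefully, checking the $P$-independent part against the noiseless formula and the $P$-linear part against the stated coefficient. If the stated form does not match exactly, I would check whether the authors' intended convention treats the cross terms differently and adjust that step.
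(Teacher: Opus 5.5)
Your route is the same as the paper's appendix proof: decompose into Kraus trajectories $\tilde W_{\bm{k}}$ with weights $p_{\bm{k}}$, split the double sum into diagonal and off-diagonal parts, apply the noiseless Weingarten result on the diagonal, factorise off the diagonal by zero correlation, and use $\sum_{\bm{k}}p_{\bm{k}}^2=P:=\bigl(\sum_k p_k^2\bigr)^Q$. The gap is the step you defer, ``collecting the $\delta^0_{\bm{\omega}}$ terms should give'' the displayed variance, together with your fallback of reinterpreting the cross terms. That step cannot succeed. Your own ingredients give
\[
\mathbb{E}|\tilde c_{\bm{\omega}}|^2=P\,\mathbb{E}|c_{\bm{\omega}}|^2+(1-P)\frac{\operatorname{tr}(O)^2}{d^2}\delta^0_{\bm{\omega}}
\quad\Longrightarrow\quad
\mathbb{V}\text{ar}(\tilde c_{\bm{\omega}})=P\,\mathbb{V}\text{ar}(c_{\bm{\omega}}).
\]
So the coefficient of $\operatorname{tr}(O)^2\delta^0_{\bm{\omega}}$ inside the bracket is $1/d^2$, and there is no separate $-\operatorname{tr}(O)^2\delta^0_{\bm{\omega}}/d^2$.

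The displayed formula exceeds this by $\frac{\operatorname{tr}(O)^2}{d^2}(2P-1)\delta^0_{\bm{\omega}}$. There is no freedom in the cross terms to absorb that difference: the zero-correlation hypothesis fixes them at exactly $(1-P)\operatorname{tr}(O)^2\delta^0_{\bm{\omega}}/d^2$. The displayed formula also fails two basic sanity checks:
\begin{itemize}
\item With a single identity Kraus operator ($P=1$) it must reduce to \Cref{theorem: noiseless mean and variance}, but it overshoots it by $\operatorname{tr}(O)^2/d^2$ at $\bm{\omega}=0$.
\item As $P\to0$ it tends to $-\operatorname{tr}(O)^2/d^2<0$ at $\bm{\omega}=0$, which is a negative variance.
\end{itemize}
So the correct conclusion is that the statement as written holds only when $\operatorname{tr}(O)=0$ or $\bm{\omega}\neq0$. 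That is the case \Cref{eq: amplitude embedding noisy variance Tr(O)=0} the paper actually uses. In general, what follows from the hypotheses is $\mathbb{V}\text{ar}(\tilde c_{\bm{\omega}})=P\,\mathbb{V}\text{ar}(c_{\bm{\omega}})$. Do not ``adjust'' a step to force agreement.

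For the record, the paper's appendix does not reach the displayed formula either. When combining terms it drops the $P$-independent $+\operatorname{tr}(O)^2\delta^0_{\bm{\omega}}/d^2$. It therefore ends with bracket coefficient $\operatorname{tr}(O)^2/d^2$ plus a spurious $-\operatorname{tr}(O)^2\delta^0_{\bm{\omega}}/d^2$, which disagrees with both the stated theorem and the correct result.

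A smaller point: ``absorbing the $E_k$ into neighbouring gates'' does not show that each $\tilde W_{\bm{k}}$ is a 2-design. Closure of a design under fixed left or right multiplication only absorbs a Kraus operator at either end of $W(\bm{\theta})$. The hypothesis constrains only the distribution of the unperturbed product, and says nothing in general about products with unitaries inserted between gates. This must be stated as an extra assumption, which the paper also makes tacitly (``the Haar measure is invariant for all perturbation realizations'').
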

We immediately see from the theorem above that the noise induces a prefactor $\left(\sum_kp_k^2\right)^{Q}<1$ that suppresses the variance of the Fourier coefficients exponentially with respect to the number of times the channel acts on the quantum system. Usually, one can safely assume $Q=O\left(\text{poly}(nL)\right)$. That is, the number of times the channel acts on the system scales polynomially with respect to the product of the number of qubits and the depth of the VQC. Adopting the same simplification $O\in\mathcal{P}_n$ as in \Cref{eq: amplitude embedding noiseless mean Tr(O)=0,eq: amplitude embedding noiseless variance Tr(O)=0}, we can further express \Cref{eq: amplitude embedding noisy mean,eq: amplitude embedding noisy variance} as
\begin{equation}\label{eq: amplitude embedding noisy mean Tr(O)=0}
     \mathbb{E}_{\bm{\theta}}(c_{\bm{\omega}}(\bm{\theta}))=0
\end{equation}
\begin{equation}\label{eq: amplitude embedding noisy variance Tr(O)=0}
        \mathbb{V}\text{ar}_{\bm{\theta}}(c_{\bm{\omega}}(\bm{\theta}))=\frac{d}{d^2-1}\biggl[\frac{\mathrm{I}(\bm{\omega})}{(2\pi)^{2d}}-\frac{1}{d}\delta_{\bm{\omega}}^0\biggr]\left(\sum\limits_{k=1}^Kp_k^2\right)^{Q}.
\end{equation}
Evidently, when a noise channel with general unitary Kraus operators is taken into consideration, the sole effect is the appearance of a decay factor that multiplies the variance, which exponentially suppresses the variance as either the number of noise applications $Q$ grows.

In addition to the foregoing analytical treatment, we also complement it with numerical simulations of noisy VQC. Among available options that satisfy \Cref{eq: general unitary Kraus channel}, Pauli channels constitute the most important and most thoroughly studied class of noise models in quantum computing. Due to their highly symmetric structure and applications in QEC \cite{surface_codes,qec_lattice_surgery}, arbitrary noise processes can be mapped into Pauli errors upon syndrome measurement. Hence, we choose the depolarizing channel at different strengths $p$ to be analyzed in this subsection, such that
\begin{equation}\label{eq: depol channel}
    \mathcal{E}(\rho)=(1-p)\rho+\frac{p}{3}\left(X\rho X+Y\rho Y+Z\rho Z\right).
\end{equation}
In this simulation, we adopt the same general setup and hyperparameters as in the previous demonstration of vanishing coefficient at zero frequency, such as the definition of variational circuit as in \Cref{eq: variational circuit definition for simulation}, choice of observable, learning rate, ADAM optimizer, and cost functions. However, for simplicity, we restrict the batch size to 100 data points for a total of 50 batches. The target functions are deliberately constructed such that selected coefficients are set to equal, non-zero values, while the rest remain strictly at zero. 

In \Cref{fig: 2q depol channel coefficients analysis part 1}, we demonstrate the key diagnostics of the VQC training process, including the MSE of both the stochastic training and testing datasets, its gradient norms squared, the corresponding average deviations $\mathbb{E}_{c_{\bm{\omega}}}|c_{\bm{\omega}}-c_{\bm{\omega}'}|$, as well as proportional average deviations $\mathbb{E}_{c_{\bm{\omega}}}\left|\frac{c_{\bm{\omega}}-c_{\bm{\omega}'}}{c_{\bm{\omega}}}\right|$ for both zero and non-zero target Fourier coefficients values. From \Cref{fig:2q_depol_MSE}, it is clear that the higher the depolarizing strength, the more the MSE and its gradients are suppressed, agreeing with the prediction about noise-induced barren plateaus \cite{noise_induced_barren_plateaus}. Nevertheless, the model gradually converges to the target function,  evidenced by the fact that the average deviations of the Fourier coefficients from their target values decay as training proceeds, shown in \Cref{fig:2q_depol_nonzero_coeff_average_deviations,fig:2q_depol_zero_coeff_average_deviations}. More interestingly, we observe that regardless of a zero or non-zero target coefficient, the average deviations appear to be smaller when the depolarizing strength is actually greater. This is because the target coefficient values are sufficiently small. For most target coefficients, suppression of the Fourier coefficients due to depolarizing noise effectively pushes them towards the `correct' directions. However, we still recognize that, despite clear signs of learning in the first several iterations, the average deviation quickly asymptotes, even when $p=0$, indicating that the model itself still has limitations in expressivity. 

\begin{figure*}[thb!]
\centering
\subfloat{\label{fig:2q_non_int_freqs_depol_nonzero_coeff_average_deviations}\raisebox{0pt}{\includegraphics[width=.97\columnwidth]{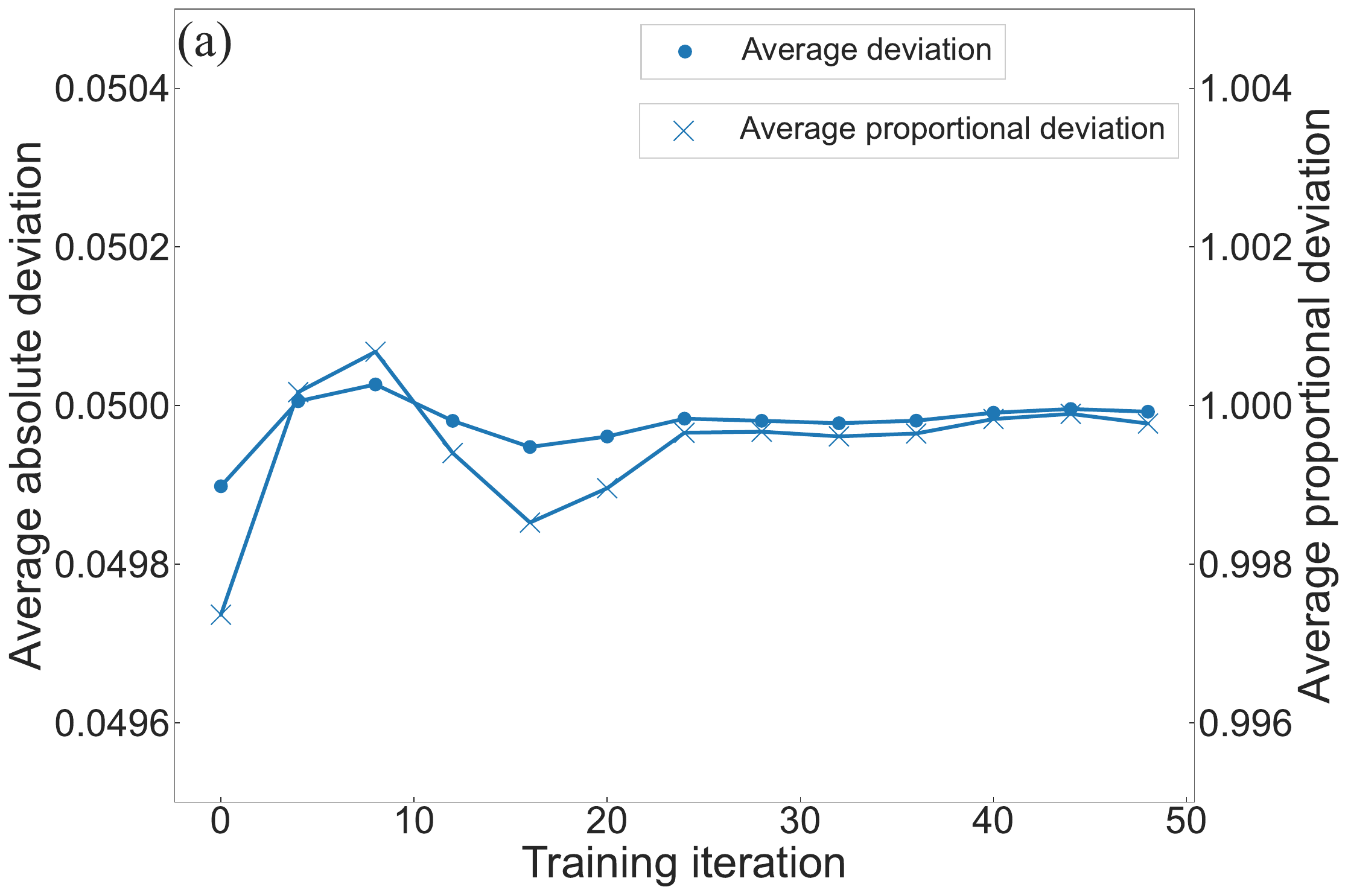}}}\subfloat{\label{fig:2q_non_int_freqs_depol_zero_coeff_average_deviations}\raisebox{0pt}{\includegraphics[width=1\columnwidth]{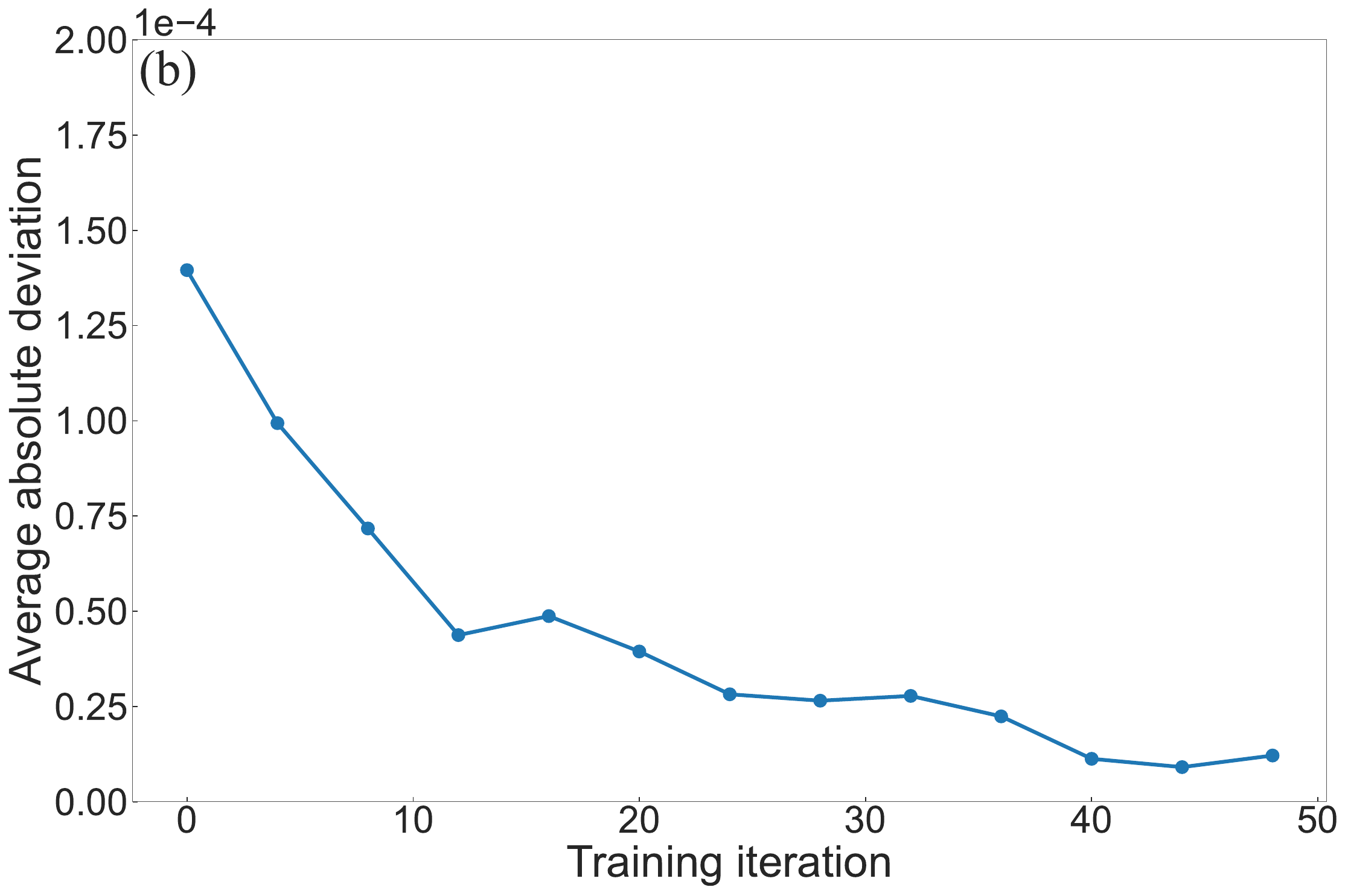}}}
    \caption{(a) The average deviation, and average of the proportional deviation from non-zero target Fourier coefficients for the model same as in \Cref{subsec: simulations}, but trained with respect to a function, where the $y$ axis is zoomed in to articulate the changes due to modulating parameters. (b) is the deviation of the Fourier coefficients from zero target values.}
    \label{fig: non int freqs plots}
\end{figure*}

To address this, we present the scaling of the standard deviation as a function of the frequency norms derived both from analytical expressions in \Cref{eq: amplitude embedding noiseless variance Tr(O)=0,eq: amplitude embedding noisy variance Tr(O)=0} and from numerical simulations, shown in \Cref{fig: 2q depol channel coefficients analysis part 2}. Compared with the analytical scaling of the amplitude embedding standard deviations, the simulation curves show good agreement with the analytical predictions, in both overall magnitude and the detailed exponentially decaying trend. In particular, for \Cref{fig:2q_depol_coeff_std_frequency_L2}, the curves from the simulation follow almost the same zig-zag pattern as the analytical outcomes, regardless of the presence of noise, despite differences in magnitudes. This alignment with the prediction also demonstrates the credibility and versatility of the premises made: Even if the analytical result is averaged over the full parameter space, whereas the simulation only averages over the parameters visited along the training trajectory, the agreement in the plot implies the impact on the variance is very limited. Moreover, it is observed that standard errors for the non-zero target coefficients consistently underperform the global average of all coefficients. Since the vast majority of the target coefficients are zero, this phenomenon is likely due to non-zero target Fourier coefficients having more contribution to the loss function, which converges faster and fluctuates less in later iterations. This is somewhat reflected in \Cref{fig:2q_depol_nonzero_coeff_average_deviations} that the most deviations from non-zero targets stop varying substantially after 25 iterations, whereas it is not the case for zero targets in \Cref{fig:2q_depol_zero_coeff_average_deviations}. This also implies the Fourier coefficients with non-zero target values have potentially shorter trajectories in the parameter spacetime. In \Cref{appendix: supplementary plots for depol channel}, we include additional plots detailing the evolution of the Fourier coefficients in this simulation, where frequencies with different norms are plotted with different colours to emphasize how the variance scales with frequency. On the other hand, as expected, the curves from amplitude-embedding simulation results match poorly with the analytical upper bound (scaled by a factor of 1/1000 to fit into the plot) for the angle embedding standard deviation. Although the scaling of the standard deviation for both angle and amplitude embedding can be characterized as exponentially decaying with frequency norms in a broad sense, the differences are still clear. Specifically, on the log-scale plot, the overall trend of the amplitude embedding standard deviation follows nearly a straight line, especially for $L_1$ norm, which decays faster than the case of angle-embedding given the same frequency norms. However, as the frequency increases, the decay of the angle embedding standard deviation upper bound starts to accelerate, as evidenced by the downward curved line on the plot, indicating that the model would progressively become less expressive at high frequencies. Putting it all together, we show that, compared to angle-embedding, a VQC model with non-negative amplitude embedding exhibits less expressivity for low-frequency data, while it becomes relatively more expressive at high frequencies. We remark again that the curves for the angle-embedding are derived from an upper bound, meaning that this is already a worst-case comparison for amplitude embedding in terms of expressivity, not even to mention that the spectrum cuts off for angle-embedding VQCs when $|\omega_k|=L$. Moreover, in \Cref{fig:2q_depol_coeff_std_vs_pdepol}, we plot the average standard errors of Fourier coefficients versus the depolarizing noise strength. In the log-log scale, the plot exhibits a non-linear, accelerating decay, which agrees with the prediction from \Cref{eq: amplitude embedding noisy variance Tr(O)=0} that the scaling of the variance (hence standard deviation) with respect to the depolarizing strength $p$ is 
$\mathbb{V}\text{ar}_{\bm{\theta}}(\tilde{c}_{\bm{\omega}}(\bm{\theta}))=O\left(\left((1-p)^2+\frac{p^2}{3}\right)^{Q}\right)=O\left(\text{poly}(p)^Q\right)$. Beyond depolarizing noise, we also examined the performance of the model and the scaling of the Fourier coefficients under various other noise channels described in the supplementary material \cite{Sup_materials}, such as amplitude damping, dephasing, and $ZZ$-coupled idling noise. Interestingly, despite different levels of suppression of the Fourier coefficients, the overall behaviour of the scaling with respect to the frequency norms remain very similar to \Cref{fig: 2q depol channel coefficients analysis part 2}. Hence, it indicates that regardless of noise channel structure, the scaling of the Fourier coefficient variance remains invariant. This implies the high stability in the expressivity of amplitude embedding VQCs.

\subsection{Truncated Fourier series with non-integer frequencies}\label{subsec: non periodic functions}
Technically, if the encoding Hamiltonian is chosen appropriately, it is in general possible to train a model with a non-integer frequency spectrum in angle embedding. However, it is not possible to decompose the function into arbitrary frequencies within a given VQC layout. As a result, in addition to the discretized, integer-valued frequencies, we also demonstrate that Fourier coefficients can be modulated for a non-integer-valued frequency spectrum. In \Cref{fig: non int freqs plots}, as well as the additional plots \Cref{fig: non int freqs plots continue} in \Cref{appendix: supplementary plots for non-int freqs}, we examined the performance of the same 2-qubit VQC model used in \Cref{subsec: simulations}, but fitting with a function whose decomposed frequencies are non-integer. For the target Fourier coefficients that are zero, the coefficients converge smoothly towards zero, as shown in \Cref{fig:2q_non_int_freqs_depol_zero_coeff_average_deviations}. However, from \Cref{fig:2q_non_int_freqs_depol_nonzero_coeff_average_deviations}, the deviation from the non-zero target Fourier coefficients is almost unchanged during training, meaning that it stays near 0 all the time irrespective of highly non-zero target values. Despite being highly inexpressive, this does not disprove that the non-integer frequency Fourier coefficients of amplitude embedding VQCs are not modulating. As illustrated in \Cref{fig:2q_non_int_freqs_depol_nonzero_coeff_average_deviations}, by zooming in on the scale, the average deviation from non-zero target Fourier coefficients is still actively being modulated as the parameters iterate. Furthermore, as shown by the decaying MSE in \Cref{fig:2q_non_int_freqs_depol_MSE}, both from the test and training batches, the model exhibits clear behaviour of convergence. Hence, just like for integer frequencies, although the expressivity for non-integer frequencies is still poor at high frequencies, it is indeed non-zero and can be modulated by the parameters.

\section{Discussion}\label{sec: Discussion}
In summary, while Fourier analysis of VQC models has been confined to angle-embedding protocols in other studies, this paper presents a detailed analysis of the expressivity of VQC models under amplitude embedding. We first argue that the range of values each feature can take has a huge impact on the model performance, even within a single encoding scheme such as amplitude embedding. Specifically, we conduct careful analytical derivations with two common examples, namely \textit{symmetric} and \textit{non-negative} domains. We find that the Fourier coefficient at zero frequency is not identically zero when the input features are encoded onto a non-negative domain, whereas it vanishes identically under a symmetric domain. This marked difference in the expressivity of the zeroth Fourier coefficient is further corroborated by the stark contrast in model performance: The model using non-negative encoding has its zeroth coefficient converge to the target value, while the coefficient in the model with symmetric encoding remains pinned at zero. This phenomenon of a vanishing coefficient at zero frequency provides a deeper understanding of amplitude-embedded VQC models on what encoding domains should be chosen if one wants to maximize the expressivity and hence trainability of the model.

Having established the utility of non-negative encoding for expressivity, we proceed to derive the mean and variance of the Fourier coefficients of the non-negative amplitude-embedding VQC model. Under idealized assumptions that the ensemble of unitaries generated by the parameter space forms a 2-design and that a traceless Pauli observable is employed, we find that the variance of the Fourier coefficients decays exponentially in the magnitude of the frequency. In addition, when a noise channel with probabilities $\{p_k\}$ for unitary Kraus operators is incorporated, an additional factor less than one, is multiplied by the variance of the Fourier coefficient, leading to a general suppression of the coefficient variance and expressivity of the coefficient. This behaviour is also reflected in our simulations across different noise strengths of a depolarizing channel, where the variance, and consequently the standard error of the sampled Fourier coefficients, decays exponentially as the frequency increases, and polynomially as the noise strength $p$ increases. The agreement between simulation and theory reinforces each other and consolidates the validity of the premises made before derivations, including assumptions about the 2-design, and independence between distinct perturbed unitaries, even for circuits of only 2 qubits and 30 variational layers. Furthermore, despite both being exponentially decaying with respect to frequency magnitudes at high levels, the subtle difference in the decaying pattern reveals that amplitude-embedded models are likely to have their expressivity decay faster compared to the vanilla angle-embedded models at small frequencies. However, the situation would be reversed at high frequencies, as the decay of angle-embedded coefficient variance accelerates. As a result, this provides a better understanding of how different encoding strategies have expressivity, and hence trainability emphasis on different frequencies. Yet, the overall decay at an exponential rate still governs at a high level. Moreover, we find that the Fourier coefficients with the amplitude-embedding model can still be modulated even at non-integer frequencies, which is forbidden in the case of angle-embedding. Thereby, this highlights its utility in fitting functions with arbitrary frequencies.


\begin{acknowledgements}
The authors greatly acknowledge the computational resources provided by the University of Melbourne's Research Computing Services. H.K. was supported by the Australian Government Research Training Program Scholarship.
\end{acknowledgements}

\bigskip
\noindent
\textbf{Data Availability Statement}\par
\noindent
The source code and raw data used in the current study will be openly available at \cite{repo} with the final accepted version of the paper.

\bigskip
\noindent
\textbf{Author Contributions}\par
\noindent
HK conceived the original idea. MU and MS supervised the project. HK conducted all analytical derivations, developed the QML simulation framework, and carried out all experiments. HK, MU, and MS analyzed the data. HK wrote the manuscript with input from MU and MS.
\renewcommand{\href}[2]{#2}
\bibliography{main}

\onecolumngrid
\appendix
\crefalias{section}{appendix}
\section{Angle embedding related theorem proofs}
\subsection{Proof of \Cref{corollary: multi dim freq decaying redundancies}}\label{appendix: angle embedding proofs}
\begin{proof}
By definition, $|R(\bm{\omega})|$ counts the number of index pairs $(\bm{J}, \bm{J}')$ that produce the target frequency vector $\bm{\omega}$.  Since each component $\omega_k$ is determined independently by the eigenvalues of the encoding Hamiltonian acting on qubit~$k$, we have
\begin{equation}
\begin{aligned}
|R(\bm{\omega})|
&= \sum_{\bm{J}, \bm{J}'} \prod_{k=1}^{d} \delta\bigl(\omega_k - (\Lambda^{(k)}_{\bm{J}} - \Lambda^{(k)}_{\bm{J}'})\bigr) \\[4pt]
&\le \prod_{k=1}^{d} \sum_{\bm{J}, \bm{J}'} \delta\bigl(\omega_k - (\Lambda^{(k)}_{\bm{J}} - \Lambda^{(k)}_{\bm{J}'})\bigr) \\[4pt]
&= \prod_{k=1}^{d} \binom{2L}{L - |\omega_k|} \\[4pt]
&= O\bigl(\exp(-\|\bm{\omega}\|_1)\bigr),
\end{aligned}
\end{equation}
where the inequality follows from the elementary set-theoretic bound $|\bigcap_k A_k| \le \prod_k |A_k|$ with $A_k = \{(\bm{J},\bm{J}') : \omega_k = \Lambda^{(k)}_{\bm{J}} - \Lambda^{(k)}_{\bm{J}'}\}$, and the second equality invokes the one-dimensional redundancy formula from~\Cref{eq: angle embedding variance decay} applied to each qubit individually (each qubit contributes $L$ layers of data reuploading, giving the binomial coefficient $\binom{2L}{L-|\omega_k|}$).

For the multi-dimensional frequency spectrum, following the notation of Ref.~\cite{Fourier_analysis_expressivity_as_redundancies} (Appendices~A.1 and~A.1.1), the Fourier coefficient can be expanded as
\begin{equation}
c_{\bm{\omega}} = \sum_{\substack{\bm{J},\bm{J}' \in R(\bm{\omega}),\\[2pt] k,k'}} 
W_{j'_1 0}^{(2)*} W_{j'_2 j'_1}^{(1)*} \cdots W_{k' j'_{L}}^{(L+1)*} \, O_{k'k} \, W_{k j_{L}}^{(L+1)} \cdots W_{j_2 j_{1}}^{(2)} W_{j_1 0}^{(1)}.
\end{equation}
Applying the identical derivation as in Ref.~\cite{Fourier_analysis_expressivity_as_redundancies}---which relies solely on the 2-design property of the trainable layers and the pairwise decorrelation of distinct index pairs---yields
\begin{equation}
\operatorname{Var}_{\bm{\theta}}(c_{\bm{\omega}}(\bm{\theta}))
\le O\!\left(\alpha\,\frac{|\widetilde{R}(\bm{\omega})|}{2^n}\right)
\le O\bigl(\exp(-\|\bm{\omega}\|_1)\bigr),
\end{equation}
where $|\widetilde{R}(\bm{\omega})| = |R(\bm{\omega})| / 2^{2n}$ is the normalized redundancy.  This completes the proof.
\end{proof}

\subsection{Proof of \Cref{theorem: zero coefficient}}\label{appendix: zero coefficient proof}
\begin{proof}
At zero frequency, the Fourier coefficient given in \Cref{eq: amplitude embedding Fourier coefficients} reduces to
\begin{equation}\label{eq: zero coefficient}
    c_{\bm{0}}(\bm{\theta})=\frac{1}{(2\pi)^d}\sum_{i,i'\in[d]}\int_{D}d^{(d)}V\,\frac{x_ix_{i'}\,W(\bm{\theta})^{\dagger}_{i'j'}O_{j'j}W(\bm{\theta})_{ji}}{\|\bm{x}\|_2^2},
\end{equation}
where $D=[-R/2,R/2]^{d}$ (symmetric domain) or $D=[0,R]^d$ (non-negative domain). In either case, we split the double sum over $i,i'$ into the diagonal ($i=i'$) and off-diagonal ($i\neq i'$) contributions and treat them separately.

For $i=i'$, we have
\begin{equation}\label{eq: i=i' proof equation 1}
\begin{aligned}
&\frac{1}{(2\pi)^d}\sum_{i=i'\in[d]}\int_{D}d^{(d)}V\,\frac{x_ix_{i'}\,W(\bm{\theta})^{\dagger}_{i'j'}O_{j'j}W(\bm{\theta})_{ji}}{\|\bm{x}\|_2^2}\\[4pt]
    =&\frac{1}{(2\pi)^d}\int_{D}d^{(d)}V\sum_{i\in[d]}\frac{x_i^2\,W(\bm{\theta})^{\dagger}_{ij'}O_{j'j}W(\bm{\theta})_{ji}}{\|\bm{x}\|_2^2}.
\end{aligned}
\end{equation}
Define $\mathcal{X}^2(\bm{x})$ as the diagonal matrix with entries $\mathcal{X}^2(\bm{x})_i = x_i^2/\|\bm{x}\|_2^2$. Then
\begin{equation}
\begin{aligned}
\sum_{i\in[d]}\frac{x_i^2\,W(\bm{\theta})^{\dagger}_{ij'}O_{j'j}W(\bm{\theta})_{ji}}{\|\bm{x}\|_2^2}
&= \operatorname{Tr}\bigl(\mathcal{X}^2(\bm{x})\,W(\bm{\theta})^{\dagger}O\,W(\bm{\theta})\bigr) \\[4pt]
&= \sum_{i} \mathcal{X}^2(\bm{x})_i \,\operatorname{diag}(\tilde{O})_{i},
\end{aligned}
\end{equation}
where $\tilde{O}=W(\bm{\theta})^{\dagger}O\,W(\bm{\theta})$, and the second equality uses the fact that $\mathcal{X}^2(\bm{x})$ is diagonal.

For $i\neq i'$, we obtain
\begin{equation}\label{eq: i not i' proof equation 1}
\begin{aligned}
&\frac{1}{(2\pi)^d}\sum_{i\neq i'\in[d]}\int_{D}d^{(d)}V\,\frac{x_ix_{i'}\,W(\bm{\theta})^{\dagger}_{i'j'}O_{j'j}W(\bm{\theta})_{ji}}{\|\bm{x}\|_2^2}\\[4pt]
    =&\frac{1}{(2\pi)^d}\int_{D}d^{(d)}V\sum_{i>i'\in[d]}\frac{x_ix_{i'}}{\|\bm{x}\|_2^2}\Bigl(W(\bm{\theta})^{\dagger}_{i'j'}O_{j'j}W(\bm{\theta})_{ji} + \mathrm{c.c.}\Bigr)\\[4pt]
    =&\frac{1}{(2\pi)^d}\int_{D}d^{(d)}V\sum_{i>i'\in[d]}\frac{x_ix_{i'}}{\|\bm{x}\|_2^2}\bigl(\tilde{O}_{i'i} + \mathrm{c.c.}\bigr),
\end{aligned}
\end{equation}
where in the second line we paired the terms with indices $i\leftrightarrow i'$ into a single sum over $i>i'$, noting that the two summands are mutual complex conjugates. Since $\tilde{O}_{ii'}=\tilde{O}_{i'i}^{*}$, \cref{eq: i not i' proof equation 1} simplifies to
\begin{equation}\label{eq: i not i' proof equation 2}
    \frac{1}{(2\pi)^d}\int_{D}d^{(d)}V\sum_{i>i'\in[d]}\frac{2x_ix_{i'}}{\|\bm{x}\|_2^2}\,\operatorname{Re}\bigl(\tilde{O}(\bm{\theta})_{i'i}\bigr).
\end{equation}

\medskip
\noindent\textbf{Symmetric domain.}
The observable $O$ is traceless, and the trace is invariant under unitary conjugation, so that $\operatorname{Tr}\bigl(W(\bm{\theta})^{\dagger}O\,W(\bm{\theta})\bigr) = \operatorname{Tr}(O) = 0$, hence
\begin{equation}\label{eq: traceless O condition for proofs}
    \sum_i \operatorname{diag}(\tilde{O})_i = 0.
\end{equation}
Consequently, the diagonal contribution~\eqref{eq: i=i' proof equation 1} evaluates to
\begin{equation}
\begin{aligned}
&\frac{1}{(2\pi)^d}\int_{D}d^{(d)}V\sum_{i\in[d]}\mathcal{X}^2(\bm{x})_i \,\operatorname{diag}(\tilde{O})_i \\[4pt]
&\qquad = \frac{1}{(2\pi)^d}\sum_{i\in[d]}\operatorname{diag}(\tilde{O})_i \int_{D}d^{(d)}V\,\frac{x_i^2}{\|\bm{x}\|_2^2} \\[4pt]
&\qquad = \frac{1}{(2\pi)^d}\sum_{i\in[d]}\operatorname{diag}(\tilde{O})_i\; I_d \\[4pt]
&\qquad = 0.
\end{aligned}
\end{equation}
In the second line we noted that the integral is independent of the index label $i$ and defined $I_d = \int_{D}d^{(d)}V\,x_i^2/\|\bm{x}\|_2^2$; the third line then invokes \cref{eq: traceless O condition for proofs}.

For the off-diagonal part, observe that on the symmetric domain $D=[-R/2,R/2]^{d}$, the reflection $\bm{x} \mapsto -\bm{x}$ maps the integrand of~\eqref{eq: i not i' proof equation 2} to its negative, since $x_ix_{i'}\,d^{(d)}V \mapsto (-x_i)(-x_{i'})\,d^{(d)}V = x_ix_{i'}\,d^{(d)}V$ but the integration region is symmetric. More precisely, for every point $(x_i,x_{i'})$ in the domain there exists the reflected point $(-x_i,-x_{i'})$ such that the integrand changes sign, while the integration measure remains unchanged. Hence the off-diagonal contribution~\eqref{eq: i not i' proof equation 2} vanishes identically. Combining the diagonal and off-diagonal results, we conclude that $c_{\bm{0}}(\bm{\theta}) = 0$ strictly for the symmetric domain.

\medskip
\noindent\textbf{Non-negative domain.}
The diagonal ($i=i'$) contribution is again zero, following exactly the same argument as for the symmetric case, since the integral $I_d$ remains invariant under permutations of the coordinate labels. For the off-diagonal ($i\neq i'$) contribution, the integrand in~\eqref{eq: i not i' proof equation 2} is everywhere non-negative ($2x_ix_{i'}/\|\bm{x}\|_2^2 \ge 0$ on $D=[0,R]^d$), hence
\begin{equation}
    \frac{1}{(2\pi)^d}\sum_{i>i'\in[d]}\int_{D}d^{(d)}V\,\frac{2x_ix_{i'}}{\|\bm{x}\|_2^2} = \text{constant} > 0.
\end{equation}
Adding the diagonal and off-diagonal parts, we obtain $c_{\bm{0}}(\bm{\theta}) = \text{constant} \times \sum_{i>i'} \operatorname{Re}\bigl(\tilde{O}(\bm{\theta})_{i'i}\bigr) \neq 0$, since the sum of the real parts of the off-diagonal entries of a Hermitian matrix can take any real value.
\end{proof}

\subsection{Proof of \Cref{theorem: noiseless mean and variance}}\label{appendix: noiseless mean and variance proof}
\begin{proof}
Without loss of generality, and for ease of migration to the noisy case proof, we can first rewrite \Cref{eq: f(x)} as
\begin{equation}
\begin{aligned}
    f_{\bm{\theta}}(\bm{x})&=\operatorname{Tr}\left(OU(\bm{x},\bm{\theta})^{\dagger}\ket{0}\bra{0}U(\bm{x},\bm{\theta})\right)\\
    &=\operatorname{Tr}\left(OW(\bm{\theta})^{\dagger}\rho_{\bm{x}}W(\bm{\theta})\right),
\end{aligned}
\end{equation}
where $\rho_{\bm{x}}=\ket{\bm{x}}\bra{\bm{x}}=\sum_{ij=1}^{d}\frac{x_ix_j\ket{i}\bra{j}}{\|\bm{x}\|_2^2}$ is the initial state density matrix after feature encoding. Adopting Feynman's path integral that expands $\rho_{\bm{x}}$ into standard basis, it can be further expanded as
\begin{equation}
        \rho_{\bm{x}}=\sum\limits_{i,j=1}^{d}\operatorname{Tr}(\rho_{\bm{x}}e_{ij})e_{ij},
\end{equation}
where $e_{ij}=\ket{i}\bra{j}$. Hence,
\begin{equation}
    \begin{aligned}
        &f_{\bm{\theta}}(\bm{x})=\sum\limits_{i,j=1}^{d}\operatorname{Tr}\left(OW(\bm{\theta})^{\dagger}e_{ij}W(\bm{\theta})\right)\operatorname{Tr}(\rho_{\bm{x}}e_{ij})\\
        &=\sum\limits_{i,j,i',j'=1}^{d}\operatorname{Tr}\left(Oe_{i'j'}\right)\operatorname{Tr}\left(e_{i'j'}W(\bm{\theta})^{\dagger}e_{ij}W(\bm{\theta})\right)\operatorname{Tr}(\rho_{\bm{x}}e_{ij}).
    \end{aligned}
\end{equation}
Therefore, \Cref{eq: amplitude embedding Fourier coefficients}, and thus the corresponding first moment of the Fourier coefficient, becomes
\begin{equation}\label{eq: Fourier coefficients first moment}
\begin{aligned}
    c_{\bm{\omega}}(\bm{\theta})&=\frac{1}{(2\pi)^d}\sum\limits_{i,j,i',j'}\int d^{(d)}Ve^{-i\bm{\omega}\cdot\bm{x}}\operatorname{Tr}\left(Oe_{i'j'}\right)\operatorname{Tr}\left(e_{i'j'}W(\bm{\theta})^{\dagger}e_{ij}W(\bm{\theta})\right)\operatorname{Tr}(\rho_{\bm{x}}e_{ij})\\
    \Rightarrow  \mathop{\mathbb{E}}\limits_{\bm{\theta}\sim\Theta}\left(c_{\bm{\omega}}(\bm{\theta})\right)&=\frac{1}{(2\pi)^d}\sum\limits_{i,j,i',j'}\int d^{(d)}Ve^{-i\bm{\omega}\cdot\bm{x}}\operatorname{Tr}\left(Oe_{i'j'}\right)\operatorname{Tr}\left(e_{i'j'}\mathop{\mathbb{E}}\limits_{\bm{\theta}\sim\Theta}\left[W(\bm{\theta})^{\dagger}e_{ij}W(\bm{\theta})\right]\right)\operatorname{Tr}(\rho_{\bm{x}}e_{ij})\\
    &=\frac{1}{(2\pi)^d}\sum\limits_{i,j,i',j'}\int d^{(d)}Ve^{-i\bm{\omega}\cdot\bm{x}}\operatorname{Tr}\left(Oe_{i'j'}\right)\operatorname{Tr}\left(e_{i'j'}\mathop{\mathbb{E}}\limits_{U\sim\operatorname{U}(d)}\left[U^{\dagger}e_{ij}U\right]\right)\operatorname{Tr}(\rho_{\bm{x}}e_{ij})\\
    &=\frac{1}{d(2\pi)^d}\sum\limits_{i,j,i',j'}\int d^{(d)}Ve^{-i\bm{\omega}\cdot\bm{x}}\operatorname{Tr}\left(Oe_{i'j'}\right)\operatorname{Tr}\left(e_{i'j'}\right)\operatorname{Tr}\left(e_{ij}\right)\operatorname{Tr}\left(\rho_{\bm{x}}e_{ij}\right)\\
    &=\frac{1}{d(2\pi)^d}\sum\limits_{i,j,i',j'}\int d^{(d)}Ve^{-i\bm{\omega}\cdot\bm{x}}\operatorname{Tr}\left(Oe_{i'j'}\right)\delta_{i'j'}\delta_{ij}\operatorname{Tr}\left(\rho_{\bm{x}}e_{ij}\right)\\
     &=\frac{1}{d(2\pi)^d}\int d^{(d)}Ve^{-i\bm{\omega}\cdot\bm{x}}\operatorname{Tr}\left(O\right)\operatorname{Tr}\left(\rho_{\bm{x}}\right)\\
     &=\frac{\operatorname{Tr}\left(O\right)\delta_{\bm{\omega}}^0}{d}
\end{aligned}
\end{equation}
In the third line we apply the assumption that  $\text{Im}(\mathcal{W})$, $\mathcal{W}:\Theta\rightarrow \mathcal{U}$ on the unitary group $\mathcal{U}\subset\text{U}(d)$ forms a 2-design, and hence must also forms a 1-design. In the fourth line, we use the first-moment Weingarten formula for Haar integration over the unitary group \cite{Haar_measure_tutorial}. On line five and six, we use the identity of trace on diagonal and off-diagonal matrices, and contract the indices accordingly. On the last line, we use the fact that a density matrix has unit trace, as well as the Dirac-$\delta$ condition.

Similarly, to obtain the variance of the Fourier coefficients, we first compute the second moment of the coefficients.
\begin{equation}\label{eq: Fourier coefficient second moment phase 1}
\begin{aligned}
    &|c_{\bm{\omega}}(\bm{\theta})|^2=\left|\frac{1}{(2\pi)^d}\sum\limits_{i,j,i',j'}\int d^{(d)}Ve^{-i\bm{\omega}\cdot\bm{x}}\operatorname{tr}\left(Oe_{i'j'}\right)\operatorname{tr}\left(e_{i'j'}W(\bm{\theta})^{\dagger}e_{ij}W(\bm{\theta})\right)\operatorname{tr}(\rho_{\bm{x}}e_{ij})\right|^2\\
    &=\sum\limits_{\substack{i,i',j,j'\\k,k',l,l'}}\iint d^{(2d)}VV'\frac{e^{-i\bm{\omega}\cdot(\bm{x}-\bm{x}')}}{{(2\pi)^{2d}}}\operatorname{tr}\left(O\otimes O^* e_{i'j'}\otimes e_{k'l'}\right)\operatorname{tr}\left(e_{i'j'}W(\bm{\theta})^{\dagger}e_{ij}W(\bm{\theta})\right)\operatorname{tr}\left(e_{k'l'}W(\bm{\theta})^{\dagger}e_{kl}W(\bm{\theta})\right)^*\operatorname{tr}\left(\rho_{\bm{x}}\otimes \rho_{\bm{x}'}^*e_{ij}\otimes e_{kl}\right)\\
    &=\sum\limits_{\substack{i,i',j,j'\\k,k',l,l'}}\iint d^{(2d)}VV'\frac{e^{-i\bm{\omega}\cdot(\bm{x}-\bm{x}')}}{{(2\pi)^{2d}}}\operatorname{tr}\left(O\otimes O^* e_{i'j'}\otimes e_{k'l'}\right)\operatorname{tr}\left(\rho_{\bm{x}}\otimes \rho_{\bm{x}'}^*e_{ij}\otimes e_{kl}\right)W_{ji'}W_{kl'}W_{ij'}^*W_{lk'}^*\\
    &\Rightarrow\mathop{\mathbb{E}}\limits_{\bm{\theta}\sim\Theta}\left(|c_{\bm{\omega}}(\bm{\theta})|^2\right)=\sum\limits_{\substack{i,i',j,j'\\k,k',l,l'}}\iint d^{(2d)}VV'\frac{e^{-i\bm{\omega}\cdot(\bm{x}-\bm{x}')}}{{(2\pi)^{2d}}}\operatorname{tr}\left(O\otimes O^* e_{i'j'}\otimes e_{k'l'}\right)\operatorname{tr}\left(\rho_{\bm{x}}\otimes \rho_{\bm{x}'}^*e_{ij}\otimes e_{kl}\right)\mathop{\mathbb{E}}\limits_{\bm{\theta}\sim\Theta}\left(W_{ji'}W_{kl'}W_{ij'}^*W_{lk'}^*\right)
\end{aligned}
\end{equation}
where in the third line we expand 
\begin{equation}\label{eq: trace to tensor indices}
\begin{aligned}
    \operatorname{tr}\left(e_{i'j'}W(\bm{\theta})^{\dagger}e_{ij}W(\bm{\theta})\right)&=\sum\limits_{s}\bra{s}\ket{i'}\bra{j'}W(\bm{\theta})^{\dagger}\ket{i}\bra{j}W(\bm{\theta})\ket{s}\\
    &=\sum_{s}\delta_{si'}W^{\dagger}_{j'i}W_{js}\\
    &=W^{\dagger}_{j'i}W_{ji'}=W_{ij'}^*W_{ji'},
\end{aligned}
\end{equation}
where $\bm{\theta}$ is omitted and absorbed into $W$. Similarly, $\operatorname{tr}\left(e_{k'l'}W(\bm{\theta})^{\dagger}e_{kl}W(\bm{\theta})\right)^*=W_{kl'}W_{lk'}^*$. Hence, under the premise of 2-design and using the identity of Haar integration of the second moment of unitaries \cite{Haar_measure_tutorial}, yields
    \begin{equation}\label{eq: Haar integration 2nd moment}
        \begin{aligned}
            \mathop{\mathbb{E}}\limits_{\bm{\theta}\sim\Theta}\left(W_{ji'}W_{kl'}W_{ij'}^*W_{lk'}^*\right)&=\mathop{\mathbb{E}}\limits_{U\sim\operatorname{U}(d)}\left(U_{ji'}U_{kl'}U_{ij'}^*U_{lk'}^*\right)\\
            &=\frac{\delta_{ji}\delta_{kl}\delta_{i'j'}\delta_{l'k'}+\delta_{jl}\delta_{ki}\delta_{i'k'}\delta_{l'j'}}{d^2-1}\\
            &-\frac{\delta_{ji}\delta_{kl}\delta_{i'k'}\delta_{l'j'}+\delta_{jl}\delta_{ki}\delta_{i'j'}\delta_{l'k'}}{d(d^2-1)}.
        \end{aligned}
    \end{equation}
Substituting \Cref{eq: Haar integration 2nd moment} back to \Cref{eq: Fourier coefficient second moment phase 1},
    \begin{equation}\label{eq: Fourier coefficient second moment phase 2}
        \begin{aligned}
            \mathop{\mathbb{E}}\limits_{\bm{\theta}\sim\Theta}\left(|c_{\bm{\omega}}(\bm{\theta})|^2\right)=\sum\limits_{\substack{i,i',j,j'\\k,k',l,l'}}\iint d^{(2d)}VV'\frac{e^{-i\bm{\omega}\cdot(\bm{x}-\bm{x}')}}{{(2\pi)^{2d}}}O_{i'j'}O_{k'l'}^*(\rho_{\bm{x}})_{ij}(\rho_{\bm{x}'}^*)_{kl}\biggl[\frac{\delta_{ji}\delta_{kl}\delta_{i'j'}\delta_{l'k'}+\delta_{jl}\delta_{ki}\delta_{i'k'}\delta_{l'j'}}{d^2-1}&\\-\frac{\delta_{ji}\delta_{kl}\delta_{i'k'}\delta_{l'j'}+\delta_{jl}\delta_{ki}\delta_{i'j'}\delta_{l'k'}}{d(d^2-1)}\biggr]&\\
            =\sum\limits_{i,j,k,l}\iint d^{(2d)}VV'\frac{e^{-i\bm{\omega}\cdot(\bm{x}-\bm{x}')}}{{(2\pi)^{2d}}}(\rho_{\bm{x}})_{ij}(\rho_{\bm{x}'}^*)_{kl}\biggl[\frac{\delta_{ji}\delta_{kl}\operatorname{tr}(O)\operatorname{tr}(O^*)+\delta_{jl}\delta_{ki}\|O\|_F^2}{d^2-1}&\\
            -\frac{\delta_{ji}\delta_{kl}\|O\|_F^2+\delta_{jl}\delta_{ki}\operatorname{tr}(O)\operatorname{tr}(O^*)}{d(d^2-1)}\biggr]&\\
            =\iint d^{(2d)}VV'\frac{e^{-i\bm{\omega}\cdot(\bm{x}-\bm{x}')}}{{(2\pi)^{2d}}}\biggl[\frac{\operatorname{tr}(\rho_{\bm{x}})\operatorname{tr}(\rho_{\bm{x}'}^*)\operatorname{tr}(O)\operatorname{tr}(O^*)+\operatorname{tr}(\rho_{\bm{x}}\rho_{\bm{x}'}^{\dagger})\|O\|_F^2}{d^2-1}&\\
            -\frac{\operatorname{tr}(\rho_{\bm{x}})\operatorname{tr}(\rho_{\bm{x}'}^*)\|O\|_F^2+\operatorname{tr}(\rho_{\bm{x}}\rho_{\bm{x}'}^{\dagger})\operatorname{tr}(O)\operatorname{tr}(O^*)}{d(d^2-1)}\biggr]&,
        \end{aligned}
    \end{equation}
where in line two the $i',j',k',l'$ indices are contracted, in line three the $i,j,k,l$ indices are further contracted, and $\|O\|_F^2=\sum_{ij}|O_{ij}|^2$ is the Frobenius norm of the operator $O$. When contracting indices for the density matrices, we used the identity $(\rho_{\bm{x}})_{kl}^*=(\rho_{\bm{x}})_{lk}^{*T}=(\rho_{\bm{x}})_{lk}^{\dagger}$, which leads to 
\begin{equation}
\begin{aligned}
    \sum\limits_{i,j,k,l}(\rho_{\bm{x}})_{ij}(\rho_{\bm{x}'}^*)_{kl}\delta_{jl}\delta_{ki}&=\sum\limits_{i,j,k,l}(\rho_{\bm{x}})_{ij}(\rho_{\bm{x}})_{lk}^{\dagger}\delta_{jl}\delta_{ki}\\
    &=\sum\limits_{i,j}(\rho_{\bm{x}})_{ij}(\rho_{\bm{x}})_{ji}^{\dagger}\\
    &=\operatorname{tr}(\rho_{\bm{x}}\rho_{\bm{x}'}^{\dagger}).
\end{aligned}
\end{equation}
We then expand $\operatorname{tr}(\rho_{\bm{x}}\rho_{\bm{x}'}^{\dagger})$ explicitly as 
\begin{equation}\label{eq: trace of rhox rhox prime expansion}
\begin{aligned}
    \operatorname{tr}(\rho_{\bm{x}}\rho_{\bm{x}'}^{\dagger})&=\sum_{k}\bra{k}\sum_{ij}\frac{x_ix_j\ket{i}\bra{j}}{\|\bm{x}\|_2^2}\sum_{i'j'}\frac{x'_{i'}x'_{j'}\ket{i'}\bra{j'}}{\|\bm{x}'\|_2^2}\ket{k}\\
    &=\frac{1}{\|\bm{x}\|_2^2\|\bm{x}'\|_2^2}\sum_{k,i,j,i'j'}x_ix_jx'_{i'}x'_{j'}\delta_{ki}\delta_{ji'}\delta_{j'k}\\
    &=\frac{1}{\|\bm{x}\|_2^2\|\bm{x}'\|_2^2}\sum_{i,j}
x_jx_ix'_{j}x'_{i}.
\end{aligned}
\end{equation}
Substitute \Cref{eq: trace of rhox rhox prime expansion} back to \Cref{eq: Fourier coefficient second moment phase 2}, and note that the trace of density matrices is invariant, the second moment of the Fourier coefficient simplifies to
    \begin{equation}
        \begin{aligned}
            &\mathop{\mathbb{E}}\limits_{\bm{\theta}\sim\Theta}\left(|c_{\bm{\omega}}(\bm{\theta})|^2\right)=\iint d^{(2d)}VV'\frac{e^{-i\bm{\omega}\cdot(\bm{x}-\bm{x}')}}{{(2\pi)^{2d}}}\biggl[\frac{\operatorname{tr}(O)\operatorname{tr}(O^*)+\operatorname{tr}(\rho_{\bm{x}}\rho_{\bm{x}'}^{\dagger})\|O\|_F^2}{d^2-1}-\frac{\|O\|_F^2+\operatorname{tr}(\rho_{\bm{x}}\rho_{\bm{x}'}^{\dagger})\operatorname{tr}(O)\operatorname{tr}(O^*)}{d(d^2-1)}\biggr]\\
            &=\iint d^{(2d)}VV'\frac{e^{-i\bm{\omega}\cdot(\bm{x}-\bm{x}')}}{{(2\pi)^{2d}}}\biggl[\frac{\operatorname{tr}(O)\operatorname{tr}(O^*)}{d^2-1}-\frac{\|O\|_F^2}{d(d^2-1)}\biggr]+\frac{1}{{(2\pi)^{2d}}}\sum_{i,j}\left|\int d^{(d)}V\frac{e^{-i\bm{\omega}\cdot\bm{x}}x_ix_j}{\|\bm{x}\|_2^2}\right|^2\biggl[\frac{\|O\|_F^2}{d^2-1}-\frac{\operatorname{tr}(O)\operatorname{tr}(O^*)}{d(d^2-1)}\biggr]\\
            &=\delta_{\bm{\omega}}^0\biggl[\frac{\operatorname{tr}(O)\operatorname{tr}(O^*)}{d^2-1}-\frac{\|O\|_F^2}{d(d^2-1)}\biggr]+\frac{1}{{(2\pi)^{2d}}}\mathrm{I}(\bm{\omega})\biggl[\frac{\|O\|_F^2}{d^2-1}-\frac{\operatorname{tr}(O)\operatorname{tr}(O^*)}{d(d^2-1)}\biggr]\\
            &=\frac{1}{d^2-1}\biggl[\left(\operatorname{tr}(O)\operatorname{tr}(O^*)-\frac{\|O\|_F^2}{d}\right)\delta_{\bm{\omega}}^0+\left(\|O\|_F^2-\frac{\operatorname{tr}(O)\operatorname{tr}(O^*)}{d}\right)\frac{\mathrm{I}(\bm{\omega})}{(2\pi)^{2d}}\biggr].
        \end{aligned}
    \end{equation}
Hence, the variance of the Fourier coefficients is evaluated as
\begin{equation}
    \begin{aligned}
        \mathbb{V}\text{ar}_{\bm{\theta}}(c_{\bm{\omega}}(\bm{\theta}))&=\mathop{\mathbb{E}}\limits_{\bm{\theta}\sim\Theta}\left(|c_{\bm{\omega}}(\bm{\theta})|^2\right)-\left|\mathop{\mathbb{E}}\limits_{\bm{\theta}\sim\Theta}\left(c_{\bm{\omega}}(\bm{\theta})\right)\right|^2\\
        &=\frac{1}{d^2-1}\biggl[\left(\operatorname{tr}(O)\operatorname{tr}(O^*)-\frac{\|O\|_F^2}{d}\right)\delta_{\bm{\omega}}^0+\left(\|O\|_F^2-\frac{\operatorname{tr}(O)\operatorname{tr}(O^*)}{d}\right)\frac{\mathrm{I}(\bm{\omega})}{(2\pi)^{2d}}\biggr]-\frac{\operatorname{tr}\left(O\right)\operatorname{tr}\left(O^*\right)\delta_{\bm{\omega}}^0}{d^2}\\
        &=\frac{1}{d^2-1}\biggl[\left(\operatorname{tr}(O)^2-\frac{\|O\|_F^2}{d}\right)\delta_{\bm{\omega}}^0+\left(\|O\|_F^2-\frac{\operatorname{tr}(O)^2}{d}\right)\frac{\mathrm{I}(\bm{\omega})}{(2\pi)^{2d}}\biggr]-\frac{\operatorname{tr}\left(O\right)^2\delta_{\bm{\omega}}^0}{d^2},
    \end{aligned}
\end{equation}
where in the third line we use the fact that the observable is Hermitian.
\end{proof}

\subsection{Proof of theorem~\ref{theorem: depolarizing noise mean and variance}}\label{appendix: noisy amplitude embedding proofs}
\begin{proof}
Under the combined quantum channel $\tilde{\Lambda_i}$ acting on each gate of the unitary $W(\bm{\theta})=\prod\limits_{i=1}^MW_i(\bm{\theta}_i)$ with $M$ gates, the net evolution becomes $\Tilde{\Lambda}=\prod\limits_{i=1}^M\mathcal{E}_i\circ\Lambda_i(\bm{\theta}_i)$, and the single-qubit depolarizing channel acts $Q\ge M$ times independently on the system (the single-qubit channel acts at least once per gate). Hence, we can adapt \Cref{eq: amplitude embedding Fourier coefficients} and \Cref{eq: Fourier coefficients first moment} to obtain
    \begin{equation}
        \tilde{c}_{\bm{\omega}}(\bm{\theta})=\frac{1}{(2\pi)^d}\sum\limits_{i,j,i',j'}\sum\limits_{k}\int d^{(d)}Ve^{-i\bm{\omega}\cdot\bm{x}}\operatorname{tr}\left(Oe_{i'j'}\right)\operatorname{tr}\left(p_ke_{i'j'}\tilde{W}_k(\bm{\theta})^{\dagger}e_{ij}\tilde{W}_k(\bm{\theta})\right)\operatorname{tr}(\rho_{\bm{x}}e_{ij}),
\end{equation}
where we realize that each perturbed building block takes the form
\begin{equation}
\tilde{\Lambda}_i[\cdot]=\mathcal{E}_i\circ\Lambda_i[\cdot]=\sum_{k_i}p_{k_i}E_{k_i}W_i\cdot W_i^{\dagger}E_{k_i}^{\dagger}.
\end{equation}
with $E_{k_i}$ and $p_{k_i}$ denoting, respectively, the Kraus operators of the channel acting on gate $i$ and the corresponding probabilities. Therefore, 
\begin{equation}
\tilde{\Lambda}[\cdot]=\sum_{k}p_{k}W_k\cdot W_k^{\dagger},
\end{equation}
where $W_k=\prod_{i=1}^{M}E_{k_i}W_i$ is the unitary $W(\bm{\theta})$ perturbed by the depolarizing channels along a specific sequence of Kraus operators $k=(k_0,k_1,\cdots,k_{M})$, and $p_k=\prod_{i=1}^{M}p_{k_i}$. As a result, the first moment of the noisy Fourier coefficient is expressed as\
\begin{equation}
    \begin{aligned}
        \mathop{\mathbb{E}}\limits_{\bm{\theta}\sim\Theta}\left(\tilde{c}_{\bm{\omega}}(\bm{\theta})\right)&=\frac{1}{(2\pi)^d}\sum\limits_{i,j,i',j'}\sum_k\int d^{(d)}Ve^{-i\bm{\omega}\cdot\bm{x}}\operatorname{tr}\left(Oe_{i'j'}\right)\operatorname{tr}\left(p_k e_{i'j'}\mathop{\mathbb{E}}\limits_{\bm{\theta}\sim\Theta}\left[\tilde{W}_k(\bm{\theta})^{\dagger}e_{ij}\tilde{W}_k(\bm{\theta})\right]\right)\operatorname{tr}(\rho_{\bm{x}}e_{ij})\\
        &=\frac{1}{(2\pi)^d}\sum\limits_{i,j,i',j'}\sum_k\int d^{(d)}Ve^{-i\bm{\omega}\cdot\bm{x}}\operatorname{tr}\left(Oe_{i'j'}\right)\operatorname{tr}\left(p_k e_{i'j'}\mathop{\mathbb{E}}\limits_{U\sim\text{U}(d)}\left[U^{\dagger}e_{ij}U\right]\right)\operatorname{tr}(\rho_{\bm{x}}e_{ij})\\
        &=\sum\limits_{k}p_k\frac{\operatorname{tr}\left(O\right)\delta_{\bm{\omega}}^0}{d}=\frac{\operatorname{tr}\left(O\right)\delta_{\bm{\omega}}^0}{d},
    \end{aligned}
\end{equation}
which coincides with the first moment of the noiseless case. In the second equality we apply the 2-design (hence also 1-design) assumption of $\text{Im}(\mathcal{W})$ and the fact that Haar measure is invariant for all perturbation realizations $k$. In the third line, we isolate the terms with respect to indices $k$, the remaining steps are identical to those in \Cref{eq: Fourier coefficients first moment}, and we apply the Born rule $\sum_kp_k=1$. 

Following similar steps as in the proof of \Cref{theorem: noiseless mean and variance}, we find the square of the Fourier coefficients as
    \begin{equation}\label{eq: noisy Fourier coefficient second moment phase 1}
        \begin{aligned}
            |\tilde{c}_{\bm{\omega}}(\bm{\theta})|^2=\frac{1}{(2\pi)^{2d}}\left|\sum\limits_{i,j,i',j'}\sum\limits_{s}\int d^{(d)}Ve^{-i\bm{\omega}\cdot\bm{x}}\operatorname{tr}\left(Oe_{i'j'}\right)\operatorname{tr}\left(p_s e_{i'j'}\tilde{W}_s(\bm{\theta})^{\dagger}e_{ij}\tilde{W}_s(\bm{\theta})\right)\operatorname{tr}(\rho_{\bm{x}}e_{ij})\right|^2&\\
            =\sum\limits_{\substack{i,i',j,j'\\k,k',l,l'}}\sum\limits_{s,s'}p_sp_{s'}\iint d^{(2d)}VV'\frac{e^{-i\bm{\omega}\cdot(\bm{x}-\bm{x}')}}{{(2\pi)^{2d}}}\operatorname{tr}\left(O\otimes O^* e_{i'j'}\otimes e_{k'l'}\right)\operatorname{tr}\left(e_{i'j'}\tilde{W}_s(\bm{\theta})^{\dagger}e_{ij}\tilde{W}_s(\bm{\theta})\right)&\\
            \times\operatorname{tr}\left(e_{k'l'}\tilde{W}_{s'}(\bm{\theta})^{\dagger}e_{kl}\tilde{W}_{s'}(\bm{\theta})\right)^*\operatorname{tr}\left(\rho_{\bm{x}}\otimes \rho_{\bm{x}'}^*e_{ij}\otimes e_{kl}\right)&\\
            =\sum\limits_{\substack{i,i',j,j'\\k,k',l,l'}}\sum\limits_{s,s'}p_sp_{s'}\iint d^{(2d)}VV'\frac{e^{-i\bm{\omega}\cdot(\bm{x}-\bm{x}')}}{{(2\pi)^{2d}}}O_{i'j'}O_{k'l'}^*(\rho_{\bm{x}})_{ij}(\rho_{\bm{x}'}^*)_{kl}(\tilde{W}_s)_{ji'}(\tilde{W}_{s'})_{kl'}(\tilde{W}_s)_{ij'}^*(\tilde{W}_{s'})_{lk'}^*&,
        \end{aligned}
    \end{equation}
where in the third line we adopt the same arguments as in \Cref{eq: trace to tensor indices} to simplify the equation. The second moment then splits naturally into diagonal ($s=s'$) and off-diagonal ($s\neq s'$) contributions
    \begin{equation}\label{eq: noisy Fourier coefficient second moment phase 2}
        \begin{aligned}
            &\mathop{\mathbb{E}}\limits_{\bm{\theta}\sim\Theta}\left(|\tilde{c}_{\bm{\omega}}(\bm{\theta})|^2\right)=\\
            &\sum\limits_{s}p_s^2\sum\limits_{\substack{i,i',j,j'\\k,k',l,l'}}\iint d^{(2d)}VV'\frac{e^{-i\bm{\omega}\cdot(\bm{x}-\bm{x}')}}{{(2\pi)^{2d}}}O_{i'j'}O_{k'l'}^*(\rho_{\bm{x}})_{ij}(\rho_{\bm{x}'}^*)_{kl}\mathop{\mathbb{E}}\limits_{\bm{\theta}\sim\Theta}\left[(\tilde{W}_s)_{ji'}(\tilde{W}_{s})_{kl'}(\tilde{W}_s)_{ij'}^*(\tilde{W}_{s})_{lk'}^*\right]\\
            &+\sum\limits_{s\ne s'}p_s p_{s'}\sum\limits_{\substack{i,i',j,j'\\k,k',l,l'}}\iint d^{(2d)}VV'\frac{e^{-i\bm{\omega}\cdot(\bm{x}-\bm{x}')}}{{(2\pi)^{2d}}}O_{i'j'}O_{k'l'}^*(\rho_{\bm{x}})_{ij}(\rho_{\bm{x}'}^*)_{kl}\mathop{\mathbb{E}}\limits_{\bm{\theta}\sim\Theta}\left[(\tilde{W}_s)_{ji'}(\tilde{W}_{s'})_{kl'}(\tilde{W}_s)_{ij'}^*(\tilde{W}_{s'})_{lk'}^*\right]\\
            &=\sum\limits_{s}p_s^2\times\frac{1}{d^2-1}\biggl[\left(\operatorname{tr}(O)\operatorname{tr}(O^*)-\frac{\|O\|_F^2}{d}\right)\delta_{\bm{\omega}}^0+\left(\|O\|_F^2-\frac{\operatorname{tr}(O)\operatorname{tr}(O^*)}{d}\right)\frac{\mathrm{I}(\bm{\omega})}{(2\pi)^{2d}}\biggr]\\
            &+\sum\limits_{s\ne s'}p_s p_{s'}\sum\limits_{\substack{i,i',j,j'\\k,k',l,l'}}\iint d^{(2d)}VV'\frac{e^{-i\bm{\omega}\cdot(\bm{x}-\bm{x}')}}{{(2\pi)^{2d}}}O_{i'j'}O_{k'l'}^*(\rho_{\bm{x}})_{ij}(\rho_{\bm{x}'}^*)_{kl}\mathop{\mathbb{E}}\limits_{\bm{\theta}\sim\Theta}\left[(\tilde{W}_s)_{ji'}(\tilde{W}_{s'})_{kl'}(\tilde{W}_s)_{ij'}^*(\tilde{W}_{s'})_{lk'}^*\right],
        \end{aligned}
    \end{equation}
where we use the results of \cref{eq: Fourier coefficient second moment phase 1,eq: Fourier coefficient second moment phase 2} to directly obtain the first half of line two, under the assumption that integration over the perturbed unitary $W_k$ still respects the 2-design Haar measure. For the second half of line two, we employ the covariance identity $E(XY)=E(X)E(Y)+Cov(X,Y)$, hence
\begin{equation}\label{eq: noisy Fourier coefficient second moment covariance}
    \begin{aligned}
        &\mathop{\mathbb{E}}\limits_{\bm{\theta}\sim\Theta}\left[(\tilde{W}_s)_{ji'}(\tilde{W}_{s'})_{kl'}(\tilde{W}_s)_{ij'}^*(\tilde{W}_{s'})_{lk'}^*\right]\\
        &=\mathop{\mathbb{E}}\limits_{\bm{\theta}\sim\Theta}\left[(\tilde{W}_s)_{ji'}(\tilde{W}_s)_{ij'}^*\right]\mathop{\mathbb{E}}\limits_{\bm{\theta}\sim\Theta}\left[(\tilde{W}_{s'})_{lk'}^*(\tilde{W}_{s'})_{kl'}\right]+\mathop{\text{Cov}}\limits_{\bm{\theta}\sim\Theta}\left[(\tilde{W}_s)_{ji'}(\tilde{W}_s)_{ij'}^*, (\tilde{W}_{s'})_{lk'}^*(\tilde{W}_{s'})_{kl'}\right]\\
        &=\mathop{\mathbb{E}}\limits_{\tilde{W_s}\sim\text{U}(d)}\left[(\tilde{W}_s)_{ji'}(\tilde{W}_s)_{ij'}^*\right]\mathop{\mathbb{E}}\limits_{\tilde{W_{s'}}\sim\text{U}(d)}\left[(\tilde{W}_{s'})_{lk'}^*(\tilde{W}_{s'})_{kl'}\right]+\mathop{\text{Cov}}\limits_{\tilde{W_s},\tilde{W_{s'}}\sim\text{U}(d)}\left[(\tilde{W}_s)_{ji'}(\tilde{W}_s)_{ij'}^*, (\tilde{W}_{s'})_{lk'}^*(\tilde{W}_{s'})_{kl'}\right]\\
        &=\mathop{\mathbb{E}}\limits_{\tilde{W_s}\sim\text{U}(d)}\left[(\tilde{W}_s)_{ji'}(\tilde{W}_s)_{ij'}^*\right]\mathop{\mathbb{E}}\limits_{\tilde{W_{s'}}\sim\text{U}(d)}\left[(\tilde{W}_{s'})_{lk'}^*(\tilde{W}_{s'})_{kl'}\right],
    \end{aligned}
\end{equation}
where on the second line we use the 2-design and the invariance assumption again. Since we also make the premise that the perturbed unitaries $\tilde{W}(\bm{\theta})_s,\tilde{W}(\bm{\theta})_{s'}$ induced by arbitrary two distinct realizations $s\ne s'$ are uncorrelated, the covariance vanishes in the third line. Therefore, substituting back to \Cref{eq: noisy Fourier coefficient second moment phase 2} and applying the Weingarten identity, its second half yield
    \begin{equation}\label{eq: noisy Fourier coefficient second moment phase 3}
        \begin{aligned}
            &\sum\limits_{s\ne s'}p_s p_{s'}\sum\limits_{\substack{i,i',j,j'\\k,k',l,l'}}\iint d^{(2d)}VV'\frac{e^{-i\bm{\omega}\cdot(\bm{x}-\bm{x}')}}{{(2\pi)^{2d}}}O_{i'j'}O_{k'l'}^*(\rho_{\bm{x}})_{ij}(\rho_{\bm{x}'}^*)_{kl}\frac{\delta_{ji}\delta_{i'j'}\delta_{lk}\delta_{k'l'}}{d^2}\\
            &=\sum\limits_{s\ne s'}p_s p_{s'}\iint d^{(2d)}VV'\frac{e^{-i\bm{\omega}\cdot(\bm{x}-\bm{x}')}}{{(2\pi)^{2d}}}\frac{\operatorname{tr}(O)\operatorname{tr}(O^*)\operatorname{tr}(\rho_{\bm{x}})\operatorname{tr}(\rho_{\bm{x}'}^{\dagger})}{d^2}\\
            &=\sum\limits_{s\ne s'}p_s p_{s'}\iint d^{(2d)}VV'\frac{e^{-i\bm{\omega}\cdot(\bm{x}-\bm{x}')}}{{(2\pi)^{2d}}}\frac{\operatorname{tr}(O)\operatorname{tr}(O^*)}{d^2}\\
            &=\sum\limits_{s\ne s'}p_s p_{s'}\frac{\operatorname{tr}(O)\operatorname{tr}(O^*)}{d^2}\delta_{\bm{\omega}}^0.
        \end{aligned}
    \end{equation}
On the second line, the tensor indices are contracted; on the third line, we impose the condition that the trace of the density matrix is always equal to one; and in the last line, we apply the Dirac-$\delta$-Function integration for complex exponentials. Lastly, we focus on the sums over the probabilities, such that

\begin{equation}\label{eq: sum ps squared}
\begin{aligned}
     \sum_{s}p_s^2&=\sum\limits_{(s_1,s_2,\cdots,s_Q)}\prod\limits_{i=1}^{Q} p_{s_i}^2\\
     &=\prod\limits_{i=1}^{Q}\sum_{s_i}p_{s_i}^2\\
     &=\left(\sum\limits_{s=1}^Kp_s^2\right)^Q,
\end{aligned}
\end{equation}
where on the third line we use $p_{s_i}=p_s$ since the channels are identical across the circuit. Similarly, 
\begin{equation}\label{eq: sum psps'}
\begin{aligned}
     \sum_{s\ne s'}p_s p_{s'}&=\sum_{s}p_s\sum_{s'\ne s}p_{s'}\\
     &=\sum_{s}p_s(1-p_{s})\\
     &=\sum_{s}p_s-\sum_{s}p_s^2\\
     &=1-\left(\sum\limits_{s=1}^Kp_s^2\right)^Q,
\end{aligned}
\end{equation}
where the first and last equalities invoke the Born rule, and substituted \Cref{eq: sum ps squared} to line four. After all, combining \Cref{eq: sum ps squared,eq: sum psps',eq: noisy Fourier coefficient second moment phase 3} and subtituting to \Cref{eq: noisy Fourier coefficient second moment phase 2}, we have
    \begin{equation}
        \begin{aligned}
            &\mathop{\mathbb{E}}\limits_{\bm{\theta}\sim\Theta}\left(|\tilde{c}_{\bm{\omega}}(\bm{\theta})|^2\right)=\\
            &=\frac{1}{d^2-1}\biggl[\left(\operatorname{tr}(O)\operatorname{tr}(O^*)-\frac{\|O\|_F^2}{d}\right)\delta_{\bm{\omega}}^0+\left(\|O\|_F^2-\frac{\operatorname{tr}(O)\operatorname{tr}(O^*)}{d}\right)\frac{\mathrm{I}(\bm{\omega})}{(2\pi)^{2d}}\biggr]\left(\sum\limits_{k=1}^Kp_k^2\right)^Q\\
            &+\frac{\operatorname{tr}(O)\operatorname{tr}(O^*)}{d^2}\delta_{\bm{\omega}}^0\left(1-\left(\sum\limits_{k=1}^Kp_k^2\right)^Q\right)\\
            &=\frac{1}{d^2-1}\biggl[\left(\frac{\operatorname{tr}(O)\operatorname{tr}(O^*)}{d^2}-\frac{\|O\|_F^2}{d}\right)\delta_{\bm{\omega}}^0+\left(\|O\|_F^2-\frac{\operatorname{tr}(O)\operatorname{tr}(O^*)}{d}\right)\frac{\mathrm{I}(\bm{\omega})}{(2\pi)^{2d}}\biggr]\left(\sum\limits_{k=1}^Kp_k^2\right)^Q\\
            &=\frac{1}{d^2-1}\biggl[\left(\frac{\operatorname{tr}(O)^2}{d^2}-\frac{\|O\|_F^2}{d}\right)\delta_{\bm{\omega}}^0+\left(\|O\|_F^2-\frac{\operatorname{tr}(O)^2}{d}\right)\frac{\mathrm{I}(\bm{\omega})}{(2\pi)^{2d}}\biggr]\left(\sum\limits_{k=1}^Kp_k^2\right)^Q.
        \end{aligned}
    \end{equation}
Having obtained both the first and the second moments of the noisy Fourier coefficients. As a result, the variances are given by
    \begin{equation}
        \begin{aligned}
            &\mathbb{V}\text{ar}_{\bm{\theta}}(\tilde{c}_{\omega}(\bm{\theta}))=\mathop{\mathbb{E}}\limits_{\bm{\theta}\sim\Theta}\left(|\tilde{c}_{\bm{\omega}}(\bm{\theta})|^2\right)-\left|\mathop{\mathbb{E}}\limits_{\bm{\theta}\sim\Theta}\left(\tilde{c}_{\bm{\omega}}(\bm{\theta})\right)\right|^2\\
            &=\frac{1}{d^2-1}\biggl[\left(\frac{\operatorname{tr}(O)^2}{d^2}-\frac{\|O\|_F^2}{d}\right)\delta_{\bm{\omega}}^0+\left(\|O\|_F^2-\frac{\operatorname{tr}(O)^2}{d}\right)\frac{\mathrm{I}(\bm{\omega})}{(2\pi)^{2d}}\biggr]\left(\sum\limits_{k=1}^Kp_k^2\right)^Q-\frac{\operatorname{tr}(O)^2\delta_{\bm{\omega}}^0}{d^2}.
        \end{aligned}
    \end{equation}
\end{proof}

\section{Rescaled Monte Carlo Integration error estimation}\label{appendix: Monte Carlo Integration error estimation}
According to \Cref{eq: I(w)}, the $\mathrm{I}(\bm{\omega})$ consists of sums of squared integrals, $\mathrm{I}(\bm{\omega})=\sum_{ij}|K_{ij}(\bm{\omega})|^2$, where we can let 
\begin{equation}
    K_{ij}(\bm{\omega})=\int d^{(d)}V\frac{e^{-i\bm{\omega}\cdot\bm{x}}x_ix_j}{\|\bm{x}\|_2^2}=\int d^{(d)}Vg_{ij}(\bm{x})e^{-i\bm{\omega}\cdot\bm{x}}.
\end{equation}
To evaluate this integral for arbitrary $i,j$ using Monte Carlo methods, we first implement a control-variate technique to reduce its statistical variance. Decomposing $g_{ij}(\bm{x})=(g_{ij}(\bm{x})-c)+c$, where $c$ is some real constant, we can rewrite $K_{ij}(\bm{\omega})$,
\begin{equation}
\begin{aligned}
        K_{ij}(\bm{\omega})&=\int_D d^{(d)}V(g_{ij}(\bm{x})-c)e^{-i\bm{\omega}\cdot\bm{x}}+c\int_D d^{(d)}Ve^{-i\bm{\omega}\cdot\bm{x}}\\
        &=\int_D d^{(d)}V\text{ Y}_c+cV\delta_{\bm{\omega}}^0,
\end{aligned}
\end{equation}
where $\mathrm{Y}_c(\bm{x})=(g_{ij}(\bm{x})-c)e^{-i\bm{\omega}\cdot\bm{x}}$ is regarded a random variable with $\text{Y}_c$ act as its own PDF on $\bm{x}$. The  Hence, the estimator $\hat{K}_{ij}(\bm{\omega})$ by Monte Carlo method can be expressed as
\begin{equation}
    \hat{K}_{ij}(\bm{\omega})=V\frac{\sum_{k=1}^N\text{Y}_c(\bm{x}_k)}{N}+cV\delta_{\bm{\omega}}^0=V\bar{\text{Y}}_c+cV\delta_{\bm{\omega}}^0.
\end{equation}
Therefore, $\mathbb{V}\text{ar}(\hat{K}_{ij}(\bm{\omega}))=\mathbb{V}\text{ar}(V\bar{\text{Y}}_c)=\frac{V^2}{N}\mathbb{V}\text{ar}(\text{Y}_c)$, where
\begin{equation}\label{eq: var Yc}
    \mathbb{V}\text{ar}(\text{Y}_c)=\mathbb{E}(|\text{Y}_c|^2)-|\mathbb{E}(\text{Y}_c)|^2.
\end{equation}
To find the infimum of this variance, we let 
\begin{equation}\label{eq: Eg=c condition}
\begin{aligned}
    &\frac{d}{dc}\mathbb{E}(|\text{Y}_c|^2)-|\mathbb{E}(\text{Y}_c)|^2=0\\
    &\frac{d}{dc}\mathbb{E}(|g_{ij}-c|^2)-|\mathbb{E}(g_{ij}-c)|^2=0\\
    &\frac{d}{dc}\mathbb{E}(g_{ij}^2-2cg_{ij}+c^2)-\mathbb{E}(g_{ij})^2-\mathbb{E}(g_{ij})\mathbb{E}(c)+\mathbb{E}(c)^2=0\\
    &-2\mathbb{E}(g_{ij})+2\mathbb{E}(c)-2\mathbb{E}(g_{ij})+2\mathbb{E}(c)=0\\
    &\Rightarrow \mathbb{E}(g_{ij})=\mathbb{E}(c)=c,
\end{aligned}
\end{equation}
where on the second line we drop the $e^{-i\bm{\omega}\cdot\bm{x}}$ term from complex conjugation, and dropping all conjugates in line three by noticing that $g_{ij},c$ are always real and $\mathbb{E}(c)=c$ is a constant. Ultimately, using the condition \Cref{eq: Eg=c condition}, \Cref{eq: var Yc} becomes 
\begin{equation}\label{eq: var Yc expanded}
\begin{aligned}
    \mathbb{V}\text{ar}(\text{Y}_c)&=\mathbb{E}(|g_{ij}-\mathbb{E}(g_{ij})|^2)-|\mathbb{E}(\text{Y}_c)|^2\\
    &=\mathbb{V}\text{ar}(g_{ij})-|\mathbb{E}(\text{Y}_c)|^2
\end{aligned}
\end{equation}
On the other hand, without the control variate, the estimator is simply
\begin{equation}
    \hat{K}_{ij}(\bm{\omega})=V\frac{\sum_{k=1}^N\text{Y}(\bm{x}_k)}{N},
\end{equation}
where $\text{Y}(\bm{x})=g_{ij}(\bm{x})e^{-i\bm{\omega}\cdot\bm{x}}$, and $\mathbb{V}\text{ar}(\hat{K}_{ij}(\bm{\omega}))=\frac{V^2}{N}\mathbb{V}\text{ar}(\text{Y})$. Expanding the variance gives 
\begin{equation}\label{eq: var Y expanded}
    \begin{aligned}
        \mathbb{V}\text{ar}(\text{Y})&=\mathbb{E}(|\text{Y}|^2)-|\mathbb{E}(\text{Y})|^2\\
        &=\mathbb{E}(g_{ij}^2)-|\mathbb{E}(\text{Y})|^2\\
        &=\mathbb{V}\text{ar}(g_{ij})+\mathbb{E}(g_{ij})^2-|\mathbb{E}(\text{Y})|^2.
    \end{aligned}
\end{equation}
Observing \Cref{eq: var Yc expanded} and \Cref{eq: var Y expanded}, a key observation is that $\mathbb{V}\text{ar}(\text{g}_{ij}), \mathbb{E}(g_{ij})$ is frequency-independent, hence providing an lower bound for $\mathbb{V}\text{ar}(\text{Y}), \mathbb{V}\text{ar}(\text{Y}_c)$ in the limit of large frequencies. Together with the empirical evidence from \Cref{fig: amplitude_embedding_coefficients_variance_scaling}, which shows that $\mathrm{I}(\bm{\omega})$, hence $K_{ij}(\bm{\omega})$, decays exponentially with respect to the frequency norms, it implies that the the Monte Carlo integration variance becomes proportionally more significant relative to the value of the integral itself. However, at the extreme of high frequencies regime, exactly because $\mathbb{E}(\text{Y})=K_{ij}(\bm{\omega})\sim 0$, \Cref{eq: var Y expanded} can be approximated as
\begin{equation}
    \mathbb{V}\text{ar}(\text{Y})=\mathbb{V}\text{ar}(g_{ij})+\mathbb{E}(g_{ij})^2>\mathbb{V}\text{ar}(\text{Y}_c),
\end{equation}
which now clearly demonstrates the utility of control variation in reducing statistical variance of Monte Carlo integration.

\newpage
\section{Supplementary plots for Zeroth coefficient $c_{0}$ simulations}\label{appendix: supplementary plots for zeroth coefficient}
\begin{figure*}[thb!]
    \centering
    \includegraphics[width=.9\columnwidth]{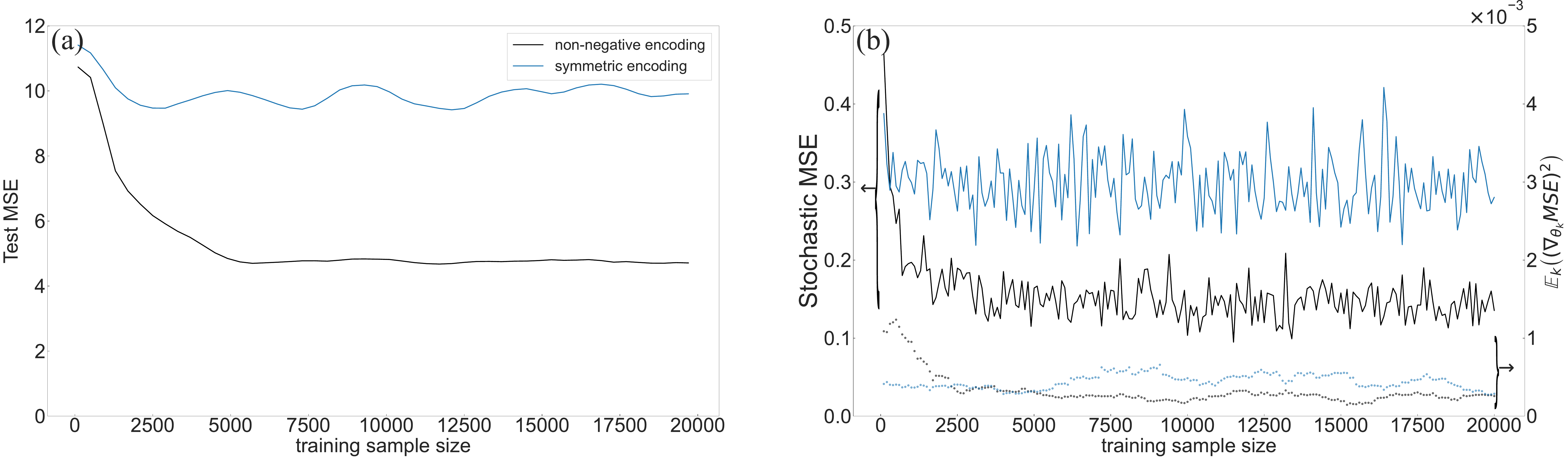}
    \caption{This figure continues from \Cref{fig:symmetric_and_positive_real}, which plots the (a) mean squared error (MSE) for non-negative and symmetric encoding with respect to the test set, as well as their corresponding (b) stochastic MSE and gradients squared.}
    \label{fig:symmetric_and_positive_MSE}
\end{figure*}
\begin{figure*}[thb!]
\centering
    \includegraphics[width=1\columnwidth]{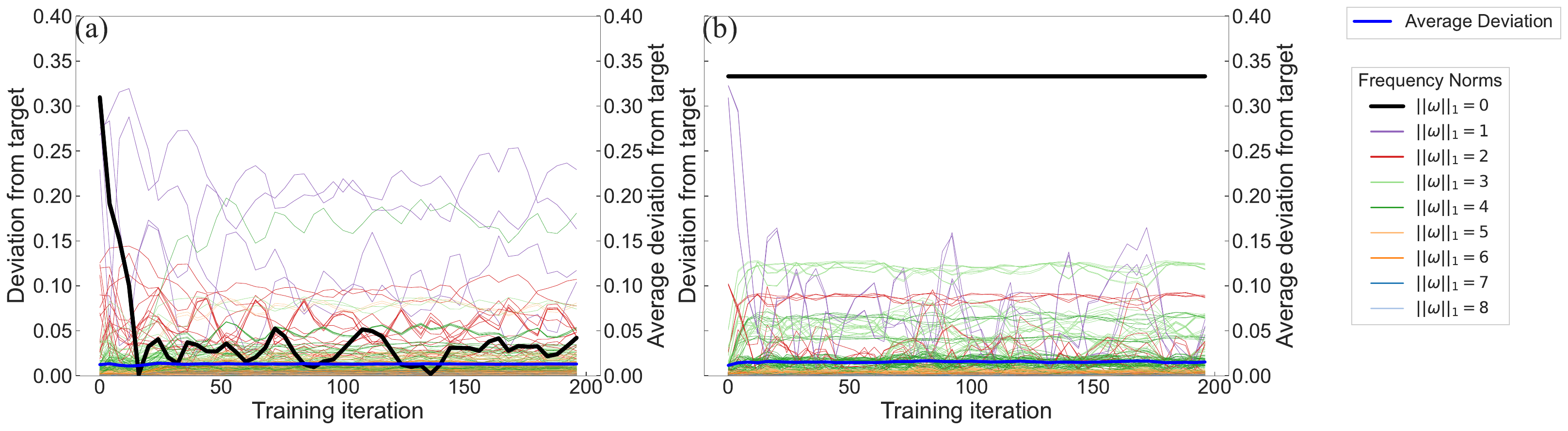}
    \caption{This figure continues from \Cref{fig:symmetric_and_positive_real}, which plots the (a) non-negative encoding and (b) symmetric encoding average absolute and proportional deviation of the sampled Fourier coefficients $c_{\bm{\omega}'}$ to their target values $c_{\bm{\omega}}$, $|c_{\bm{\omega}}-c_{\bm{\omega}'}|$ and $\mathbb{E}_{\bm{\omega}}\left(|c_{\bm{\omega}}-c_{\bm{\omega}'}|\right)$, versus the number of training iterations, respectively on left and right axis. The legends use the same colour for the coefficients with the same $L_1$ norm $\|\bm{\omega}\|_1$. }
    \label{fig:symmetric_and_positive_deviations}
\end{figure*}
\newpage
\section{Supplementary plots for depolarizing channel simulations}\label{appendix: supplementary plots for depol channel}

\begin{figure*}[thb!]
     \subfloat{\label{fig:2q_depol_nonzero_coeff_real_parts}\:\includegraphics[width=.4\columnwidth]{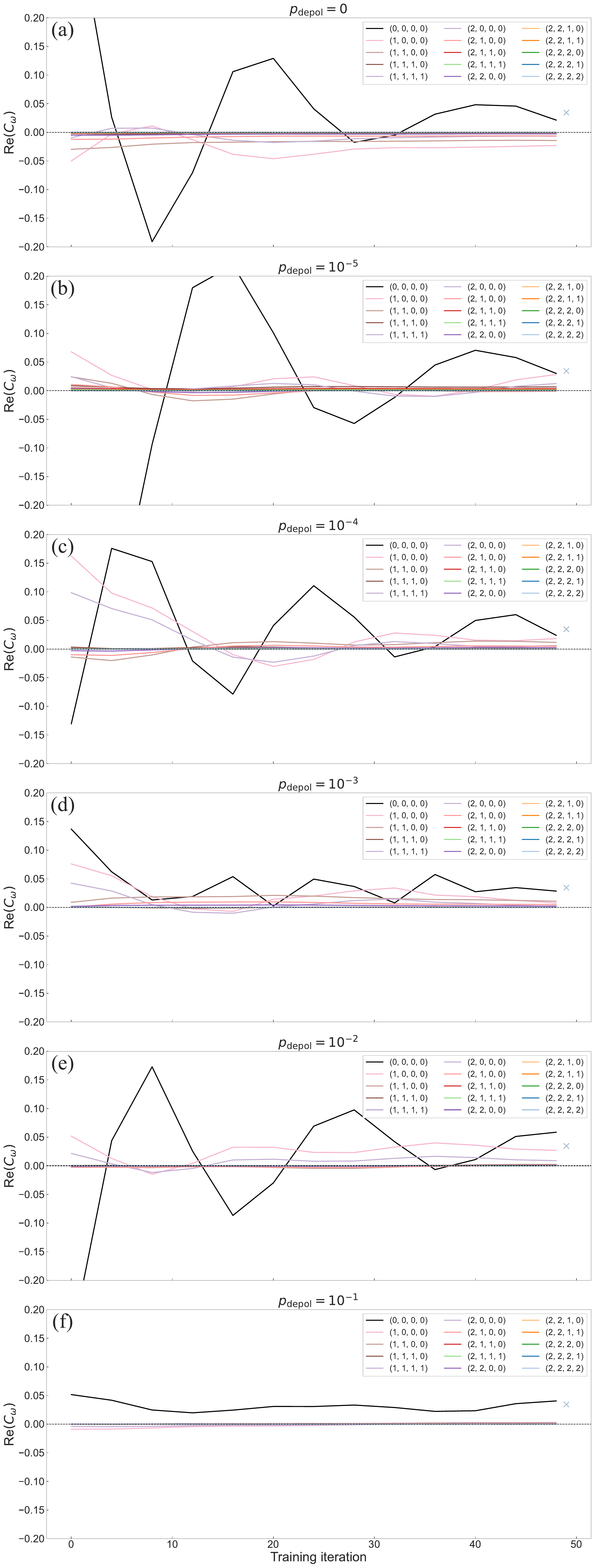}\:}
    \subfloat{\label{fig:2q_depol_nonzero_coeff_imag_parts}\:\includegraphics[width=.4\columnwidth]{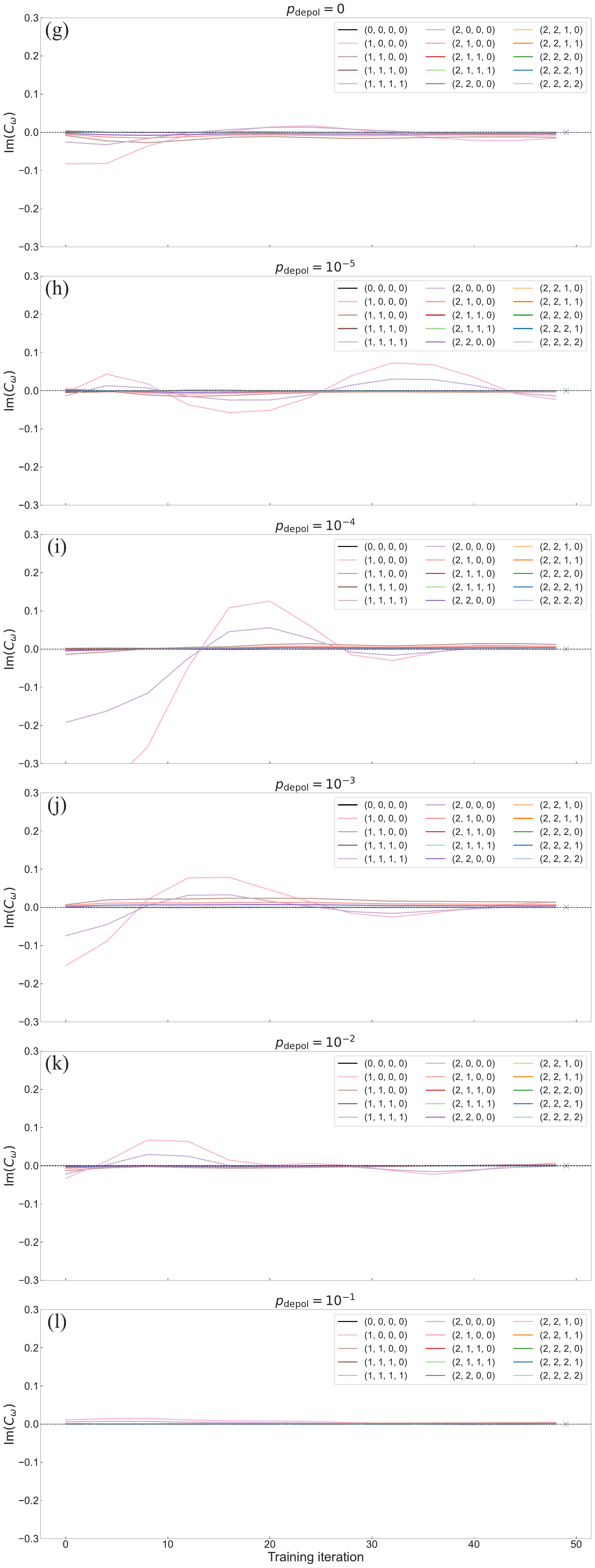}\:}
    \caption{(a)-(f) Real part of the sampled Fourier coefficients with non-zero target-coefficients, which is marked by the cross at the end of each plot. (g)-(l) Imaginary part of the sampled Fourier coefficients with non-zero target-coefficients. Note that although the real part of the target coefficients is non-zero, the imaginary part is always set to zero.}
    \label{fig:2q_depol_nonzero_coeffs_real_imag}
\end{figure*}

\begin{figure*}[thb!]
     \subfloat[\label{fig:2q_depol_zero_coeff_real_parts}Real part of the sampled Fourier coefficients with zero target-coefficients]{\:\includegraphics[width=.45\columnwidth]{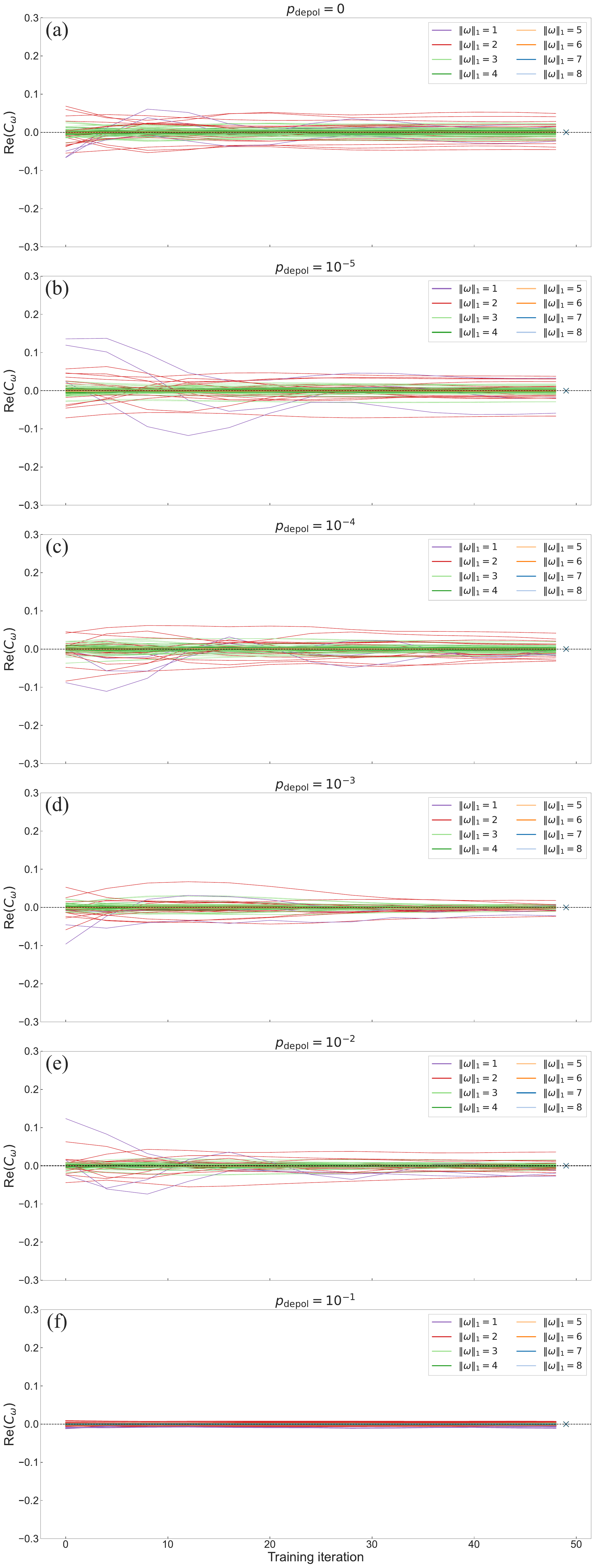}\:}
    \subfloat[\label{fig:2q_depol_zero_coeff_imag_parts}Imaginary part of the sampled Fourier coefficients with zero target-coefficients]{\:\includegraphics[width=.45\columnwidth]{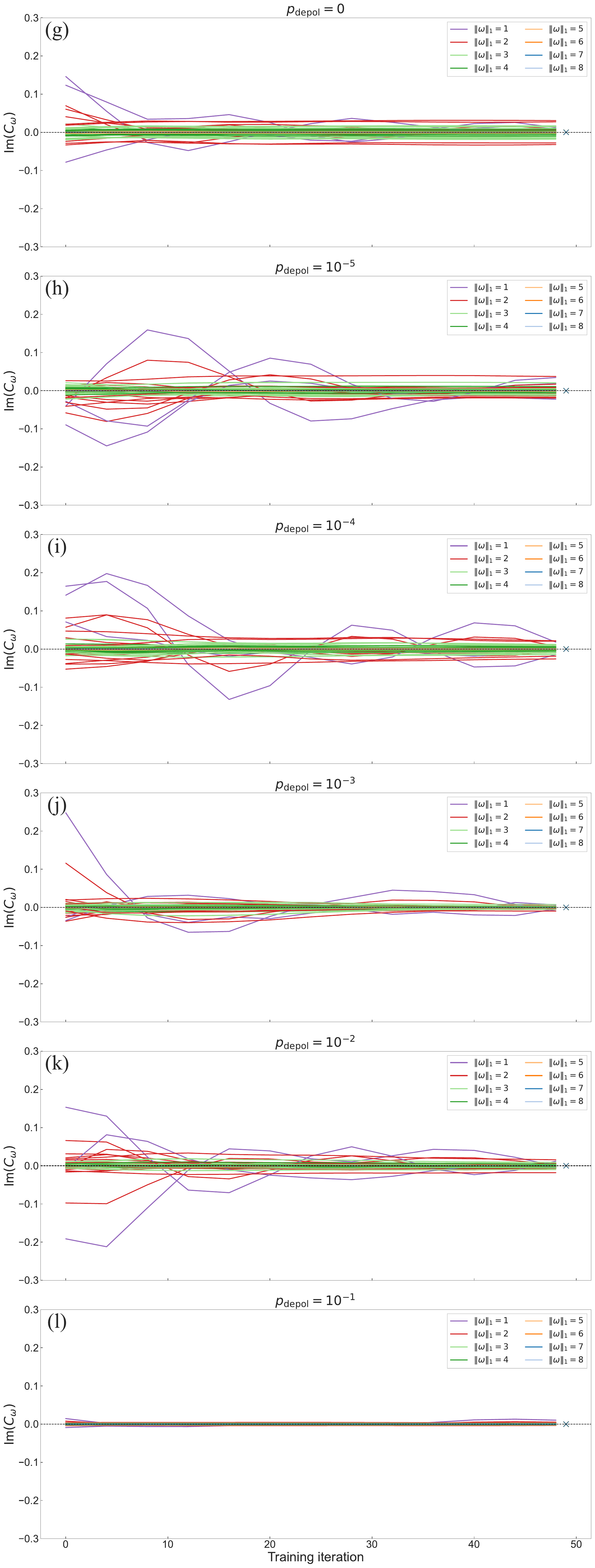}\:}
    \caption{(a)-(f) Real part of the sampled Fourier coefficients with zero target-coefficients, which is marked by the cross at the end of each plot. (g)-(l) Imaginary part of the sampled Fourier coefficients with zero target-coefficients.}
    \label{fig:2q_depol_zero_coeffs_real_imag}
\end{figure*}

\section{Supplementary plots for non-integer target frequencies simulations}\label{appendix: supplementary plots for non-int freqs}
\begin{figure*}[thb!]
        \subfloat{\label{fig:2q_non_int_freqs_depol_MSE}\:\includegraphics[width=.5\columnwidth]{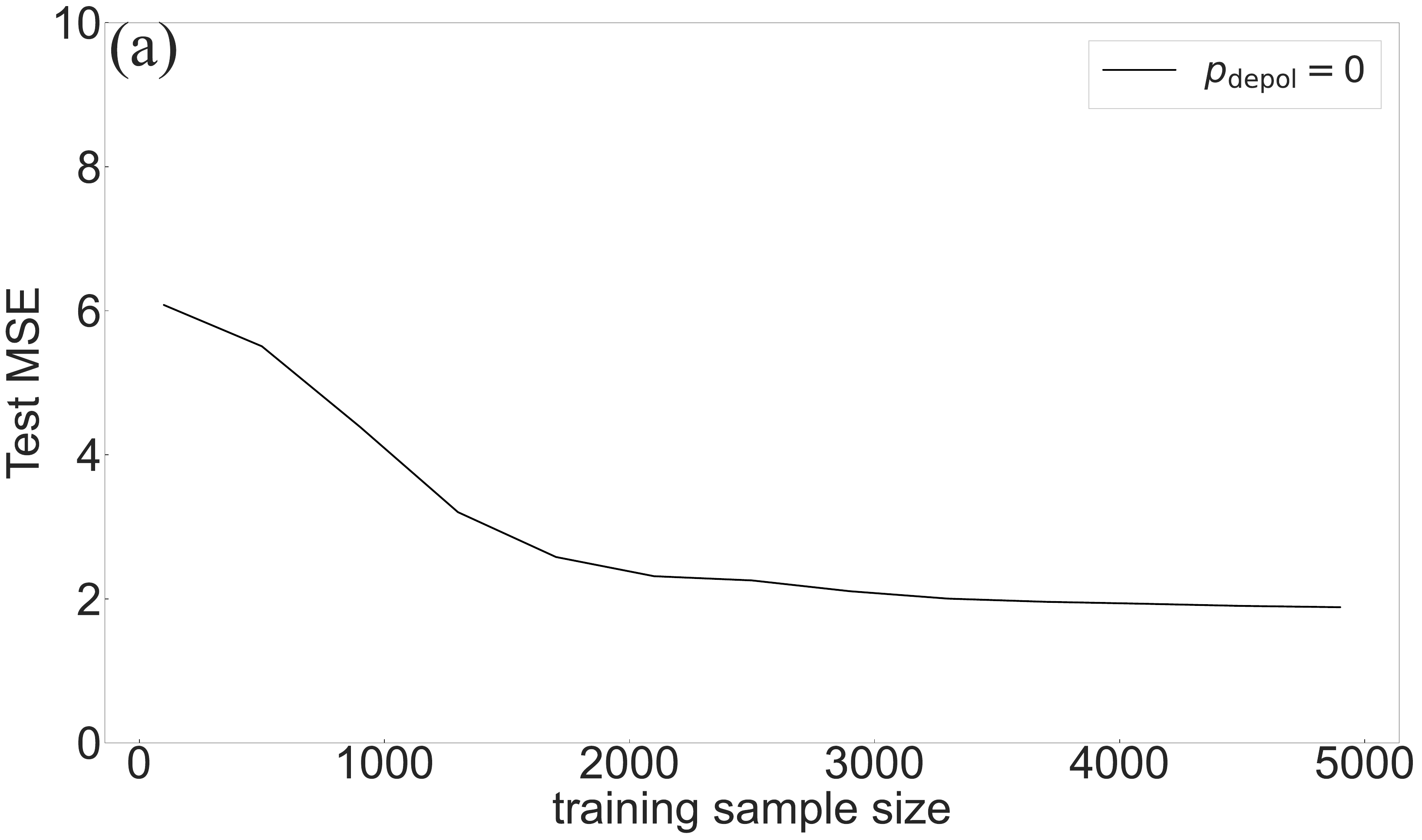}\:}\subfloat{\label{fig:2q_non_int_freqs_depol_MSE_gradients}\:\includegraphics[width=.5\columnwidth]{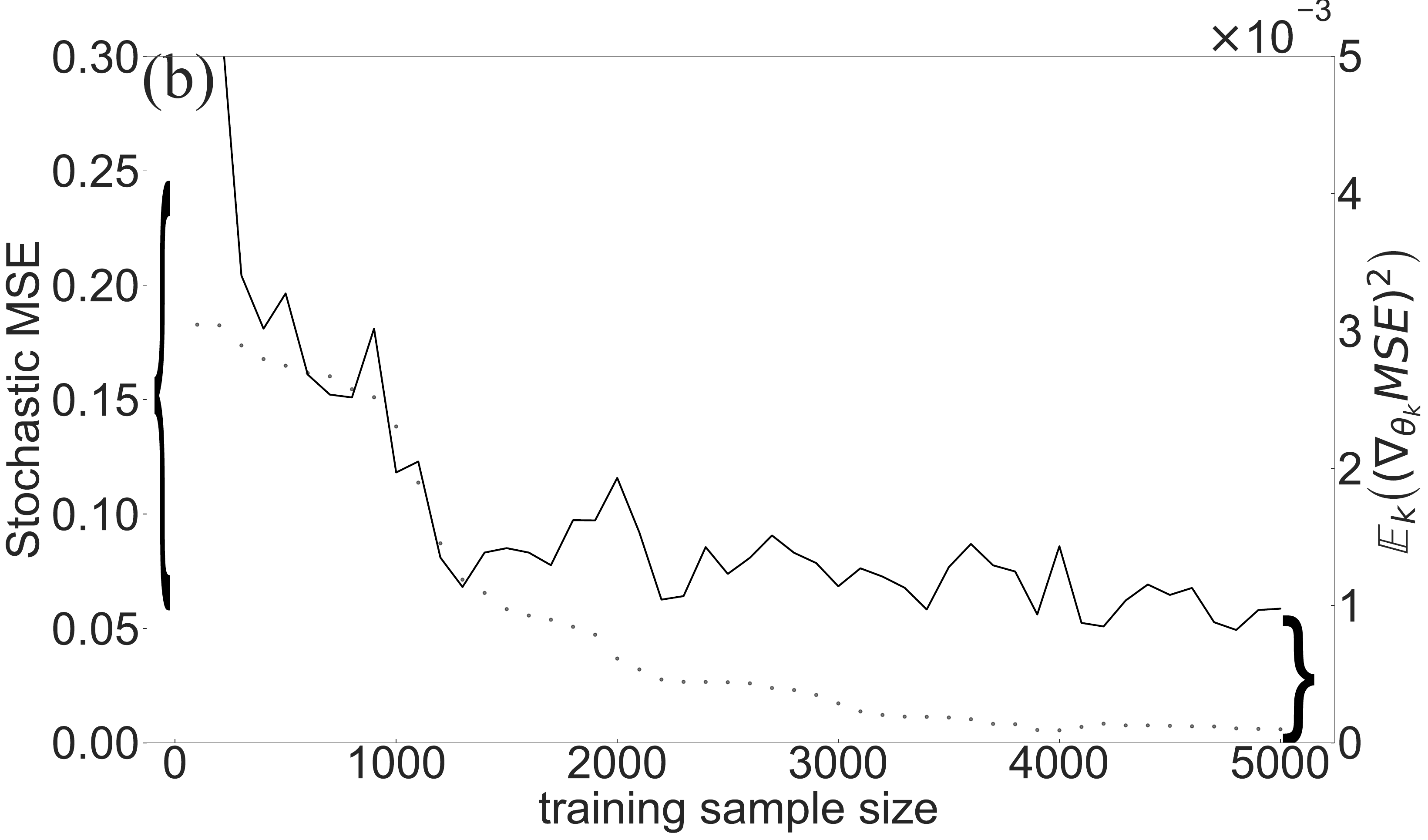}\:}\\
    \subfloat{\label{fig:2q_non_int_freqs_depol_nonzero_coeff_deviations}\:\includegraphics[width=.5\columnwidth]{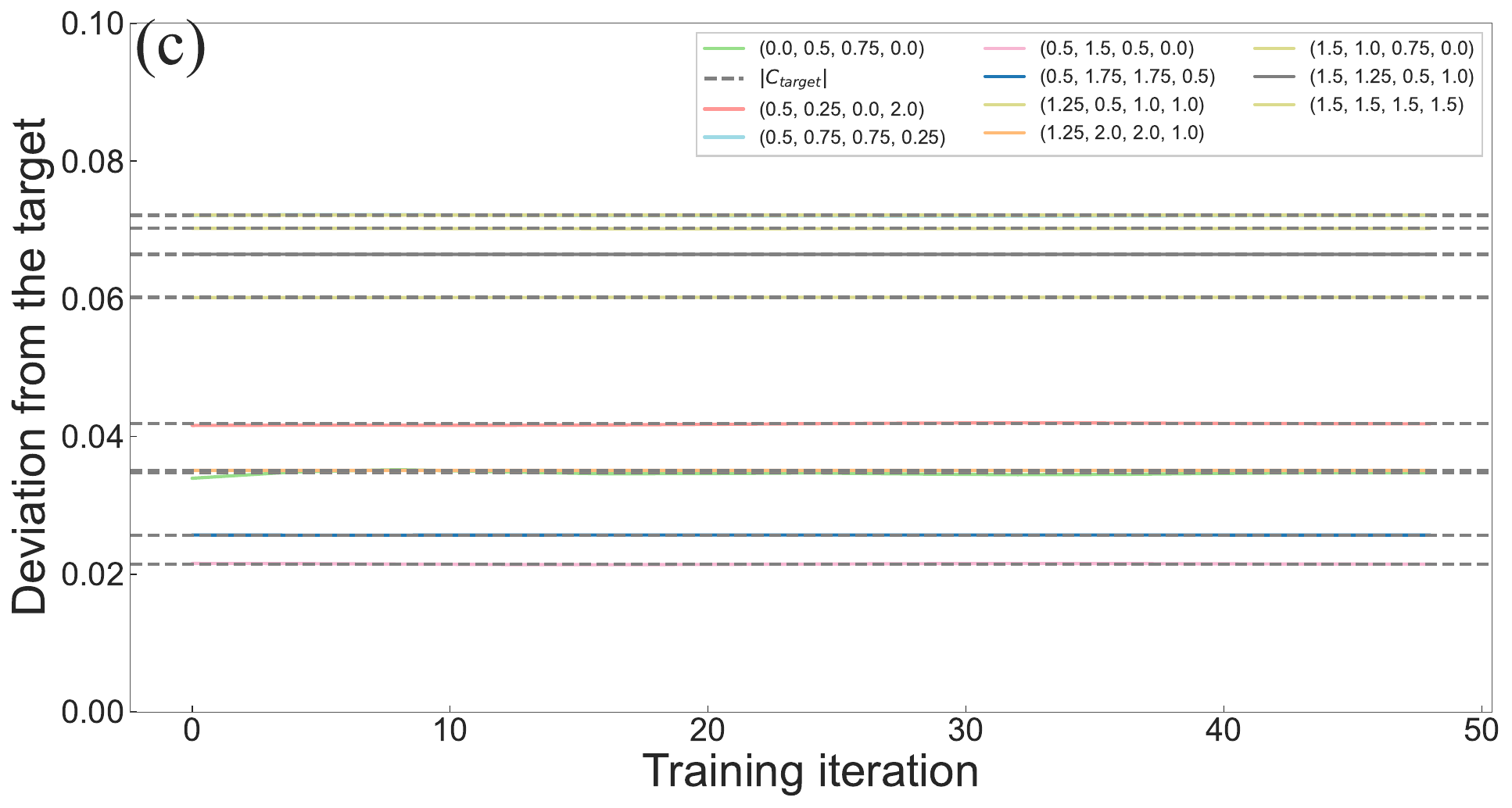}\:}
    \subfloat{\label{fig:2q_non_int_freqs_depol_zero_coeff_deviations}\:\includegraphics[width=.5\columnwidth]{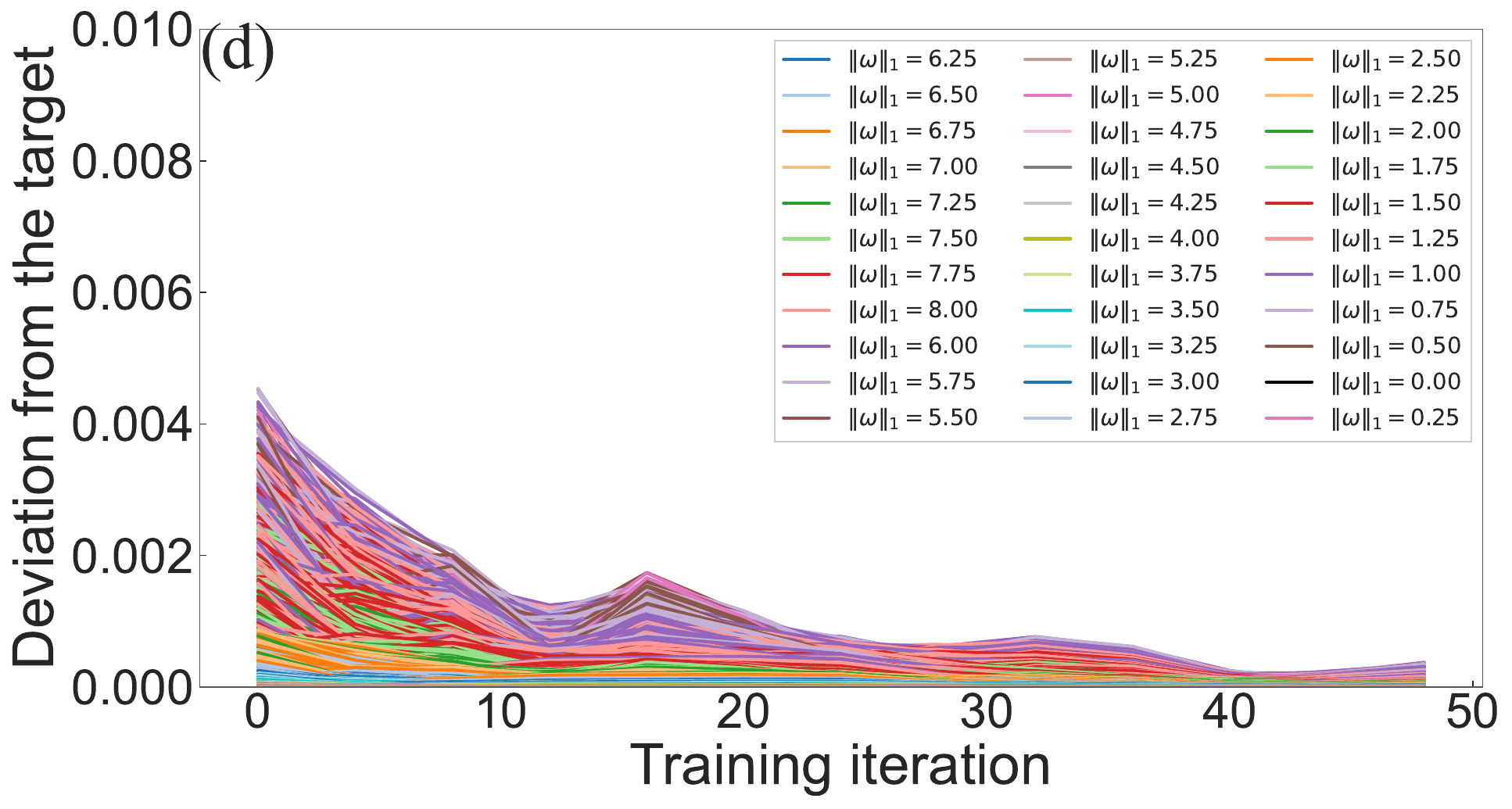}\:}
    \caption{This figure continues from \Cref{fig: non int freqs plots} (a) The MSE, (b) stochastic MSE per batch, and gradient norm squared trained with respect to a function with non-zero Fourier coefficients on non-integer frequencies. (c) and (d) are the deviation of the Fourier coefficients from non-zero and zero target values, respectively.}
    \label{fig: non int freqs plots continue}
\end{figure*}
\end{document}